\documentclass[twocolumn]{article}
\usepackage{geometry}
\geometry{
    a4paper,
    left=1.4cm,
    right=1.4cm,
    top=3.5cm,
    bottom=3.5cm
}
\setlength{\columnsep}{0.7cm}


\usepackage{titlesec}
\titleformat{\section}
  {\normalfont\fontsize{13}{13}\bfseries}{\thesection}{1em}{}
\titleformat{\subsection}
  {\normalfont\fontsize{11}{11}\bfseries}{\thesubsection}{1em}{}

\renewcommand\abstract{%
}

\usepackage[T1]{fontenc}

\usepackage{url}

\usepackage[inline]{enumitem}

\usepackage{amsmath}
\usepackage{mathtools}
\usepackage{amsthm}

\usepackage{bbold}

\usepackage{xcolor}

\usepackage{tikz}
\usetikzlibrary{mindmap,calc}
\usetikzlibrary{decorations.pathmorphing}
\usetikzlibrary{positioning}
\usepackage[capitalize,nameinlink]{cleveref}

\usepackage{algorithm}
\usepackage{algorithmic}

\usepackage{thmtools}
\usepackage{thm-restate}

\declaretheorem[name=Theorem]{theorem}

\usepackage{nicematrix}
\usepackage{kbordermatrix}

\makeatletter                                                                    
\@ifpackagelater{nicematrix}{2022/01/01}                                               
{                                                                                
 \newcommand{\matrixOfConstrainsFullExpr}{%
 $$
  \constrmatrix \coloneqq 
  \begin{bNiceMatrix}[first-col, first-row, nullify-dots, code-for-first-col =
  \scriptstyle, code-for-first-row = \scriptstyle]
   & \pcand & \dcand & \ecand_1 & \Cdots & \ecand_{3\setcovsize}  & \fcand_1 & \cand_1 & \Cdots & \fcand_{3\setcovsize} & \cand_{3\setcovsize}\\
   1               &    &    & & & & h & 1 &        &   &   \\
   2               &    &    & & & & 1 & h &        &   &   \\
   \Vdots          &    &    & & & &   &   & \Ddots &   &   \\
   6\setcovsize -1 &    &    & & & &   &   &        & h & 1 \\
   6\setcovsize    &    &    & & & &   &   &        & 1 & h \\
   6\setcovsize +1 & 3t & 3t & 1 & \Cdots & 1 &   &   &        &   &   \\
   6\setcovsize +2 & 3t & 3t & 1 & \Cdots & 1 &   &   &        &   &   \\
   6\setcovsize +3 & 3t & 3t & 1 & \Cdots & 1 &   &   &        &   &   \\
   6\setcovsize +4 &    &    & \Block{3-3}{B} & & &   &   &        &   &  \\
   \Vdots          &    &    & & & &   &   &        &   &   \\
   9\setcovsize +2 &    &    & & & &   &   &        &   & 
   \CodeAfter
    \tikz \draw [dashed, shorten > = 2pt, shorten < = 2pt] (1-|6) -- (last-|6);
    \tikz \draw [dashed, shorten > = 2pt, shorten < = 2pt] (1-|3) -- (last-|3);
    \tikz \draw [dashed,shorten > = 2pt, shorten < = 2pt] (9-|1) -- (9-|last) ;
  \end{bNiceMatrix}
  $$
 }                                                     
 \newcommand{\advantageMatrixFullExpr}{%
 $$
  \advmatrix \coloneqq
  \begin{small}
  \setlength{\arraycolsep}{3pt}
  \begin{bNiceMatrix}[first-col, first-row, nullify-dots, code-for-first-col =
  \scriptstyle, code-for-first-row = \scriptstyle]
   & \pcand & \dcand & \ecand_1 & \Cdots & \ecand_{3\setcovsize} & \fcand_1 & \cand_1 & \Cdots & \fcand_{3\setcovsize} & \cand_{3\setcovsize} \\
   1 & 4\setcovsize + 1 & 7\setcovsize -1 & & & & & & & & \\
   2 & 7\setcovsize - 1 & 4\setcovsize + 1 & & & & & & & & \\
   3 & & & 11\setcovsize & & & & & & & \\
   \Vdots & & & & \Ddots & & & & & & \\
   & & & & & & & & & & \\
   & & & & & & & & & & \\ 
   & & & & & & & &  &  & \\ 
   & & &  & & & & & & & \\
   & &  & & & & & & & & \\
   9\setcovsize +2 & &  & & & & & & & &11\setcovsize \\
  \end{bNiceMatrix}
  \end{small}
	$$
 }
 \newcommand{\helperMatrixDFullExpr}{%
 $$
  B \coloneqq
  \begin{small}
  \setlength{\arraycolsep}{3pt}
  \begin{bNiceMatrix}[first-row, nullify-dots, code-for-first-row =
  \scriptstyle, code-for-first-col = \scriptstyle, first-col]
   & e_1 & e_2 & \Cdots & e_i & \Cdots & e_{3\setcovsize -1} & e_{3\setcovsize}\\
   6\setcovsize+4 & 9\setcovsize - 3 & 3 & & & & & \\
   6\setcovsize+5 & & 9\setcovsize -6 & & & & & \\
   \Vdots & & & \Ddots & & & & \\
   6\setcovsize+2+i & & & & 3i-3& & & \\
   6\setcovsize+3+i & & & & 9\setcovsize - 3i & & & \\
   \Vdots & & & & & \Ddots & & \\
   9\setcovsize+1 & & & & & & 9\setcovsize - 6 & \\
   9\setcovsize+2 & & & & & & 3 & 9\setcovsize - 3
  \end{bNiceMatrix}
  \end{small}
	$$
 }
}
{                                                                                
 \newcommand{\matrixOfConstrainsFullExpr}{%
	 \par\noindent%
   \bgroup
	 \centering
	 \includegraphics[width=0.9\linewidth]{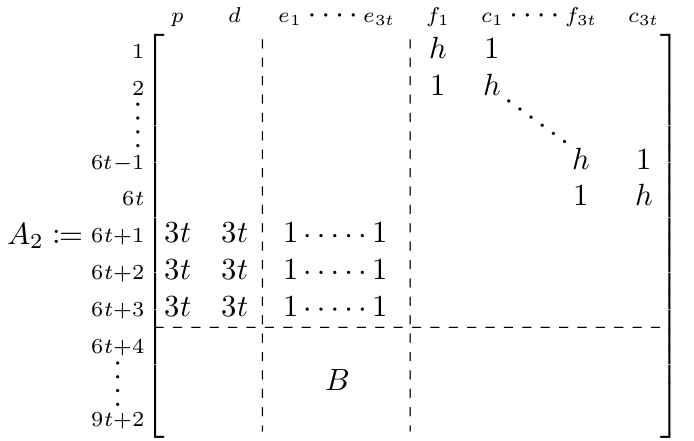}
	
	 \egroup
 }
 \newcommand{\helperMatrixDFullExpr}{%
	 \par\noindent%
   \bgroup
	 \centering
	 \includegraphics[width=0.9\linewidth]{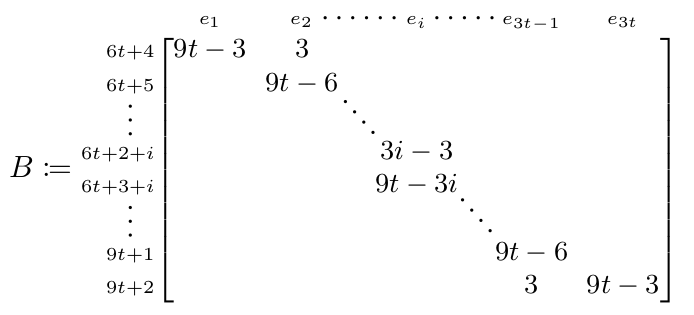}
	
	 \egroup
 }
 \newcommand{\advantageMatrixFullExpr}{
	 \par\noindent%
   \bgroup
	 \centering
	 \includegraphics[width=0.9\linewidth]{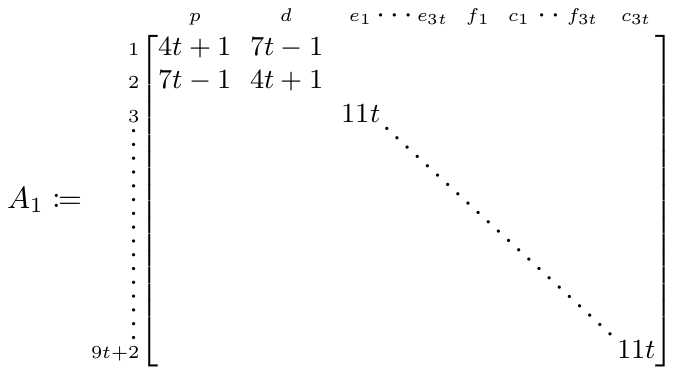}
	
	 \egroup
 }
}                                                                                
\makeatother                                                                     


                      
\newcommand{\holantProb}[1]{\textsc{{#1}-Holant}}

\DeclareMathOperator{\Hol}{Holant}

\newtheorem{proposition}[theorem]{Proposition}
\newtheorem{definition}{Definition}
\newtheorem{example}{Example}
\newtheorem{claim}{Claim}
\usepackage{graphicx}
\usepackage{natbib}
\usepackage{booktabs}
\usepackage{subcaption}
\usepackage{pgfplots}
 \usepackage[overload]{empheq}

\usepackage{mathrsfs}

\newcommand{\np}{{{\mathrm{NP}}}}

\newcommand{\sharpp}{\ensuremath{{\mathrm{\#P}}}}

\newcommand{\vtrs}{\ensuremath{\mathcal{V}}}
\newcommand{\vtrscnt}{\ensuremath{n}}
\newcommand{\voter}{\ensuremath{v}}
\newcommand{\cnds}{\ensuremath{\mathcal{C}}}
\newcommand{\cndscnt}{\ensuremath{m}}
\newcommand{\cand}{\ensuremath{c}}
\newcommand{\elct}{\ensuremath{\mathcal{E}}}
\newcommand{\flct}{\ensuremath{\mathcal{F}}}
\newcommand{\pos}{{\mathrm{pos}}}
\DeclareMathOperator{\swap}{swap}
\newcommand{\pref}{\succ}
\newcommand{\emd}{\mathrm{emd}}
\newcommand{\calD}{\mathcal{D}}
\newcommand{\calU}{\mathcal{U}}
\newcommand{\calS}{\mathcal{S}}

\newcommand{\pmatrofall}[1]{\ensuremath{P}({#1})}

\newcommand{\realizationsProb}{\textsc{\#Realizations}}
\newcommand{\lexrealizationsProb}{\textsc{\#LexRealizations}}
\newcommand{\calT}{{{\mathcal{T}}}}

%

\newcommand{\omitappendix}[2]{#1}   


\title{Properties of Position Matrices and Their Elections}
\author {
    Niclas Boehmer,\textsuperscript{\rm 1}
    Jin-Yi Cai,\textsuperscript{\rm 2}
    Piotr Faliszewski,\textsuperscript{\rm 3}
    Austen Z. Fan,\textsuperscript{\rm 2}
    Łukasz Janeczko,\textsuperscript{\rm 3}\\
    Andrzej Kaczmarczyk,\textsuperscript{\rm 3}
    Tomasz Wąs\textsuperscript{\rm 3, \rm 4}\\
    \\
    \textsuperscript{\rm 1} Algorithmics and Computational Complexity, Technische Universität Berlin\\
    \textsuperscript{\rm 2} University of Wisconsin-Madison\\
    \textsuperscript{\rm 3} AGH University\\
    \textsuperscript{\rm 4} Pennsylvania State University\\
    {\small niclas.boehmer@tu-berlin.de, jyc@cs.wisc.edu, faliszew@agh.edu.pl, afan@cs.wisc.edu, ljaneczk@agh.edu.pl,}\\
    {\small andrzej.kaczmarczyk@agh.edu.pl, twas@psu.edu}
}
\date{March 3, 2023}


\begin{document}
\maketitle

\begin{abstract}
  We study the properties of elections that have a given position
  matrix (in such elections each candidate is ranked on each position
  by a number of voters specified in the matrix).  We show that counting elections that
  generate a given position matrix is
  $\sharpp$-complete. Consequently, sampling such elections uniformly
  at random seems challenging and we propose a simpler algorithm,
  without hard guarantees. Next, we consider the problem of testing if
  a given matrix can be implemented by an election with a certain
  structure (such as single-peakedness or
  group-separability). Finally, we consider the problem of checking if
  a given position matrix can be implemented by an election with a
  Condorcet winner.  We complement our theoretical findings with
  experiments.
\end{abstract}

\section{Introduction}

Studies of voting and elections are at the core of computational
social choice~\citep{bra-con-end-lan-pro:b:comsoc-handbook}.
An (ordinal) election is represented by a set of candidates and a
collection of voters
who rank the candidates from the most to the least appealing one. Such
preferences are sometimes shown in an aggregate form as a
\emph{position matrix}, which specifies for each candidate the number
of voters that rank him or her on each possible position. 
Motivated by the connection of position matrices  to the so-called maps
of elections, and their similarity to weighted majority relations,
 we study the properties of elections with a given position
matrix.
%


The idea of a map of elections, introduced by
\citet{szu-fal-sko-sli-tal:c:map} and
\citet{boe-bre-fal-nie-szu:c:compass}, is to collect a set of
elections, compute the 
distances between them, and embed the elections as points in the plane,
so that the Euclidean distance between points resembles the distance between the respective elections.  Such maps are
useful because nearby elections seem to have similar properties (such
as, e.g., running times of winner determination algorithms, scores of
winning candidates, etc.; see, e.g., the works
of~\citet{szu-fal-sko-sli-tal:c:map},
\citet{boe-bre-fal-nie:c:counting-bribery}, and
\citet{boe-sch:t:datasets}).
However, there is a catch. The positionwise distance, which is
commonly used in these maps, views elections with the same
position matrix as identical.  Hence there might exist
very different elections that, nonetheless, have
identical position matrices and in a map 
are placed on
top of each other. We want to evaluate to what extent this issue
constitutes a problem for maps of elections.

The second motivation for our studies is that
position matrices are natural counterparts of weighted majority
relations, which specify for each pair of candidates how many voters
prefer one to the other. While weighted majority relations provide
sufficient information to determine winners of many
Condorcet-consistent voting rules,\footnote{Rules that can be computed using
only the weighted majority relation are called C2
  by~\citet{fis:j:condorcet}; see also the overview
  of \citet{zwi:b:intro-voting}. A Condorcet winner is
  preferred to every other candidate by a majority of
  voters. Condorcet-consistent rules always select Condorcet winners
  when they exist.
  Some non-Condorcet-consistent rules are also C2
  (e.g., the Borda rule).}
  position matrices provide
the information needed by positional scoring rules (i.e., rules where
each voter gives each candidate a number of points that depends on
this candidate's position in his or her ranking).  Together with
Condorcet-consistent rules, positional scoring rules are among the
most widely studied single-winner voting rules.
While weighted majority relations are commonly studied and analyzed
(even as early as in the classic theorem of
\citet{mcg:j:election-graph}), position matrices have not been studied
as carefully.

Our contributions regard three main
issues. 
First, we ask how similar are elections that have the same position
matrix. To this end, we would like to sample elections with a given
position matrix uniformly at random. Unfortunately, doing so appears
to be challenging.
In particular, a natural sampling algorithm requires the ability to
count elections that generate a given position matrix, and we show
that doing so is $\sharpp$-complete. While, formally, there may exist
a different approach, perhaps providing only an approximately uniform
distribution, finding it is likely to require significant effort
(indeed, researchers have been trying to solve related sampling problems for
quite a while, without final success as of now; see, e.g., the works
of~\citet{jac-mat:j:generating-latin} and
\citet{hon-mik:t:sampling-edge-colorings}).
%
%
%
%
%
We design a simpler sampling algorithm,
without hard guarantees on the distribution, and use it to evaluate how
different two elections with a given position matrix can be. The algorithm,
albeit not central to our study, might be of
independent interest
when considering sampling various preference
distributions~\citep{reg-gro-mar-tse:behav-social-choice,tid-pla:b:modeling-elections,all-gol-jus-mat:uni-ran-gen-cp-nets}.



Second, we consider structural properties of elections that generate a
given position matrix (or its normalized variant, called a frequency
matrix). Specifically, given a 
matrix we ask if there is an election that generates it and whose
votes come from a given domain (such as the single-peaked
domain~\citep{bla:b:polsci:committees-elections}, some group-separable
domains~\citep{ina:j:group-separable,ina:j:simple-majority}, or a
domain given explicitly vote-by-vote as part of the input).  We show
polynomial-time algorithms that, given a frequency matrix and a
description of
a domain (e.g., via a single-peaked axis or by listing the votes
explicitly), decides if there is an election with votes from this domain
that generates this matrix.  We apply these algorithms to test which
frequency matrices from the map of elections can be generated from
elections with a particular structure.\footnote{We form a map that is
  analogous to that used by \citet{boe-bre-fal-nie-szu:c:compass}, but
  which uses 8 candidates rather than 10 (using fewer candidates helps
  significantly with our computation times).}

Finally, we consider the problem of deciding for a given position
matrix if there is an election that implements the matrix and has a
Condorcet winner (i.e., a candidate who is preferred to every other
one by a strict majority of voters). We evaluate experimentally which
matrices from our map have such elections, provide a necessary
condition for such elections to exist, and check how often this
condition is effective on the map of elections.  Additionally, for
each matrix from the map we compute for how many different candidates
there is an election that generates this matrix and where this candidate
is a Condorcet winner. 


With our theoretical and empirical analysis, we ultimately want to
answer the question how much information is contained in a position
matrix and how much flexibility is still left when implementing
it.%
\footnote{
The code for the experiments is available at:
\url{https://github.com/Project-PRAGMA/Position-Matrices-AAAI-2023}.}

\section{Preliminaries}\label{sec:prelim}
For each $k \in \mathbb{N}_+$, by $[k]$ we denote the set
$\{1,\dots,k\}$.  Given a matrix $X$, by~$X_{i,j}$ we mean
its entry in row~$i$ and column~$j$.  For two equal-sized sets $X$ and $Y$,
by $\Pi(X,Y)$ we mean the set of one-to-one mappings from $X$ to
$Y$. $S_n$ is a shorthand for $\Pi([n],[n])$, i.e., the
 set of permutations of $[n]$.

An \emph{election}~$\elct{}$ is a pair $(\cnds, \vtrs)$ consisting of
a set~$\cnds{}=\{\cand_1, \cand_2, \ldots, \cand_\cndscnt\}$ of
\emph{candidates} and a
collection~$\vtrs{} = (\voter_1, \voter_2, \ldots, \voter_\vtrscnt)$
of \emph{votes}, i.e., complete, strict orders over the
candidates. These orders rank the candidates from the most to the
least appealing one according to a given voter (we use the terms
``vote'' and ``voter'' interchangeably).
If some
voter~$\voter$ prefers candidate~$\cand$ over candidate $\cand'$,
then we write $\cand \succ_{\voter} \cand'$; we omit the subscript when it is
clear from context.  Given a vote
$v_i \colon c_1 \succ c_2 \succ \cdots \succ c_m$, we say that $v_i$
ranks $c_1$ on the first position, $c_2$ on the second one, and
so on.
For two votes $u$ and $v$ over the same candidate set, their swap
distance, $\swap(u,v)$, is the smallest number of swaps of adjacent
candidates necessary to transform $u$ into $v$.

In an election~$\elct{} = (\cnds, \vtrs)$, a
candidate~$\cand \in \cnds$ is a~\emph{Condorcet winner} of
the election if for every other candidate~$d$ more than half of the
voters prefer~$\cand$ to~$d$.

\subsection{Position and Frequency Matrices}

Let $\elct$ be some election and assume that the candidates are
ordered in some way
(e.g., lexicographically, with respect to their names).
The \emph{position matrix} of $\elct$ (with respect to this
order) is a non-negative, integral $\cndscnt \times \cndscnt$ matrix
$X$ such that for each $i,j \in [\cndscnt]$, $X_{i,j}$ is the number
of voters that rank the $j$-th candidate on the $i$-th position.  By
$\pmatrofall{\elct}$ we denote the set of all position matrices
of~$\elct$ for all possible orderings of candidates. Note that the
matrices in $\pmatrofall{\elct}$ only differ by the order of their
columns. 

%

For a position matrix $X \in \pmatrofall{\elct}$, where $\elct$ is an
election with $n$ voters, the corresponding \emph{frequency matrix} is
$Y := \frac{1}{n}\cdot X$. In other words, frequency matrices are
normalized variants of the position ones, where each value $Y_{i,j}$
gives the fraction of voters that rank the $j$-th candidate on the
$i$-th position.
Every frequency matrix is \emph{bistochastic}, i.e., the elements in
each row and in each column sum up to one.  Hence, 
we often refer to bistochastic matrices as frequency
matrices, and to integral square matrices with nonnegative entries,
where each row and each column sums up to the same value, as
position matrices.


We say that an election~$\elct$ \emph{realizes} (or \emph{generates}) a position matrix~$X$ (or, a frequency matrix $Y$)
if~$X \in \pmatrofall{\elct}$ (or, $n \cdot Y \in \pmatrofall{\elct}$,
where $n$ is the number of voters in $\elct$).
\citet{boe-bre-fal-nie-szu:c:compass} showed that every
position matrix $X$ 
is realizable by some
election 
(their result is a
reinterpretation of an older result of
\citet{mee-mye:j:birkhoff-neumann-etc}). 
\citet{DBLP:journals/tcs/YangG16} also showed that position matrices are always realizable as part of a proof that they can be used to solve a Borda manipulation problem.
Note
that 
two distinct elections may generate the same position matrix.


\begin{example}\label{ex:posmat}
  Consider an election $\elct$ with candidates $a$, $b$, $c$, and $d$
  and four votes shown below on the left. On the right we show a
  position matrix of this election (for the natural ordering of the
  candidates):

  {\centering \vspace{-0.25cm}
    \begin{minipage}[b]{0.3\columnwidth}
      \small
      \begin{align*}
        &v_1 \colon a \pref b \pref c \pref d, \\
        &v_2 \colon b \pref a \pref d \pref c, \\
        &v_3 \colon a \pref b \pref d \pref c, \\
        &v_4 \colon b \pref a \pref c \pref d. 
      \end{align*}
    \end{minipage}\quad\quad
    \begin{minipage}[b]{0.4\columnwidth}
      \small
      \begin{align*}
        \kbordermatrix{ & a & b & c & d  \\
        1 &                2 & 2 & 0 &  0\\
        2 &                2 & 2 & 0 &  0\\
        3 &                0 & 0 & 2 &  2\\
        4 &                0 & 0 & 2 &  2\\
   }
      \end{align*}
    \end{minipage}\\[1mm]
  }

  \noindent Note that this is also a position matrix for an election
  with two votes $a \pref b \pref c \pref d$ and two votes
  $b \pref a \pref d \pref c$.
\end{example}

\subsection{Structured Domains}
We are 
interested in elections where the votes have some
structure.  For example, the single-peaked domain
captures votes on the political left-to-right spectrum (and, more generally,
votes focused on a single issue, such as those regarding the
temperature in a room or the level of taxation).
\begin{definition}
  An election~$\elct{} = (\cnds, \vtrs)$ is \emph{single-peaked} if
  there is an order~$\triangleright$ (the \emph{societal axis}) over
  candidates~$\cnds$ such that for each vote~$\voter \in \vtrs$ and
  for each $\ell \leq |\cnds|$, the top~$\ell$~candidates according
  to~$\voter$ form an interval with respect to~$\triangleright$.
\end{definition}
Intuitively, in a single-peaked election each voter first selects their
favorite candidate and, then, extends his or her ranking step by
step with either the candidate directly to the left or directly to the
right (wrt.~$\triangleright$)  of those already ranked.

Group-separability captures settings where the candidates have some
features and the voters have hierarchical preferences over these
features. Let $\cnds$ be a set of candidates and consider a rooted,
ordered tree~$\calT$, where each leaf one-to-one corresponds to a
candidate. A \emph{frontier} of $\calT$ is a vote that we obtain by
reading the names of the candidates associated with the leaves
of~$\calT$ from left to right. A vote is \emph{compatible} with
$\calT$ if it can be obtained as its frontier by reversing for some
nodes in $\calT$ the order of their children.  Intuitively, we view the
internal nodes of $\calT$ as features and a candidate has the
features that appear on the path from it to the root.

\begin{definition} An election $\elct = (\cnds,\vtrs)$ is
  \emph{group-separable} if and only if there is a tree $\calT$ over
  candidate set $\cnds$ such that each vote from $\vtrs$ is compatible
  with $\calT$.
\end{definition}

We focus on 
balanced trees (i.e., complete binary trees)
and on caterpillar trees (i.e., binary trees where each non-leaf has
at least one leaf as a child).
If an election is group-separable 
for 
a
balanced tree, then we say that this elections is \emph{balanced
  group-separable}. Analogously, we speak of \emph{caterpillar
  group-separable elections}.

\begin{example}
  The election from Example~\ref{ex:posmat} is both single-peaked (for
  societal axis
  $c \mathrel\triangleright a \mathrel\triangleright b
  \mathrel\triangleright d$) and balanced group-separable (for a tree
   whose frontier is $a \pref b \pref c \pref d$).
\end{example}

Single-peaked elections were introduced by
\citet{bla:b:polsci:committees-elections}, and group-separable ones by
\citet{ina:j:group-separable,ina:j:simple-majority}. We mention that
Inada's original definition is different from the one that we
provided, but they are equivalent~\citep{kar:j:group-separable} and
the tree-based one is algorithmically much more convenient.  We point
readers interested in structured domains to the recent survey of
\citet{elk-lac-pet:t:restricted-domains-survey}.

\begin{figure}[t]
\centering
  \includegraphics[width=7cm]{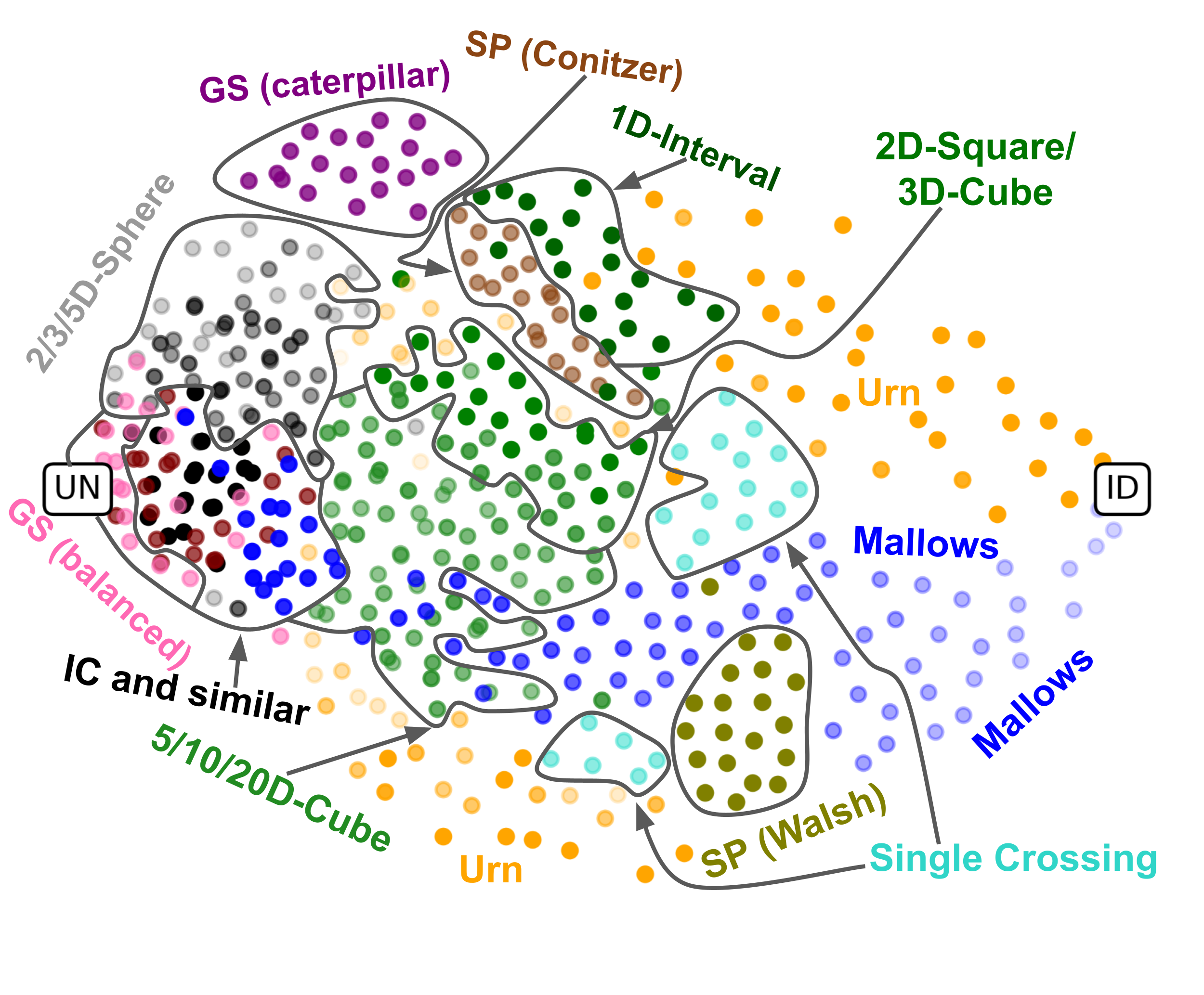}
  \caption{Map of elections visualizing the 8x80~dataset. Each dot
    represents an election and its color corresponds to the
    statistical model used to generate it. $x$D-Cube/Sphere models are
    Euclidean models where the points of the candidates and voters are
    chosen uniformly at random from an $x$-dimensional
    hypercube/hypersphere ($1$D-Interval is $1$D-Cube;
    $2$D-Square is $2$D-Cube). For Mallows and Urn elections the
    transparency of the coloring indicates the value of the used
    parameter. \omitappendix{For the other models, see \cref{app:map}.}{}}\label{fig:main-map}
\end{figure}

\subsection{Map of Elections}\label{sec:map-of-elections}

For our experiments, we use an \emph{8x80}~dataset that resembles those of
\citet{szu-fal-sko-sli-tal:c:map},
\citet{boe-bre-fal-nie-szu:c:compass}, and
\citet{boe-fal-nie-szu-was:c:metrics}. 
It
contains $480$~elections 
with $8$~candidates and $80$~votes
generated using the same statistical models, with the same parameters,
as the map of \citet{boe-fal-nie-szu-was:c:metrics}. In particular, we used
\begin{enumerate*}[label=(\roman*)]
  \item impartial culture (IC), where each vote is equally likely,
  \item the Mallows and urn  distributions, whose votes are more or less correlated,
depending on a parameter,
\item various Euclidean models, where candidates and voters are points in
Euclidean spaces and the voters rank the candidates with respect to
their geometric distance, and
\item uniform distributions over balanced
group-separable, caterpillar group-separable, and single-peaked
elections (we refer to the uniform distribution of single-peaked
elections as the Walsh model; we also use the model of
\citet{con:j:eliciting-singlepeaked} to generate single-peaked
elections).
\end{enumerate*}
%
%
\omitappendix{See \cref{app:map} for exact descriptions.}{See the full paper for exact descriptions.}
We repeated all our experiments from Sections~\ref{sec:structure} and~\ref{sec:condorcet} on analogously composed datasets with a varying number of candidates and voters.
Specifically, we considered elections with either $4$ or $8$
candidates and either $40$, $80$, or $160$ voters. The results on those datasets were similar to those for the 8x80 one.

We present our dataset as a map
of elections, i.e., as points on a plane, where each point corresponds
to an election (see \cref{fig:main-map}).
The Euclidean distances between the points resemble
positionwise distances between the respective elections. For a
definition of the positionwise distance, we point the reader to the
work of \citet{szu-fal-sko-sli-tal:c:map} or to \omitappendix{\cref{app:map}}{the full version}; an
important aspect of this distance is that for two elections~$\elct$
and~$\flct$ (with the same numbers of candidates and~voters) it
depends only on $\pmatrofall{\elct}$ and~$\pmatrofall{\flct}$. Hence,
we will also sometimes speak of the distance between position
matrices.

Our maps include two special position matrices, the uniformity one
(UN), which corresponds to elections where each candidate is ranked on
each position equally often, and the identity one (ID), which
corresponds to elections where all votes are identical.  ID models
``perfect order,'' whereas UN models ``perfect chaos'' (but note that there
exist very structured elections whose position matrix is UN). UN
and ID, as well as two other special points, were introduced by
\citet{boe-bre-fal-nie-szu:c:compass}.  For each two elections, their
positionwise distance is at most as large as the distance between UN and
ID~\citep{boe-fal-nie-szu-was:c:metrics}.

\section{Counting and Sampling Elections}\label{sec:counting}


Given a position matrix, it would be useful to be able to sample
elections that realize it uniformly at random. 
Unfortunately, doing so
seems challenging. Indeed, one of the 
natural sampling 
algorithms \omitappendix{(presented in \cref{app:sampler})} requires, among others, the
ability to count elections that realize a given matrix, a task which we show
to be $\sharpp$-complete. While, formally, this 
does not
preclude the existence of a polynomial-time uniform sampler (and,
certainly, it does not preclude the existence of an approximately
uniform one), we believe that it suggests that finding such algorithms
would require deep 
insights; for closely related
problems such insights are still
elusive~\citep{jac-mat:j:generating-latin,hon-mik:t:sampling-edge-colorings}.

Formally, in the \realizationsProb{} problem we are given an
$m \times m$ position matrix $X$ (and a candidate~set~$\{c_1, \ldots, c_m\}$,
where, for each $i$, candidate $c_i$ corresponds to the $i$-th column of $X$)
and we ask for the number of elections that realize $X$. Two elections are
distinct
if their voter collections are distinct when viewed as multisets.




\begin{theorem}\label{thm:count-real-sharp}
  \realizationsProb{} is \#P-complete even if the realizing elections contain three votes.
\end{theorem}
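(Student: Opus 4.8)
\medskip
\noindent\emph{Proof plan.}
For membership in $\sharpp$ it suffices to treat the three-vote case (which is the hard one anyway): a realization is a multiset of three permutations of $[m]$, so it has polynomial size, and given such a multiset one checks in polynomial time whether the sum of the three corresponding permutation matrices equals $X$. Hence \realizationsProb{} restricted to three-vote instances lies in $\sharpp$, and the containment extends to any family of instances in which the common row/column sum of $X$ is polynomially bounded.

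The core of the proof is $\sharpp$-hardness. I would give a counting (Turing) reduction from a $\sharpp$-complete problem --- \textsc{\#3-Sat} is the most convenient for the gadget construction below, though \textsc{\#X3C} or the $0/1$ permanent work similarly --- producing in polynomial time a position matrix $X$ (with all row and column sums equal to $3$) and its candidate set such that the number of three-vote elections realizing $X$ equals the number of solutions of the source instance times a polynomial-time-computable constant (plus a polynomial-time-computable additive ``junk'' term, if any). The guiding fact is that a three-vote realization of $X$ is precisely an unordered decomposition of $X$ --- read as a $3$-regular bipartite multigraph between the $m$ positions and the $m$ candidates --- into three perfect matchings; the task is therefore to engineer $X$ so that these decompositions are in correspondence with the combinatorial objects to be counted.

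The construction is gadget-based: local blocks of fresh candidates and positions serve as \emph{choice gadgets} (a block that admits exactly two completions of a partial triple of matchings, encoding a truth value) and as \emph{consistency gadgets} (a block that can be completed if and only if the incident choice gadgets are set so that the corresponding clause is satisfied), with gadgets coupled through shared candidates whose three positions transmit the encoded bit. To keep the bookkeeping transparent I would also add a one-candidate \emph{distinctness gadget} whose column is forced to carry its three $1$'s on positions $1,2,3$; this makes the three votes pairwise distinct in every realization, so that the number of realizing elections (multisets of votes) is exactly one sixth of the number of ordered triples of permutations summing to $X$. One then checks that $X$ has size polynomial in the formula, establishes the claimed correspondence between realizations and satisfying assignments, and finishes with the elementary arithmetic.

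The main obstacle --- and the place where the gadgets must be designed with care --- is that a position matrix encodes only \emph{aggregate} data: for each candidate--position pair it records how many of the three votes use that pair, but nothing about \emph{which} vote does. Every constraint, including any attempt to pin down the behaviour of one particular vote, has to be expressed through these multiplicities, and one must prove that the gadgets compose without creating unintended decompositions --- or, where a few unavoidable ``parasitic'' decompositions appear, that their number is the same for every source solution and is computable in polynomial time and hence divisible out. Verifying this non-interference, and the case analysis that each gadget has exactly its intended completions, is where essentially all the work sits; the remaining pieces (the polynomial size bound, the counting correspondence, and the final constant) are routine.
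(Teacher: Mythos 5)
Your reduction target is exactly right: you correctly observe that a three-vote realization of a $0/1$ position matrix with row and column sums $3$ is the same thing as an unordered decomposition of the associated $3$-regular bipartite graph into three perfect matchings, i.e.\ a proper $3$-edge-coloring up to the $3!=6$ permutations of the colors. This is precisely the bridge the paper uses (and your membership argument and the factor-of-$6$ bookkeeping are fine; the distinctness gadget is not even needed, since edge-disjoint matchings in a simple graph already force the three votes to be pairwise distinct). However, there is a genuine gap: the entire $\sharpp$-hardness argument rests on the gadget constructions for the reduction from \textsc{\#3-Sat}, and you do not exhibit a single gadget. You describe ``choice gadgets'' and ``consistency gadgets'' only by the properties you would like them to have, and you yourself note that verifying their existence and non-interference ``is where essentially all the work sits.'' That work is not routine. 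Designing local blocks of a $3$-regular bipartite graph whose perfect-matching decompositions encode truth values and clause satisfaction, while transmitting a bit through the aggregate multiplicities of a shared candidate's three positions, is exactly the hard combinatorial core of the theorem, and there is no standard off-the-shelf construction for it: the known hardness proofs for counting edge colorings of regular (bipartite) graphs do not proceed by direct SAT gadgets but through Holant-dichotomy and interpolation machinery.

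This is also where your route diverges from the paper's. Rather than inventing SAT gadgets, the paper imports the $\sharpp$-completeness of a specific Holant problem on planar $4$-regular graphs due to Cai, Guo, and Williams, and builds a single small gadget (a $4$-cycle with four dangling edges) that simulates the Holant signature $\langle 2,1,0,1\rangle$ inside a $3$-regular bipartite graph; this yields $\sharpp$-hardness of counting proper $3$-edge-colorings of $3$-regular bipartite graphs, after which the step to \realizationsProb{} is immediate via the biadjacency matrix. If you want to complete your proof you would either have to supply and verify your SAT gadgets in full --- including the case analysis showing each admits exactly the intended completions and that parasitic decompositions cancel uniformly --- or, more realistically, replace that portion of the plan with a reduction from an already-known hard edge-coloring or Holant problem, as the paper does.
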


\subsection[Preparing for the Proof of Theorem 1]{Preparing for The Proof of~\cref{thm:count-real-sharp}}
We first provide the necessary background for our proof of
\cref{thm:count-real-sharp}. Given a graph $G$, directed or
undirected, a $t$-edge coloring is a function that associates each of its
edges with one of $t$ colors. Such a coloring is proper if for each
vertex the edges that touch it have different colors. A graph is
$r$-regular if each vertex touches exactly $r$ edges (for directed
graphs, both incoming and outgoing edges count).  The \#P-hardness of \realizationsProb{} follows by a reduction from the problem
of counting proper $3$-edge colorings of a given $3$-regular bipartite
(simple) graph. We refer to this problem as
\textsc{3-Reg.-Bipartite-3-Edge-Coloring}. 
We start by establishing that this  problem is $\sharpp$-hard.
To prove this, we will give a reduction from a specific Holant
problem, which we will call \textsc{Holant-Special}. In this problem
we are given a planar, $4$-regular, directed graph~$G$, where each
vertex has two incoming edges and two outgoing ones. Further, we have
an embedding of this graph on the plane, which has the following
property: As we consider the edges touching a given vertex in the
counter-clockwise order, every other edge is incoming and every other
one is outgoing.  Let $\mathscr{C}$ be the set of all $3$-edge-colorings of
$G$. Given a vertex $v$, its four touching edges $e_1, \ldots, e_4$ (listed in
the counter-clockwise order, starting from some arbitrary one)
and some coloring
$\sigma \in \mathscr{C}$, we denote by $\sigma(v)$ the vector
$(\sigma(e_1), \ldots, \sigma(e_4))$. We define a function $f$ so
that:
\begin{enumerate}
\item $f(\sigma(v)) = 0$ if $\sigma(v)$ includes three different
  colors,
\item $f(\sigma(v)) = 2$ if all colors in $\sigma(v)$ are identical,
\item $f(\sigma(v)) = 1$ if $\sigma(v)$ includes two different colors and
  there are two consecutive edges in the counter-clockwise order that have
  the same color,
\item $f(\sigma(v)) = 0$ otherwise (i.e., if $\sigma(v)$ includes two
  different colors and each two consecutive edges in the
  counter-clockwise order have different colors).
\end{enumerate}
%
The goal is to compute
$\sum_{\sigma \in \mathscr{C}}\prod_{v \in V(G)} f(\sigma(v))$.
\citet{cai-guo-wil:compl-count-colorings} have shown that doing so is
$\sharpp$-complete (their results are far more general than this; the
problem we consider is a variant of their
$\langle 2,1,0,1,0 \rangle$-\textsc{Holant} problem).  The left-hand side
of \cref{fig:holant-pictures} shows an example input for
\textsc{Holant-Special}.

\begin{theorem}\label{lem:bip-three-col}
  \textsc{\#$3$-Reg.-Bipartite-3-Edge-Colo\-ring} is $\sharpp$-hard.
\end{theorem}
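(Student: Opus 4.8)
The plan is to reduce from \textsc{Holant-Special} by constructing, for each instance graph $G$, a $3$-regular bipartite simple graph $G'$ such that the number of proper $3$-edge-colorings of $G'$ equals (up to an easily computed factor) the Holant sum $\sum_{\sigma}\prod_v f(\sigma(v))$ over $G$. The key idea is that the Holant function $f$ defined above is a \emph{symmetric signature} on four inputs whose values are $f_0 = 2$ (four equal colors), $f_2 = 1$ (a "two-and-two" or "three-and-one" pattern that has two equal consecutive edges), and $f = 0$ on the genuinely three-colored or "alternating two-colored" patterns. To realize $f$ combinatorially, I would design a small \emph{gadget} — a fixed bipartite graph fragment with four ``dangling'' half-edges — to be substituted for each vertex $v$ of $G$, such that when the gadget's dangling edges are assigned colors $(a,b,c,d)$, the number of proper $3$-edge-colorings of the gadget extending that boundary assignment is exactly $f(a,b,c,d)$. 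Gluing these gadgets along the edges of $G$ (respecting the planar incoming/outgoing alternation, which orders the four half-edges) then yields $G'$, and a standard product-over-gadgets argument gives $\#\text{colorings}(G') = \prod_v (\text{local count}) = $ the Holant value.

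\textbf{Carrying this out, the steps would be:} (i) pin down the gadget — a constant-size bipartite graph $H$ with four labeled pendant edges whose ``coloring extension count'' function equals $f$; this is the crux and I would search for it by hand, exploiting that $f$ is $0$ on most patterns so the gadget's internal structure is highly constrained. Because $G'$ must be \emph{simple} and \emph{$3$-regular bipartite}, the gadget must itself be $3$-regular at every internal vertex, bipartite, and must not create multi-edges when two gadgets are joined; I would handle the latter by subdividing or padding each connecting edge with a fixed ``color-preserving'' path (an even-length path transmits a color, an odd-length one may be used to permute, as needed). (ii) Verify bipartiteness and $3$-regularity of the assembled $G'$: every original edge of $G$ becomes an edge (or short path) between two gadget copies, and each gadget contributes degree-$3$ internally and presents degree-$1$ pendants, so the glued graph is $3$-regular; a consistent $2$-coloring of the gadget's vertices, chosen so pendants on the ``in'' side and ``out'' side land on opposite classes, makes $G'$ bipartite. (iii) Argue the counting identity: since a proper $3$-edge-coloring of $G'$ restricts to a color on each ``transmission path,'' hence to a vector $\sigma(v)\in\{1,2,3\}^4$ at each gadget, and the number of internal completions factorizes as $\prod_v f(\sigma(v))$, summing over all boundary assignments gives exactly the Holant sum. (iv) Conclude $\sharpp$-hardness of \textsc{\#$3$-Reg.-Bipartite-$3$-Edge-Coloring} from the $\sharpp$-completeness of \textsc{Holant-Special} quoted from \citet{cai-guo-wil:compl-count-colorings}.

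\textbf{The main obstacle} is step (i): exhibiting an explicit constant-size bipartite, internally $3$-regular gadget whose $3$-edge-coloring extension function is precisely the asymmetric-looking but actually rotation-symmetric function $f$ with values $(2,0,1,0)$ indexed by ``number of distinct colors / consecutive-equal pattern.'' In particular the value $2$ on the all-equal input forces the gadget to have exactly two internal colorings when all four pendants agree, which already rules out trivial gadgets; and the value $1$ on ``two consecutive equal'' together with $0$ on ``alternating two colors'' means the gadget must be sensitive to the cyclic order of its pendants, so the planar counter-clockwise embedding data of $G$ is genuinely used. Getting simultaneously $3$-regularity, simplicity after gluing, and bipartiteness while hitting these exact counts is the delicate engineering; once a valid gadget is in hand, the remaining steps are routine bookkeeping. (If a single gadget proves too rigid, a fallback is to compose $f$ out of two or three smaller gadgets linked by transmission paths, realizing $f$ as a contraction of simpler signatures — the classical ``realize a signature via a gadget'' technique from Holant theory.)
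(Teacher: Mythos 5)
Your high-level plan coincides with the paper's: substitute a constant-size bipartite gadget with four dangling edges for each vertex of the \textsc{Holant-Special} instance, so that the number of internal proper $3$-edge-colorings extending a boundary assignment equals $f$, and glue the gadgets along the edges of $G$. However, you explicitly defer the one step on which the entire proof rests --- exhibiting the gadget --- and label it the ``main obstacle'' to be searched for by hand. Without it there is no proof. The resolution is much simpler than your discussion suggests: the paper's gadget is just a $4$-cycle with one pendant (dangling) edge attached to each of its four vertices, the pendants labelled $B, A, D, C$ in cyclic order. Every internal vertex then has degree $3$, the $4$-cycle is bipartite with the $A$- and $C$-pendants on one side and the $B$- and $D$-pendants on the other, and a short case check gives exactly the counts you specified: $2$ completions when all four pendants agree (the two ``vertical'' cycle edges share one of the two remaining colors, the two ``horizontal'' ones get the last), $1$ completion for each of the two ``consecutive-equal'' two-color patterns, and $0$ for the alternating pattern and for three distinct colors. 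Gluing is done by always merging an $A$/$C$ pendant of the tail gadget with a $B$/$D$ pendant of the head gadget (possible since each vertex has two in- and two out-edges), which preserves bipartiteness and $3$-regularity; the resulting count equals the Holant value exactly, with no correction factor.

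Two of your auxiliary ideas would also need repair if you pursued them. First, the ``color-preserving transmission path'' fallback is unsound for \emph{edge} colorings: an internal degree-$2$ vertex on a path only forces its two incident edges to receive \emph{different} colors, so a padded path neither transmits a color nor preserves $3$-regularity (subdivision creates degree-$2$ vertices). Second, the worry about multi-edges after gluing is moot in the direct construction: each pendant hangs off a distinct gadget vertex, so merging pendants pairwise never identifies two edges with the same endpoints. Once the $4$-cycle gadget is written down, the rest of your outline (bipartiteness, $3$-regularity, the product-over-gadgets identity, and quoting the $\sharpp$-hardness of \textsc{Holant-Special}) goes through as you describe and matches the paper.
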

\begin{proof}
  We give a reduction from \textsc{Holant-Special} to
  \textsc{\#$3$-Reg.-Bipartite-3-Edge-Colo\-ring}. The construction is
  inspired by those used by
  \citet{cai-guo-wil:compl-count-colorings}. Let $G=(V,E)$ be the input
  graph and let the notation be as in the discussion preceding the
  theorem statement.

  The high-level idea 
  is to modify graph $G$ by replacing each vertex $v\in V$
  with a gadget, while keeping ``copies'' of edges from $E$. 
  Then, the value of~$f(\sigma(v))$ for some edge-coloring~$\sigma$ of the edges from $E$
  in $G$ corresponds to the number of proper $3$-edge-colorings in the gadget
  for $v$ assuming the ``copies'' of $E$ in the
  constructed graph are colored according to~$\sigma$.  
  Specifically, we replace each vertex $v$ by the gadget depicted in the right-hand side
  of~\cref{fig:holant-pictures}.  Its four dangling edges implement
  the original four edges of $v$. However, we need 
  some care in deciding which of the dangling edges we
  connect to which vertices from the gadgets corresponding to the neighbors of $v$ in $G$  (we will return to this issue after we
  explain how the gadget works).

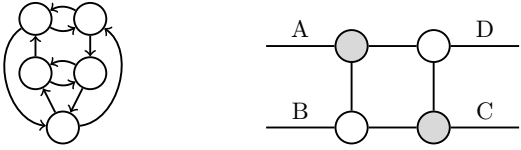
\begin{figure}
 \centering%
 \scalebox{0.9}{
 \begin{subfigure}[b]{0.2\textwidth}
  \begin{tikzpicture}
   \begin{scope}[scale = 0.8, every node/.style = {circle, thick, draw = black, text width = 5pt}]
    \node (A) at (0,0) {};
    \node (B) at (0,1) {};
    \node (C) at (1,1) {};
    \node (D) at (1,0) {};
    \node (E) at (.5,-1) {};
   \end{scope}
   \begin{scope}[every edge/.style = {draw = black, thick, bend right}]
    \draw[->] (A) edge[bend right = 0] (B); 
    \draw[->] (A) edge (D); 
    \draw[->] (B) edge (C); 
    \draw[->] (B) edge[bend right = 70] (E); 
    \draw[->] (C) edge[bend right = 0] (D); 
    \draw[->] (C) edge (B); 
    \draw[->] (D) edge[bend right = 0] (E); 
    \draw[->] (D) edge (A); 
    \draw[->] (E) edge[bend right = 70] (C); 
    \draw[->] (E) edge[bend right = 0] (A); 
   \end{scope}
  \end{tikzpicture}
 \end{subfigure}%
 \hspace{0.3cm}
 \begin{subfigure}[b]{0.2\textwidth}
  \begin{tikzpicture}
   \begin{scope}[scale = 1.2, every node/.style = {circle, thick, draw = black, text width = 5pt}]
    \node (A) at (0,0) {};
    \node[fill = black!15] (B) at (0,1) {};
    \node (C) at (1,1) {};
    \node[fill = black!15] (D) at (1,0) {};
    \coordinate[left = of A] (dA);
    \coordinate[left = of B] (dB);
    \coordinate[right = of C] (dC);
    \coordinate[right = of D] (dD);
   \end{scope}
   \begin{scope}[every edge/.style = {draw = black, thick}]
    \draw (A) edge (B); 
    \draw (A) edge (D); 
    \draw (B) edge (C); 
    \draw (C) edge (D); 
    \draw (dA) edge node[midway, above] {B}(A); 
    \draw (dB) edge node[midway, above] {A}(B); 
    \draw (dC) edge node[midway, above] {D}(C); 
    \draw (dD) edge node[midway, above] {C}(D); 
   \end{scope}
  \end{tikzpicture}
 \end{subfigure}
 }
 \caption{An example input graph (left) and the gadget used in the proof
 of~\cref{lem:bip-three-col}. The letters label the gadget's dangling edges. The
 colors  illustrate its
 bipartiteness.
 \label{fig:holant-pictures}}
\end{figure}

  For each of our gadgets, we name the dangling edges $A$, $B$, $C$,
  and $D$, as shown in~\cref{fig:holant-pictures}.
  It is now easy to
  see that if all dangling edges are of the same color, say $1$, then
  there are two colorings of the remaining edges of the gadget resulting in a proper coloring:
  Both ``vertical'' edges need to have the same color (either $2$ or
  $3$), and both ``horizontal'' edges need to have the same color (the
  single remaining one).
  Similarly, if edges $A$ and $B$ have the same color, and edges $C$
  and $D$ have the same color, then there is a unique proper coloring of
  the other edges. By symmetry, the same holds if both edges $A$ and $D$
  and edges $B$ and $C$ have the same color.
  Finally, if edges $A$ and $C$ have the same color, and edges $B$ and
  $D$ have the same color (or, the dangling edges have three different
  colors) then there are no proper colorings of the remaining edges in
  the gadget.
  This way, for each vertex $v$ and coloring~$\sigma$, $v$'s gadget implements
  the $f(\sigma(v))$ function.

  Next we describe how we connect the dangling edges of the gadgets.
  If $u$ and $v$ are two vertices of $G$ and there is a directed edge
  from $u$ to $v$, then we merge one of the $A$ and~$C$ dangling edges
  of $u$'s gadget with one of the $B$ and~$D$ dangling edges of $v$'s
  gadget (which dangling edges we use is irrelevant for this proof).
  Since each vertex in $G$ has two incoming and two outgoing edges,
  doing so is possible.
  
  As the gadgets are bipartite themselves, and due to the way in
  which we connect their edges, the resulting graph~$G'$ is
  bipartite. It is also clear that it is $3$-regular. Finally, due to
  the way in which $3$-edge-colorings of $G$ can be extended to proper
  $3$-edge-colorings of $G'$ (see the description of the gadgets), we
  see that the number of the latter is equal to the output of
  the \textsc{Holant-Special} for $G$.  The reduction runs in
  polynomial-time and the proof is complete.
\end{proof}

The above result also applies to $3$-regular planar bipartite
graphs. To see this, it suffices to appropriately arrange our gadgets
in space (sometimes rotating them) and choose the dangling edges to
connect more carefully.

\subsection[The Proof of Theroem 1]{The Proof of \cref{thm:count-real-sharp}}

The answer to an instance of~\realizationsProb{} is the number of
accepting paths of a non-deterministic Turing machine that constructs an
election and then checks if it realizes the input 
matrix. As this
machine works in (non-deterministic) polynomial time, \realizationsProb{} is
in~$\sharpp$. 

To show \sharpp{}-hardness,
we give a reduction from \textsc{\#$3$-Reg.-Bipartite-3-Edge-Coloring}
to~\realizationsProb{}.
Let $G = (U,V; E)$ be our input 3-regular bipartite graph, where $U$
is the set of vertices on the left, $V$ is the set of vertices on the
right, and $E$ is a set of edges. Since $G$ is $3$-regular, we have
$|U| = |V|$. W.l.o.g., we let $U = \{u_1, \ldots, u_m\}$ and
$V = \{v_1, \ldots, v_m\}$.
We form
an $m \times m$ matrix~$X$, where each entry $X_{i,j}$ is either $1$,
if there is an edge between $v_i$ and $u_j$, or $0$, if there is no
such edge.
%
As $G$ is $3$-regular, $X$ has exactly three ones in each row and in
each column, so it is a position matrix and each election that
realizes it contains three votes.

We now show that each proper $3$-edge-coloring of~$G$ gives an election
realizing matrix~$X$. For a given coloring, the edges of the same
color form a perfect matching in~$G$. We interpret such a matching as
a single vote. Specifically, we treat vertices from~$U$ as candidates
and vertices from~$V$ as positions in the vote being constructed
(e.g., if the matching contains an edge between $v_i$ and $u_j$, then
the vote ranks candidate $u_j$ on position $i$). Hence, for each
$3$-coloring we get an election consisting of three votes, one for
each matching associated with one color.  Since all edges must be part
of some matching and each edge corresponds to a single $1$-entry
in~$X$, the resulting election realizes~$X$.

Each election realizing matrix~$X$ corresponds to
six  
$3$-edge-colorings of~$G$. Indeed, taking one~$3$-edge-coloring,
each of the six~permutations of the colors gives raise to the same
election. This holds, because for a single $3$-edge-coloring, each color
forms an edge-disjoint matching (as opposed to graphs with parallel
edges, where this would not be true). So our reduction
preserves the number of solutions with a multiplicative 
factor of~$6$. This completes the proof.

\subsection{Experiments}\label{sec:distance-exp}

We checked experimentally  how diverse are elections
that generate the same position matrix. To do so, we used
the isomorphic swap distance, due to
\citet{fal-sko-sli-szu-tal:c:isomorphism}.
\begin{definition}
  Let $\elct = (\cnds,\vtrs)$ and $\flct = (\mathcal{D}, \mathcal{U})$
  be two elections, where $\cnds = \{c_1, \ldots, c_m\}$,
  $\mathcal{D} = \{d_1, \ldots, d_m\}$, $\vtrs = (v_1, \ldots, v_n)$,
  and $\mathcal{U} = (u_1, \ldots, u_n)$. Their isomorphic swap
  distance is:
  \[
    d_{\swap}(\elct, \flct) = \min_{\sigma \in S_n}\min_{\pi \in \Pi(\cnds, \mathcal{D})}
    \textstyle\sum_{i=1}^n \swap( \pi(v_i), u_{\sigma(i)} ),
  \]
  where $\pi(v_i)$ is the vote $v_i$ where each candidate
  $c \in \cnds$ is replaced with candidate $\pi(c)$.
\end{definition}
Intuitively, the isomorphic swap distance between two elections is the
summed swap distance of their votes, provided we first rename the
candidates and reorder the votes to minimize this value. Maps of
elections could be generated using the isomorphic swap distance
instead of the positionwise one, and they would be more accurate than
those based on the positionwise
distance~\citep{boe-fal-nie-szu-was:c:metrics}, but the isomorphic swap
distance is $\np$-hard to compute and challenging to compute in
practice~\citep{fal-sko-sli-szu-tal:c:isomorphism}; indeed, we use a
brute-force implementation.

\citet{boe-fal-nie-szu-was:c:metrics} have shown that the largest
isomorphic swap distance between two elections with $m$ candidates and
$n$ voters is $\frac{1}{4}n(m^2-m)$ (up to minor rounding errors; for
this result, see their technical report).  Whenever we give an
isomorphic swap distance between two elections (with the same numbers
of candidates and voters), we report it as a fraction of this value.



As we do not have a fast procedure for sampling
(approximately) uniformly at random elections that realize a given
matrix, we use the following naive approach (let $X$ be an $m \times m$
position matrix):
\begin{enumerate}
\item We form an election $\elct = (\cnds, \!\vtrs)$, where
  $\cnds = (\cand_1, \ldots, \cand_m)$ and $\vtrs$ is initially empty.
  For each $i \in [m]$, candidate $\cand_i$
  corresponds to the $i$-th column of the matrix.

\item We repeat the following until $X$ consists of zeros only: We
  form a bipartite graph with vertices $\cand_1, \ldots, \cand_m$ on
  the left and vertices $1, \ldots, m$ on the right; there is an edge
  between $c_j$ and $i$ exactly if $X_{i,j} > 0$. We draw uniformly at
  random a perfect matching in this graph (it always
  exists;~\cite{mee-mye:j:birkhoff-neumann-etc})---we generate it relying on
   the standard self-reducibility of computing perfect matchings and using
  the classic reduction to computing the
  permanent~\citep{val:j:permanent}, which we compute using the formula of \citet{ryser:comb-mathematics}.\footnote{We used a python module called \textit{permanent}
  (\url{https://git.peteshadbolt.co.uk/pete/permanent}) by~Pete
  Shadbolt. In principle, we could have used an approximately uniform
    sampler that runs in polynomial
    time~\citep{jer-sin-vig:j:sampling-matchings,bez-ste-vaz-vig:j:accelerating-annealing},
    but they are too slow in practice.} Given such a matching, we
  form a vote $v$ where each candidate $\cand_j \in \cnds$ is ranked
  on the position to which he or she is matched. We extend $\elct$
  with vote $v$ and we subtract from $X$ the position matrix of the
  election that contains $v$ as the only vote.
\end{enumerate}
In essence, the above procedure is a randomized variant of the
algorithm presented by \citet{boe-bre-fal-nie-szu:c:compass} to show
that every position matrix is realized by some election.

\begin{figure}[t]
\centering
  \includegraphics[width=6cm]{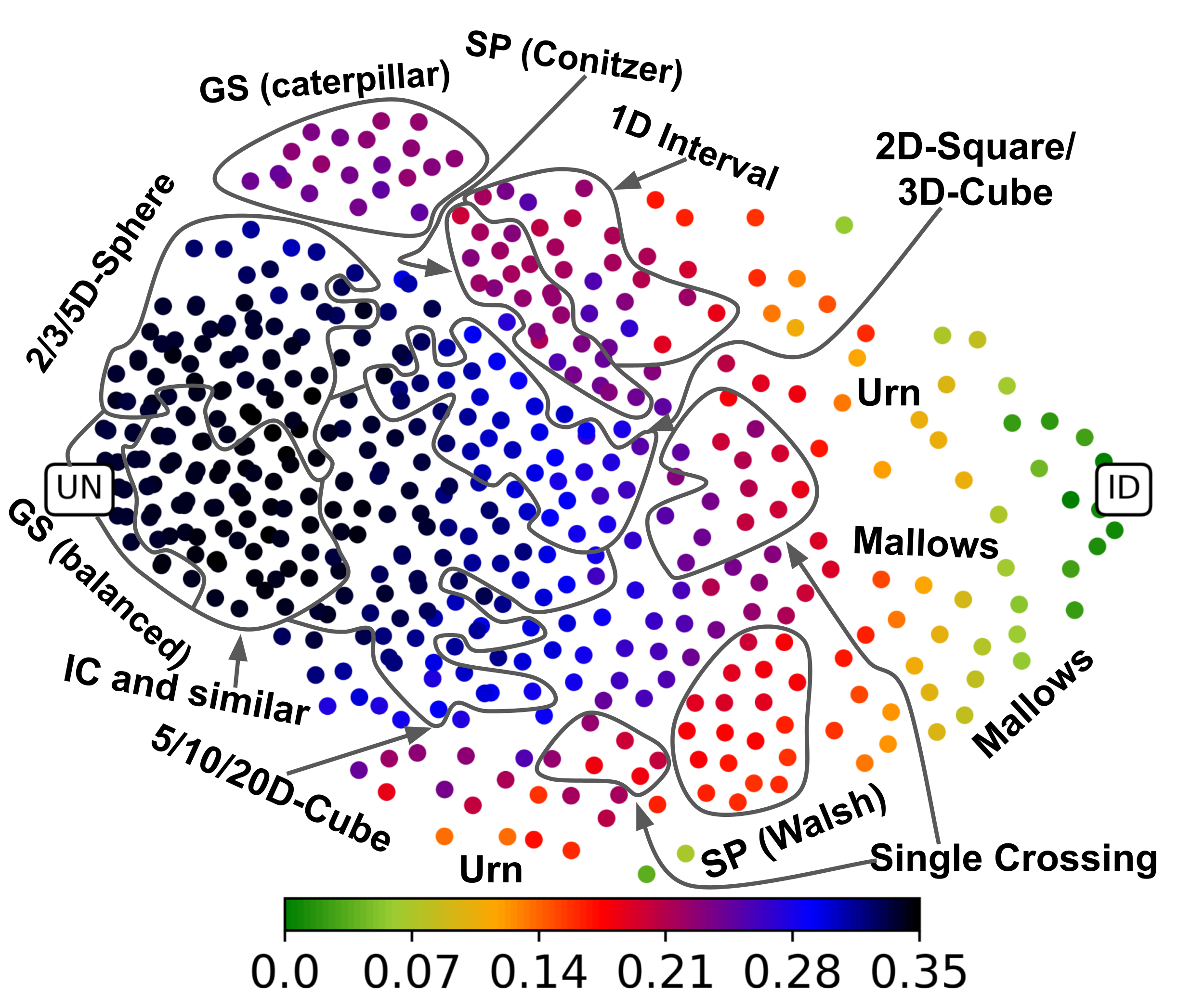}
  \caption{The maximum isomorphic swap distance found for elections realizing a
  given matrix in our 8x80 dataset.}
  \label{fig:maxswap-map}
\end{figure}

We performed the following experiment:
(i)~For each
election from the 8x80~dataset we computed its position matrix, 
(ii)~using
the naive sampler, we generated $100$ pairs of elections that realize it,
and, 
(iii)~for each pair of elections, we computed their isomorphic swap distance. We report the results in Figure~\ref{fig:maxswap-map}, where each dot has
a color that corresponds to the farthest distance computed for the
respective matrix.\footnote{We tried 100 pairs for two reasons. First, each computation is quite expensive. Second, even with testing $10$ pairs the results were very
  similar to those for $100$ pairs
  (if we reported average distances, the results also would not
  change very much).}
For elections close to UN, this distance can be very large.  Indeed, for
about half of the elections (all located close to UN) this
distance is larger than $20\%$ of the maximum possible isomorphic swap
distance. On the other hand, elections realizing position matrices in
the vicinity of ID are much more similar to each other, which is
quite natural.


While we used a naive sampling algorithm rather than a uniform one,
the results are
sufficient to claim that for many position matrices---in particular, those closer to 
UN than to ID---there are two
elections that generate them, whose isomorphic swap distance is very large.
If we had a uniform sampler, the
distances could possibly increase, but the overall conclusion
would not change.
Indeed, we ran our experiment for an analogous
dataset, but for elections with 4 candidates and 16 voters; in this
case we computed maximum possible isomorphic swap distances by generating all elections
realizing a given matrix. 
\omitappendix{The results, presented in~\cref{sub:416exmperiments}, are analogous.}{The results are analogous.}
(For this experiment
we also counted how many elections realize a given matrix and the results
were strongly correlated with the above described results for the maximum distance.)


\section{Recognizing Structure}
\label{sec:structure}

In this section we consider the problem of deciding if a given (arbitrary)
position or frequency matrix can be realized by elections whose votes
come from some domain (e.g., the single-peaked or group-separable one).
Overall, we find that if a precise description of the domain is part
of the input (e.g., if we are given the societal axis for the
single-peaked domain), then for frequency matrices we can often solve
this problem in polynomial time. For position matrices our results are
less positive and less comprehensive.
The reason why frequency matrices are easier to work with here is that
they only specify fractions of votes where a given candidate appears
on a given position, whereas position matrices specify absolute
numbers of such votes and thus are less flexible.


Let us fix a candidate set $\cnds$. We consider sets~$\calD$ of votes,
called domains,
specified in one of the following ways:
\begin{enumerate}
\item \emph{explicit}:
    $\calD$ contains explicitly listed votes, or
\item \emph{single-peaked}:
    $\calD$ contains all votes that are single-peaked
    with respect to an explicitly given axis $\triangleright$, or
\item \emph{group-separable}:
    $\calD$ contains all group-separable votes
    that are compatible with a given rooted, ordered tree $\calT$,
    where each leaf is associated with a unique candidate.
    We only consider balanced or caterpillar trees.
\end{enumerate}
The next theorem is our main result of this section.


\begin{restatable}{theorem}{frecrec}
\label{thm:frecrec}
  There is a polynomial-time algorithm that given a frequency matrix
  $X$ and an explicit, single-peaked,
  or group-separable (balanced or caterpillar) domain $\calD$,
  decides
  if there is an election that realizes $X$,
  and whose votes all belong to $\calD$.
\end{restatable}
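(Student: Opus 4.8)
The plan is to reduce each case to a flow/matching feasibility problem on a bipartite structure whose sides are candidates and positions, and then argue that integrality of the flow polytope gives us an actual election. The key observation is that an election realizing $X$ with $n$ votes all from $\calD$ corresponds to a non-negative vector $(\lambda_v)_{v \in \calD}$ with $\sum_v \lambda_v = n$ and $\sum_{v : v \text{ ranks } c_j \text{ on position } i} \lambda_v = X_{i,j}$ for all $i,j$; since $X$ is a frequency matrix we may rescale so $n = 1$ and ask for a \emph{distribution} over $\calD$ with the prescribed marginals. This is a linear feasibility question, so it is solvable in polynomial time \emph{provided} we can represent the (possibly exponentially large) set $\calD$ compactly. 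For the explicit domain this is immediate: the LP has $|\calD|$ variables and $m^2+1$ constraints. For the structured domains the first step is therefore to replace ``a variable per vote'' by ``a variable per arc'' in a small network whose integral unit flows are exactly the votes of $\calD$, so that a fractional flow decomposes into a convex combination of votes.

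The main work is constructing these networks. For the single-peaked domain with axis $\triangleright = (a_1, \dots, a_m)$, a vote is determined by the nested sequence of intervals it builds, equivalently by a lattice path: at step $\ell$ the top-$\ell$ set is an interval $[a_p, a_q]$ with $q - p + 1 = \ell$, and moving to step $\ell+1$ either decrements $p$ or increments $q$. So I would build a DAG whose nodes are intervals $[a_p,a_q]$ (there are $O(m^2)$ of them), with an edge from $[a_p,a_q]$ to $[a_{p-1},a_q]$ labelled ``candidate $a_{p-1}$ is placed on position $m - (q-p+1)$'' and similarly for the other extension; source is the empty interval, sink the full one. A unit $s$–$t$ flow is a single single-peaked vote, and the position-matrix constraint on $X_{i,j}$ becomes a linear constraint on the sum of flow on all edges labelled ``$c_j$ on position $i$''. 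The flow-conservation polytope intersected with these equalities is still integral because the equality constraints are themselves consistent sums of flow variables grouped by label — more carefully, one argues that any fractional feasible point can be rounded by the standard flow-decomposition/cycle-cancelling argument while preserving the marginals, since the marginals only see aggregate flow through label classes and cancelling a fractional cycle redistributes flow within the network without changing which labels are used how much in expectation; the cleanest route is to solve the LP, take an extreme point, and show it is integral because the constraint matrix (conservation plus label-sums) is an interval/network matrix, hence totally unimodular, or directly decompose the fractional flow into at most polynomially many integral votes with rational weights and then clear denominators to get an integral election. For the group-separable domains the same idea applies: a vote compatible with the tree $\calT$ is obtained by a choice of ``flip or not'' at each internal node, and the position of a candidate is determined by these choices along root-to-leaf paths; for the caterpillar this is again literally a path/DAG of linear size, and for the balanced tree one can process the tree bottom-up with a polynomial-size DP whose states record, for each subtree, the multiset of (candidate, relative-position-offset) pairs, collapsing to a flow on a layered graph of size polynomial in $m$ (using that a balanced tree on $m$ leaves has $O(m)$ nodes and each level contributes a bounded-width layer).

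The hardest part will be the balanced group-separable case: unlike the caterpillar and single-peaked cases, the set of compatible votes is not naturally a set of source–sink paths, and a naive DP over subtrees risks an exponential state space because a subtree's contribution to the global position of its candidates depends on the sizes of \emph{all} subtrees to its left at every ancestor. The plan to tame this is to observe that in a balanced tree the relevant quantities — how many candidates sit to the left of a given leaf within each ancestor's block — take only $O(m)$ possible values, and the flips at distinct nodes interact only additively through these offsets; so one can set up a polynomial-size integer program whose variables count, for each internal node $w$ and each orientation of $w$, and each candidate–position pair, the contribution, and whose constraint matrix is again a network matrix. I would then invoke total unimodularity (or a direct combinatorial decomposition) to conclude that feasibility of the LP relaxation yields an integral election, and that LP feasibility is checkable in polynomial time.

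Finally, I would note the routine wrap-up common to all three cases: once a rational feasible distribution $(\lambda_v)$ with polynomially many nonzero entries and polynomial bit-complexity is found, multiplying through by the least common denominator of the $\lambda_v$ produces an integral election realizing $n\cdot X$ for some $n$; but since $X$ is a frequency (bistochastic) matrix the original data already fixes the normalization, and one checks the resulting election indeed realizes $X$ on the nose. If the LP is infeasible, no such election exists. All steps — building the $O(\mathrm{poly}(m))$-size network, solving the LP, extracting and clearing an extreme point — run in polynomial time, which proves the theorem.
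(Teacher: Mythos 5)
Your treatment of the explicit domain (an LP with one variable per listed vote and $m^2+1$ marginal constraints) is exactly the paper's \cref{prop:input-given-p}. Your flow formulation for the single-peaked and caterpillar domains is also essentially the paper's route: the paper builds a DAG $G(\calT)$ whose source-to-sink paths are precisely the caterpillar-compatible votes, writes flow-conservation constraints in variables $\ell_{i,j},r_{i,j}$ tied to the matrix entries, and decomposes a rational feasible flow into paths (votes) before clearing denominators; single-peakedness is then handled by transposing the matrix and invoking the caterpillar case, whereas you build the interval DAG directly. Both work, and you are right that total unimodularity is not actually needed here --- since $X$ is a frequency matrix, any rational feasible point suffices after path decomposition and denominator clearing (modulo a small indexing slip in your position label $m-(q-p+1)$, which should be the size of the new interval when building the vote top-down).

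The genuine gap is the balanced group-separable case, which you yourself flag as the hardest part but do not resolve. Your proposed formulation --- variables counting, per internal node and orientation, contributions to candidate--position pairs, with an appeal to the constraint matrix being a network matrix --- is a per-node marginal relaxation: it is not shown (and is not at all obvious) that a fractional point of such an LP decomposes into a convex combination of genuine compatible votes, because a vote's position vector is determined by the \emph{joint} choice of orientations at all ancestors of every leaf simultaneously, and these correlations are exactly what a marginal formulation forgets. The paper avoids flows here entirely: it proves that $X$ is realizable by a balanced group-separable election (for the given tree ordering) if and only if $X$ is \emph{maximally-evenly-quartered} --- the four quarters satisfy two cross-sum equalities and each quarter recursively satisfies the same condition --- which is checkable directly on the $O(m^2)$ submatrices arising from repeated quartering. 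Necessity follows by splitting voters according to which half of the candidate set they rank on top; sufficiency is an explicit recursive construction that pairs up votes realizing opposite quarters and concatenates them. That sufficiency argument is the substantive content your plan is missing, and without it (or a proof that your LP's feasible region is the convex hull of compatible votes) the balanced case remains open in your write-up.
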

%
For example, given a frequency matrix $X$ and a societal axis
$\triangleright$, we can check if there is an election that realizes
$X$ and is single-peaked with respect to $\triangleright$.  A similar
result for single-peakedness and a variant of weighted majority
relations is provided by \citet{spa-wen:c:net-single-peaked}.

\omitappendix{The proof of \cref{thm:frecrec} is quite involved and is relegated
to~\cref{sec:proof_freq_matrices}, but we mention two issues.}{The proof of \cref{thm:frecrec} is quite involved
and is available in the full version of the paper, but we mention two issues.}
First, some of
our algorithms proceed by solving appropriate linear programs and, in
principle, the elections that they discover might have exponentially
many votes with respect to the length of the encoding of the
input. This is not a problem as our algorithms do not build these
elections explicitly.
%
Second, while our algorithms need an explicit description of the domain,
such as the societal axis or the underlying tree, for the
balanced group-separable domain we can drop this assumption, and we can even
deal with position matrices:

\begin{restatable}{theorem}{balanced}
\label{thm:balanced}
  There is a polynomial-time algorithm that given a frequency (or position) matrix
  $X$ decides if the matrix can be
  realized by a balanced group-separable election.
\end{restatable}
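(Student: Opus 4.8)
The key observation is that a balanced group-separable election on $m = 2^k$ candidates has very little freedom: the underlying tree $\calT$ is a complete binary tree, which is unique up to the choice of which candidate sits at which leaf and up to reversing children at internal nodes. So the plan is to show that the position matrix (equivalently, the frequency matrix) of a balanced group-separable election is \emph{essentially canonical}, and then to check whether the input matrix matches this canonical form, up to a permutation of candidates that we can recover greedily.

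First I would analyze, for a fixed complete binary tree $\calT$, what position matrices are achievable. Note that a vote compatible with $\calT$ is obtained by choosing, at each of the $m-1$ internal nodes, whether to flip the order of its two children; this gives $2^{m-1}$ compatible votes. The crucial structural fact I would establish is that, regardless of which subset of votes (with multiplicities) is used, each candidate only ever appears on a restricted set of positions determined by its leaf: a candidate at a given leaf occupies position $p$ in a compatible vote iff $p$ lies in a dyadic block determined by the ancestors of that leaf, and moreover the \emph{multiset of positions} a candidate can take, summed over all $n$ votes, is tightly constrained by the flips at its ancestors. In fact I expect the cleanest route is: for each internal node at depth $d$ (root at depth $0$), let $a_v$ be the number of votes in which that node's children are flipped; then the entry $X_{p,c}$ is completely determined by the candidate set, the tree, and the numbers $\{a_v\}$. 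This reduces realizability to asking whether the given matrix arises from some choice of nonnegative integers $a_v \le n$ for the internal nodes, together with an assignment of candidates to leaves.

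Second, I would argue we can recover the leaf assignment and the $a_v$ values efficiently. The root splits the candidates into two blocks of size $m/2$ (those that can appear in positions $1,\dots,m/2$ versus $m/2+1,\dots,m$ in "unflipped" votes); from the row sums of $X$ over the top half versus the bottom half we can read off which candidates go on which side and the value $a_{\mathrm{root}}$. Recursing into each half — restricting attention to the relevant rows and columns of $X$ — we peel off the tree level by level. At each node the constraints are linear and the consistency checks are arithmetic, so the whole procedure runs in polynomial time; if at any stage the matrix fails to have the forced block/dyadic structure or the forced arithmetic relations, we reject. For the position-matrix case we additionally check integrality of the recovered $a_v$; for the frequency-matrix case we simply scale.

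The main obstacle I anticipate is making the structural claim precise and watertight — namely proving that the position matrix of a balanced group-separable election is determined by the flip-counts $\{a_v\}$ and the leaf labeling, with an explicit formula, and conversely that every nonnegative integer choice of $\{a_v\}$ with $a_v \le n$ is realized by an actual election. The forward direction is a careful induction on the tree; the converse requires exhibiting an explicit election (one can, e.g., build it recursively, flipping node $v$ in exactly $a_v$ of the votes while keeping the recursion consistent across levels — some care is needed because flips at different levels interact). Once this characterization is in hand, the algorithm and its correctness follow routinely, and the fact that the complete binary tree is unique is exactly what lets us avoid being handed the tree in the input.
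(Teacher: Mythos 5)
There is a genuine gap, and it is in the step you yourself flag as the main obstacle: the claim that ``the entry $X_{p,c}$ is completely determined by the candidate set, the tree, and the numbers $\{a_v\}$'' is false. The position of a candidate in a compatible vote depends on the \emph{joint} pattern of flips along the root-to-leaf path, not on the marginal flip-counts. Concretely, take $m=4$ with root $r$, left internal node $u$ over $\{c_1,c_2\}$, right internal node $w$ over $\{c_3,c_4\}$, and two votes. If vote~1 flips both $r$ and $u$ and vote~2 flips nothing, then $c_1$ occupies positions $4$ and $1$; if instead vote~1 flips only $r$ and vote~2 flips only $u$, then $c_1$ occupies positions $3$ and $2$. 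Both elections have $a_r=a_u=1$, $a_w=0$, yet their position matrices differ. So realizability does not reduce to choosing integers $a_v\le n$, and the parametrization your algorithm searches over does not capture the set of realizable matrices. A second, independent problem is your partition-recovery step: the column sum of a candidate over the top $m/2$ rows equals the number of votes placing its side of the root first, so when the two sides receive equally many votes ($a_{\mathrm{root}}=n/2$) every candidate has identical top-half and bottom-half sums and the test is uninformative, even though only some partitions may extend to a valid realization.

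The paper's proof goes the other way around the tree. It first establishes (for a \emph{fixed} tree whose leaf order matches the column order) that $X$ is realizable iff it is ``maximally-evenly-quartered'': the four $\tfrac{m}{2}\times\tfrac{m}{2}$ quarters satisfy that the upper-left and bottom-right sums agree and the upper-right and bottom-left sums agree, recursively in every quarter. Note this is a condition on quarter \emph{sums}, not a formula for individual entries, precisely because the entries are not determined by aggregate flip statistics. For the unknown-tree case it then works bottom-up rather than top-down: two candidates $j,j'$ can be sibling leaves only if $x_{2i,j}=x_{2i-1,j'}$ and $x_{2i,j'}=x_{2i-1,j}$ for all $i\in[m/2]$ (an entrywise test on consecutive row pairs, which is sharp where your aggregate test is not); one then finds a perfect matching among admissible pairs (greedily, since a candidate with two admissible partners has two identical columns), collapses each matched pair into one column of an $\tfrac{m}{2}\times\tfrac{m}{2}$ matrix, and recurses. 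If you want to salvage a top-down argument you would need to replace the flip-count parametrization with the quarter-sum characterization and find a partition test that remains decisive in the balanced case; the bottom-up sibling test is the clean way to do both at once.
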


Interestingly, if instead of taking the entire balanced group-separable
domain (for a given tree) we only allow for an explicitly specified  subset of its votes, the
problem becomes $\np$-hard.

\begin{theorem}\label{thm:explicit}
  Given a set $\calD$ of votes, listed explicitly, and a position matrix $X$, it is
  $\np$-hard to decide if there is an election that realizes $X$ and
  whose votes are all from $\calD$. This holds even if the votes
  in~$\calD$ are both single-peaked and balanced group-separable.
\end{theorem}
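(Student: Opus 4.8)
The plan is to reduce from a set‑cover‑type problem; \textsc{Exact Cover by $3$-Sets} (X3C) will do: given a universe $U=\{a_1,\dots,a_{3q}\}$ and a family $\calS=\{S_1,\dots,S_\ell\}$ of $3$-element subsets of $U$, decide whether some $q$ of the sets partition $U$. I would use a candidate set made of one ``element candidate'' per $a_r$, a pair of ``partner candidates'' attached to each element, and two global ``anchor'' candidates (with a matching set of positions). I fix, once and for all, an axis $\triangleright$ and a balanced tree $\calT$ that are aligned so that each element together with its two partners occupies a contiguous interval of $\triangleright$ that is also a subtree of $\calT$, and so that the three element‑groups of any set actually appearing in $\calS$ can be laid out consecutively. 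Every vote placed in $\calD$ is obtained from one fixed base ranking $v_0$ by a rearrangement confined to the blocks of a single set $S_i$: for each $S_i$ I include a \emph{set vote} $v_i$, chosen so that it is still single‑peaked for $\triangleright$ and still balanced group‑separable for $\calT$ (possible precisely because its support lies inside one interval that is simultaneously a subtree), plus a few auxiliary ``padding'' votes with these same two properties. The target position matrix $X$ carries entries at several scales: the largest entries pin the anchors and the bulk of every block to copies of $v_0$ in \emph{every} realization; entries of order $q$ meter how many set votes may be used; and unit entries, in the columns of the element candidates, record which element is covered by which set vote. By design, realizations of $X$ using only votes from $\calD$ correspond exactly to exact covers of $U$.

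Carrying this out: first I fix the candidates, $\triangleright$, $\calT$, and $v_0$, define each $v_i$ explicitly, and verify that every vote in $\calD$ is simultaneously single‑peaked for $\triangleright$ and balanced group‑separable for $\calT$ --- this is where the axis/tree alignment is used. Second, I write $X$ out and check it is a genuine position matrix (nonnegative integers, all row and all column sums equal). Third (completeness), given a cover $S_{i_1},\dots,S_{i_q}$ I form the election consisting of $v_{i_1},\dots,v_{i_q}$, the required padding votes, and enough copies of $v_0$, and verify cell by cell that it realizes $X$. Fourth (soundness), from an arbitrary election over $\calD$ realizing $X$ I first use the large entries to force the ``skeleton'' (so every vote is $v_0$, a padding vote, or a set vote used with multiplicity one), then use the order‑$q$ entries to bound the number of set votes by $q$, and finally use the unit entries in the element columns to conclude that the chosen set votes have pairwise disjoint triples covering $U$. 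Fifth, I note the reduction is polynomial and that, by construction, $\calD$ lies inside the single‑peaked domain of $\triangleright$ and inside the balanced group‑separable domain of $\calT$, which yields the strengthened ``even if'' clause and, together with \cref{thm:frecrec} and \cref{thm:balanced}, sharpens the contrast between frequency and position matrices.

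The hard part will be the first step combined with the ``atomicity'' needed in the fourth. Single‑peakedness (for a fixed axis) and balanced group‑separability (for a fixed tree) are each rigid --- for an axis equal to the tree's in‑order traversal their intersection is quite small, of order the number of candidates --- so the per‑set gadget must be engineered so that its set vote simultaneously lands in both domains and leaves a clean footprint on $X$ (touching only the three intended element cells). One must also rule out spurious realizations that assemble the unit pattern of a ``cover'' out of fragments of several set votes; this is the role of the partner candidates, whose large $X$-entries bind together the moves a single set vote makes, forcing set votes to be used atomically. Getting the candidate set, $\triangleright$, $\calT$, $X$, and $\calD$ to meet all of these constraints at once --- while keeping $X$ integral with balanced row and column sums --- is the bulk of the technical work.
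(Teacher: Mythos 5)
Your high-level strategy coincides with the paper's: both reduce from \textsc{Exact Cover by 3-Sets}, both put into $\calD$ one vote per set obtained by perturbing a fixed base ranking only ``at'' the elements of that set, both use unit entries of $X$ to record which element is covered by which set vote, and both choose the axis and the balanced tree so that every vote of $\calD$ lies in both restricted domains. However, as written your proposal is a plan rather than a proof, and the step you explicitly defer --- exhibiting concrete set votes that are \emph{simultaneously} single-peaked for a fixed $\triangleright$ and balanced group-separable for a fixed $\calT$ while leaving only the intended footprint on $X$, and then excluding spurious realizations assembled from fragments and padding --- is precisely the step on which the theorem hinges. Your machinery of anchors, partner candidates, padding votes, copies of $v_0$ inside the realization, and matrix entries at three scales creates exactly the atomicity problem you worry about in your last paragraph, and nothing in the proposal resolves it. (Your estimate that the intersection of the two domains has size linear in the number of candidates is also off: the paper's domain alone shows it contains $2^{3m}$ votes, namely all bottom-level sibling flips of a balanced tree, each single-peaked for an interleaved axis.)

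The paper's construction shows the same idea goes through with none of this apparatus. Take $6m$ candidates in adjacent pairs $(c_{2i-1},c_{2i})$, one pair per universe element, and let $X$ be block-diagonal with $2\times 2$ blocks
$\left(\begin{smallmatrix} m-1 & 1 \\ 1 & m-1 \end{smallmatrix}\right)$,
so every realization has exactly $m$ votes. For each $S_\ell$ the domain contains the identity ranking with the three pairs of its elements transposed; these votes are automatically balanced group-separable (only bottom-level siblings of the tree with frontier $c_1\pref\cdots\pref c_{6m}$ are flipped) and single-peaked for the axis $c_{6m-1}\mathrel{\triangleright}\cdots\mathrel{\triangleright}c_3\mathrel{\triangleright}c_1\mathrel{\triangleright}c_2\mathrel{\triangleright}c_4\mathrel{\triangleright}\cdots\mathrel{\triangleright}c_{6m}$. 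Soundness is then immediate: the off-diagonal $1$'s force each pair to be transposed in exactly one of the $m$ votes, each domain vote transposes exactly three pairs, and $3m$ pairs must be accounted for, so the chosen sets form an exact cover --- no anchors, no padding, and no separate atomicity argument needed. If you want to complete your write-up, I would strongly recommend collapsing your gadget to this form rather than trying to make the multi-scale version work.
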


\begin{proof}
  We reduce from the NP-hard X3C problem,
  where we are given a set $\calU = \{u_1, \ldots, u_{3m}\}$ of $3m$
  elements and a family $\calS = \{S_1, \ldots, S_n\}$ of size-$3$
  subsets of $\calU$. We ask if there are~$m$ sets from $\calS$ whose
  union is $\calU$. 
  
  Let $I$ be our input instance of X3C.  We form a $6m\times 6m$
  position matrix $X$ with values $m-1$ on the diagonal, and where for
  each odd column there is value $1$ directly below the $m-1$ entry,
  and for each even column there is value $1$ directly above the $m-1$
  entry (all other entries are equal to~$0$).  The matrix looks as
  follows:
  \[
    \small
    \begin{bmatrix}
    m-1 & 1 & 0 & 0& \cdots & 0\\
    1 & m-1 & 0 & 0& \cdots & 0\\
    0 & 0 & m-1 & 1& \cdots & 0\\
    0 & 0 & 1 & m-1& \cdots & 0\\
    \vdots & \vdots & \vdots & \vdots & \ddots & 1\\
    0 & 0 & 0 & 0 & \cdots & m-1\\
  \end{bmatrix}.
  \]
  We let the candidate set be
  $\cnds = \{\cand_1, \ldots, \cand_{6m}\}$, where for each
  $i \in [6m]$, candidate $\cand_i$ corresponds to the $i$-th
  column. For each set $S_\ell=\{u_i,u_j,u_k\}$ we include in the
  domain $\calD$ a vote~$v_\ell$ that is equal to
  $c_1 \pref c_2 \pref \cdots \pref c_{6m}$ except that $c_{2i}$ and
  $c_{2i+1}$ are swapped, $c_{2j}$ and $c_{2j+1}$ are swapped, and
  $c_{2k}$ and $c_{2k+1}$ are swapped.  We claim that there is an
  election that realizes $X$ and whose votes all belong to $\calD$ if
  and only if $I$ is a \emph{yes}-instance of X3C.

  Let us assume that there are $m$ sets from $\calS$ whose union
  is~$\calU$ and, w.l.o.g., that these sets are $S_1, \ldots,
  S_m$. One can verify that election $(\cnds,(v_1,\ldots, v_m))$
  realizes $X$. Indeed, since $S_1,\ldots, S_m$ cover 
  $\calU$, for each candidate $c_i$ there are $m-1$ votes where~$c_i$ is ranked
  on the $i$-th position, and a single vote where $c_i$ is either ranked one
  position higher or one position lower, depending on the parity of $i$.

  For the other direction, let us assume that there is an election $\elct$ that
  realizes $X$ and, w.l.o.g., that it contains votes
  $v_1, \ldots, v_m$ (all the votes must be distinct as otherwise some
  non-diagonal entry of this election's position matrix would have
  value greater than $1$). 
  Since~$\elct$ realizes~$X$, for each $i \in [3m]$ there is exactly
  one vote in $\elct$ that ranks $\cand_{2i}$ below
  $\cand_{2i+1}$. This means that for each $u_i \in \calU$, there is
  exactly one set among $S_1, \ldots, S_m$ that includes $u_i$. Hence,
  $I$ is a \emph{yes}-instance of X3C.

  Finally, we observe that all votes in $\calD$ are single-peaked with
  respect to societal axis
  $c_{6m-1}\succ c_{6m-3} \succ \dots \succ c_{3}\succ
  c_{1}\succ c_{2}\succ c_4 \succ \dots \succ c_{6m}$ 
%
%
%
  and are balanced
  group-separable with respect to a balanced tree whose frontier is
  $c_1 \pref c_2 \pref \cdots \pref c_{6m}$ (to be formally correct,
  we would need to have a number of candidates equal to a power of two, it is
  easy to achieve this by adding at most $6m$ dummy candidates and extending 
  matrix $X$ so that for these candidates the diagonal entry would be equal to $m$).
\end{proof}

\paragraph{An Experiment.}
Using our algorithms from \Cref{thm:frecrec,thm:balanced}, we checked
for each of the frequency matrices from our 8x80~dataset whether it is
realizable by a single-peaked or a caterpillar/balanced
group-separable election (for each election we tried all societal axes
and all caterpillar trees).  For all three domains we found that for
each election in the dataset, its frequency matrix can be realized by
an election from the domain only if the election itself belongs to
this domain.\footnote{Elections from our dataset that are part of a
  restricted domain are almost exclusively sampled from models that
  are guaranteed to produce such elections.}
This
indicates that frequency matrices (and also position matrices) of elections from
restricted domains have specific features that are not likely to be produced by
elections sampled from other models.

\section{Condorcet Winners}
\label{sec:condorcet}
Our final set of results regards Condorcet winners in elections
that realize a given position matrix.

First, we consider the problem of deciding if 
a given position matrix
can be realized by an election
where a certain candidate is a Condorcet winner.
In general, the complexity of this problem remains open,
but if we restrict our attention to elections that only
contain votes from a given set
we obtain a hardness result
(even if 
the input matrix can always
be realized using votes from the given set).

\begin{restatable}{theorem}{condorcetnph}
\label{thm:condorcet-nph}
  Given a set $\calD$ of votes, listed explicitly,
  a position matrix $X$ (which can be realized by an election containing only votes from $\calD$)\footnote{Verifying this condition is not part of the problem as, by Theorem~\ref{thm:explicit}, such a test is $\np$-hard. It is simply a feature of our reduction.},
  and a candidate $c$,
  it is $\np$-hard to decide if there is an election realizing $X$,
  in which $c$ is a Condorcet winner and
  all votes come from $\calD$.
\end{restatable}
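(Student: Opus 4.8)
The plan is to reduce from \textsc{X3C}, building on the construction from the proof of \cref{thm:explicit}. From an \textsc{X3C} instance with ground set $\calU$ of size $3m$ and family $\calS$ of $3$-element sets, that construction yields a $6m\times 6m$ position matrix $X_0$ and a domain $\calD_0$ of ``set-votes'' (one per $S_\ell$) whose realizing elections are exactly the exact covers of $\calU$. The new ingredient demanded by the statement is the promise that $X$ is realizable from $\calD$ for \emph{every} instance. I would obtain this by enlarging the domain with two ``filler'' votes --- the identity vote and the vote obtained from it by performing all $3m$ of the adjacent swaps used in \cref{thm:explicit} --- and observing that $m-1$ copies of the identity vote together with one copy of this all-swaps vote realize $X_0$. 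A short case analysis on the off-diagonal $1$-entries of $X_0$ (of the same flavor as in \cref{thm:explicit}) then shows that, over the enlarged domain, the realizing elections of $X_0$ are exactly this ``filler realization'' or an ``exact-cover realization'' consisting of $m$ distinct set-votes covering $\calU$.

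Next I would introduce the designated candidate $c$ together with an ``obstacle'' candidate $z$ and, if needed, a few dummy candidates used only for positional bookkeeping, and extend every domain vote by inserting these new candidates at prescribed positions. The extension has to satisfy three properties at once: (i) the position matrix $X$ of the extended votes is the same whether $X_0$ is realized by the filler realization or by any exact-cover realization --- so $X$ cannot reveal which family of realizations is being used; (ii) in the filler realization $c$ is ranked below $z$ in a strict majority of votes, hence is not a Condorcet winner; and (iii) in every exact-cover realization $c$ is ranked above every other candidate in a strict majority of votes. Since the position-multiset of each candidate (in particular $c$ and $z$) is pinned down by $X$, the only slack left is how $c$'s positions \emph{align}, vote by vote, with those of the other candidates --- precisely the degree of freedom exploited in \cref{thm:explicit} --- and that slack is what must flip the $c$-versus-$z$ contest between the two realization families. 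This forces $c$ into genuinely ``middle'' positions (so that its pairwise contests are not settled by $X$ alone), with the dummy candidates used to absorb the resulting positional discrepancies so that (i) survives alongside (ii) and (iii). Granting such a construction, correctness is routine: a yes-instance of \textsc{X3C} gives, via an exact-cover realization, an election realizing $X$ with all votes from $\calD$ in which $c$ is a Condorcet winner, whereas for a no-instance there are no exact-cover realizations, so every realizing election is of filler type (after undoing the fixed insertion of $c$, $z$, and the dummies) and there $c$ loses to $z$; and the promise holds because the filler realization is an explicit election realizing $X$, so no $\np$-hard realizability test (cf.\ \cref{thm:explicit}) is needed.

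The main obstacle is exactly the tension built into the extension step. A position matrix records the precise position of every candidate in every vote, not merely which candidates sit above $c$, so it is delicate to keep $X$ identical across two structurally different realization families while arranging that precisely one of them makes $c$ a Condorcet winner. Because head-to-head majorities are a property of the whole election and not of $X$, I must pin down the number of voters, the multiplicities, and the exact insertion positions (with dummies) so that the relevant counts land strictly on opposite sides of $N/2$ in the two cases --- $c$ strictly below $z$ in a majority for every filler-type realization, $c$ strictly above everyone in a majority for the exact-cover realizations --- while simultaneously matching $X$ entry by entry and not opening up any spurious third family of realizations. Carrying out this bookkeeping is the combinatorial heart of the proof.
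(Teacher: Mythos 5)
Your high-level strategy is the same as the paper's: reduce from an exact-cover problem, build a matrix $X$ that admits two families of realizations over $\calD$ --- a ``filler'' family that always exists (giving the promised realizability) and an ``exact-cover'' family that exists iff the instance is a yes-instance --- and arrange for the pairwise contest between the designated candidate and a single obstacle candidate to be won in the second family and lost in the first. The paper likewise uses a preferred candidate $p$ and a despised candidate $d$, pads the election with a large fixed block of votes that nearly, but not quite, decides the $p$-versus-$d$ contest, and lets the remaining votes settle it only when they encode an exact cover. Your dichotomy argument for the enlarged domain (all-swaps vote used once forces all-identity; otherwise the set-votes must cover each pair exactly once) is sound.

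However, there is a genuine gap: the extension step that inserts $c$, $z$, and the dummies is not constructed, and it is precisely the part where the difficulty lives --- as you yourself note by calling it ``the combinatorial heart of the proof'' and prefacing correctness with ``granting such a construction.'' The tension is sharper than your sketch acknowledges. If every domain vote places $c$ and $z$ at fixed prescribed positions, then the relative order of $c$ and $z$ is identical in every vote, so the contest cannot flip between families. Hence $c$ and $z$ must occupy \emph{different} positions in different domain votes, and then matching the column of $X$ for $c$ (and $z$) across the filler realization (which uses $m-1$ identical votes) and \emph{every} exact-cover realization (whose composition you cannot predict) requires a nontrivial device --- e.g.\ tying the ``exceptional'' position of $c$ to the unique set in the cover containing a designated element, which is exactly the role of the ``representative'' of each set in the paper's construction, where $p$ and $d$ cycle through three adjacent positions ($6t{+}1$, $6t{+}2$, $6t{+}3$) across six vote variants per element plus one per set, and the majority threshold is calibrated against $20t$ voters. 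You also need $c$ to beat all $6m$ original candidates and the dummies in the cover realizations, which imposes further positional constraints you have not reconciled with the requirement that $c$ lose to $z$ in the filler realization. Without this explicit bookkeeping the reduction is a plan, not a proof.
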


Making partial progress on the general problem,
we provide a necessary condition
for the existence of an election realizing a given position matrix
in which a given candidate $c$ 
is a Condorcet winner.
Roughly speaking, for each $i\in [m]$,
our condition looks for a set $S$ of candidates (different from $c$)
that frequently appear in the first $i$ positions.
If occurrences of candidates from $S$ on the first $i$ positions are ``sufficiently frequent''
compared to how often candidate~$c$ appears in the first $i-1$ positions,
and both $S$ and $i$ are ``small enough,''
then $c$ cannot be
a Condorcet winner in any election realizing the matrix.

\begin{restatable}{theorem}{condorcetcondition}
\label{thm:condorcet:condition}
For each position matrix $X$ and each $c \in [m]$,
if there is an election $\elct$ realizing $X$,
where $c$ is a Condorcet winner,
then for every $i \in [m]$ and $S \subseteq [m]$,
it holds that
\[
    \textstyle\sum_{j \in S} \!\sum_{k=1}^i \! X_{k,j} \!\le\!
    |S| \cdot\! \left\lfloor \!\frac{n-1}{2}\! \right\rfloor  + 
    \textstyle\sum_{k=1}^{i-1} \!\!\big( X_{k,c} \cdot \min(|S|,i-k) \!\big)\! .
\]
The condition can be checked in polynomial time.
\end{restatable}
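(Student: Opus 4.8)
The plan is to argue directly about any election $\elct$ realizing $X$ in which $c$ is a Condorcet winner, and to count, across all voters, the total number of ``slots'' in the first $i$ positions occupied by candidates of $S$. On one hand, this total equals $\sum_{j \in S}\sum_{k=1}^i X_{k,j}$, which is the left-hand side. On the other hand, I will bound the contribution of each individual vote $v$ to this total. A single vote contributes to this sum the quantity $|\{j \in S : \pos_v(c_j) \le i\}|$, i.e.\ the number of members of $S$ that $v$ ranks among its top $i$. Clearly this is at most $\min(|S|, i)$ always, but the point of the theorem is to get a sharper bound by exploiting that $c$ is a Condorcet winner.

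The key observation is a dichotomy on votes according to where they place $c$. Call a vote \emph{$c$-good} if it ranks $c$ among the first $i-1$ positions (say $c$ is on position $k \le i-1$), and \emph{$c$-bad} otherwise (i.e.\ $c$ is on position $\ge i$). For a $c$-bad vote, the first $i-1$ positions are filled entirely by candidates other than $c$, and since $c$ is a Condorcet winner, ``most'' voters rank $c$ above any fixed opponent; I want to use a counting/averaging argument over opponents in $S$ to show that the number of $c$-bad votes placing many members of $S$ in the top $i$ is limited. More precisely: for each $j \in S$, the number of votes ranking $c_j$ above $c$ is at most $\lfloor (n-1)/2 \rfloor$ (strict majority prefers $c$ to $c_j$). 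A vote that ranks $c_j$ in the top $i$ but ranks $c$ on position $\ge i$ in particular ranks $c_j$ above $c$. Summing over $j \in S$, the total over all $c$-bad votes of $|\{j\in S: \pos_v(c_j)\le i\}|$ is at most $|S|\cdot\lfloor (n-1)/2\rfloor$, which is the first term on the right. For a $c$-good vote with $c$ on position $k \le i-1$, at most $\min(|S|, i-k)$ of the remaining $i-k$ top-$i$ positions (those after $c$) plus the $k-1$ positions before $c$ can hold members of $S$ — wait, more carefully: the top $i$ positions of such a vote consist of $c$ itself plus $i-1$ other candidates, so at most $\min(|S|, i-1)$ members of $S$ appear; but we want the refined bound $\min(|S|, i-k)$. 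This refinement comes from the same Condorcet argument applied within the top of the vote: among the $k-1$ positions strictly above $c$, any member of $S$ sitting there is ranked above $c$ by that vote, and these contributions are \emph{already} counted against the budget $|S|\cdot\lfloor(n-1)/2\rfloor$. So the $c$-good votes only get to ``freely'' use the $i-k$ positions below $c$ within the top $i$, giving the per-vote bound $\min(|S|, i-k)$; multiplying by $X_{k,c}$ (the number of votes placing $c$ on position $k$) and summing over $k=1,\dots,i-1$ yields the second term.

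Concretely, the proof proceeds as follows. First, I fix $\elct$, $i$, $S$ and write $L = \sum_{j\in S}\sum_{k=1}^i X_{k,j}$ as $\sum_{v}\, g(v)$ where $g(v) = |\{j\in S : \pos_v(c_j)\le i\}|$. Second, for each vote $v$ and each $j \in S$ with $\pos_v(c_j)\le i$, I classify the pair $(v,j)$ as \emph{type~I} if $v$ ranks $c_j$ above $c$, and \emph{type~II} otherwise (so for type~II, $c$ is above $c_j$ in $v$ and $\pos_v(c) \le i - 1$ automatically since $\pos_v(c_j) \le i$ and they're distinct). Third, I bound type~I pairs: for fixed $j$, the number of votes ranking $c_j$ above $c$ is at most $\lfloor(n-1)/2\rfloor$ (strict majority prefers $c$); summing over $j\in S$ gives at most $|S|\lfloor(n-1)/2\rfloor$ type~I pairs total. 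Fourth, I bound type~II pairs: group votes by the position $k=\pos_v(c)$; only votes with $k\le i-1$ can contribute, there are $X_{k,c}$ of them, and each such vote contributes at most $\min(|S|, i-k)$ type~II pairs, since type~II members of $S$ must sit strictly below $c$ but within the top $i$, i.e.\ on one of positions $k+1,\dots,i$. Summing gives at most $\sum_{k=1}^{i-1} X_{k,c}\min(|S|, i-k)$ type~II pairs. Adding the two bounds gives the claimed inequality. The polynomial-time check is immediate: for each $i\in[m]$ the right-hand side is monotone-structured in $S$, and in fact for fixed $i$ one can decide which candidates to include greedily — candidate $j$ helps the left side by $\sum_{k=1}^i X_{k,j}$ and costs $\lfloor(n-1)/2\rfloor + \sum_{k=1}^{i-1}X_{k,c}(\min(|S|+1,i-k)-\min(|S|,i-k))$ on the right, so one sorts candidates and tests prefixes; trying all $O(m^2)$ pairs $(i,|S|)$ suffices.

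The main obstacle I anticipate is getting the refined per-vote bound $\min(|S|, i-k)$ for $c$-good votes exactly right rather than the crude $\min(|S|, i-1)$, and making sure the type~I/type~II split is genuinely a partition with no double-counting: a member of $S$ that a $c$-good vote ranks \emph{above} $c$ must be accounted for in the type~I budget $|S|\lfloor(n-1)/2\rfloor$ and \emph{not} in the $c$-good term, while a member ranked \emph{below} $c$ (but in the top $i$) goes in the $c$-good term. The bookkeeping is straightforward once the classification is by the pair $(v,j)$ and the relative order of $c_j$ and $c$ within $v$, but it is the only place where the argument could silently go wrong, so I would state the partition explicitly and verify that summing the two parts recovers $g(v)$ for every vote.
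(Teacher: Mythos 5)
Your proof is correct and is essentially the paper's argument in different bookkeeping: the paper lower-bounds the count $N_k$ of $S$-occurrences at positions $k{+}1,\dots,i$ in votes placing $c$ at position $k$ (your type~II pairs) using the Condorcet bound $\lfloor(n-1)/2\rfloor$ per opponent (your type~I budget), and then checks the condition by sorting candidates and testing prefix sets exactly as you propose. One shared caveat: both your classification and the paper's proof implicitly assume $c\notin S$ (the paper explicitly takes $S\subseteq C\setminus\{c\}$ even though the theorem statement allows $S\subseteq[m]$), so you should state that restriction explicitly.
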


\begin{figure}[t]
\centering
  \includegraphics[width=6cm]{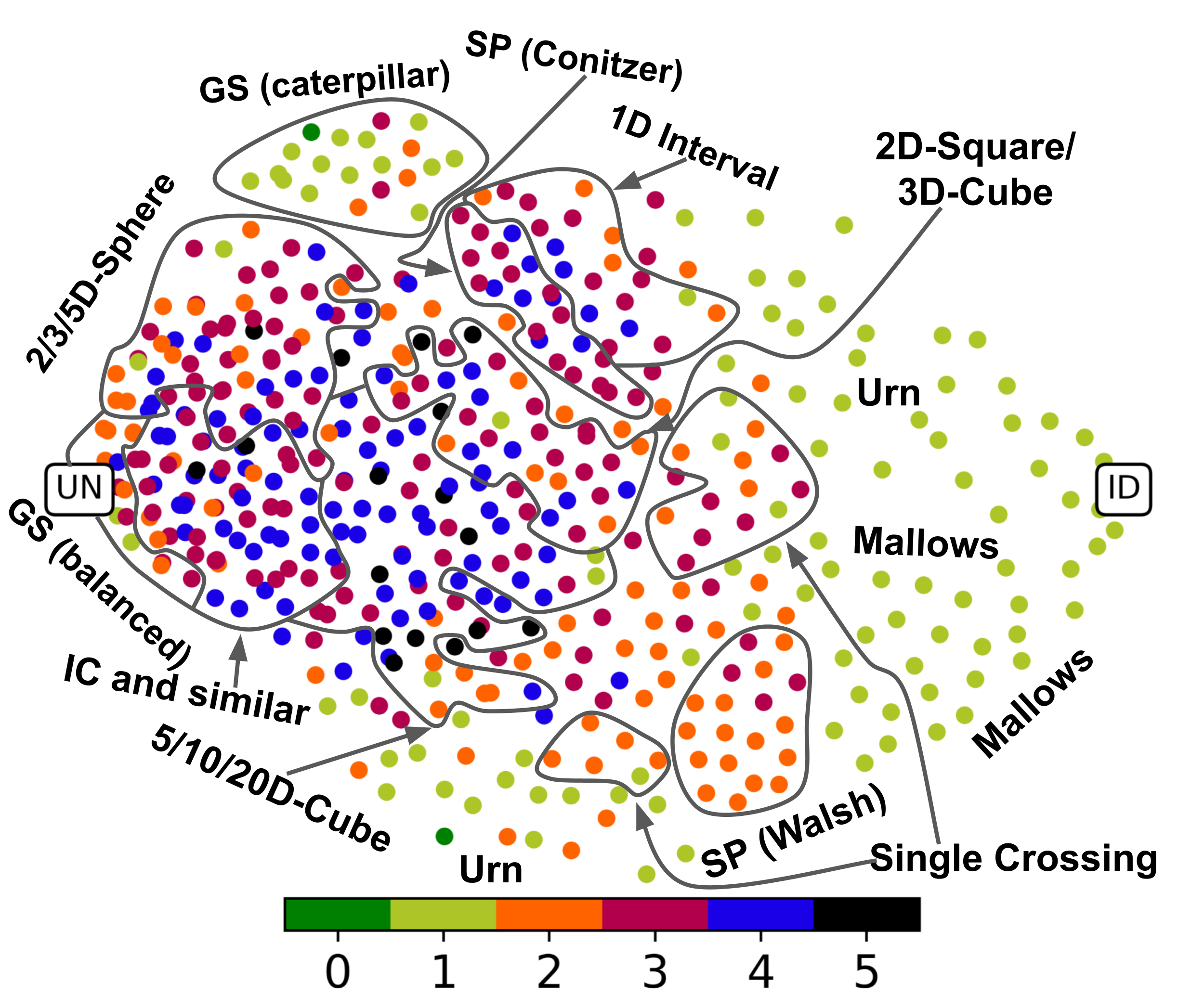}
  \caption{The number of candidates that can be Condorcet winners in elections realizing a given matrix in our 8x80 dataset.}
  \label{fig:condorcet_winners}
\end{figure}

\paragraph{Experiment 1.}
We tested our condition on the elections from the 8x80~dataset.
We checked for each election and each candidate whether the condition is satisfied,
but there is no election realizing the matrix in which
the candidate is a Condorcet winner (using an ILP formulation of the problem).
It turns out that this situation is very rare:
among all 480 matrices in the 8x80~dataset (i.e., among the position matrices of the elections
from the dataset),
there were only~6 in which there was one candidate
for which our condition gave the wrong answer
(there were none with more than one such candidate).
Thus, our condition appears to be quite an effective way
to detect potential Condorcet winners. 

\paragraph{Experiment 2.}
We conclude  with an experiment
where for each position matrix from our 8x80~dataset
we count how many different candidates are a Condorcet winner in at least one election realizing the matrix. 
The results are in \cref{fig:condorcet_winners}.

First, we observe that while $94$ elections from our 8x80~dataset do not have a Condorcet winner,  
only two of them have a position matrix that cannot be realized by an election with a Condorcet winner.
Second, examining  \cref{fig:condorcet_winners}, we see that for most matrices there are multiple different possible Condorcet winners, with the average number of Condorcet winners being $2.6$ and $120$ matrices having four or more possible Condorcet winners.
The number of possible Condorcet winners is correlated with the position of the matrix
on the map.  Generally speaking, it seems that the closer a
matrix is to UN, the more possible Condorcet winners we have. However,
in the close proximity of UN there is a slight drop in the
number of possible Condorcet winners.  Overall, these results confirm
that elections realizing a given position matrix can be very different
from each other (in terms of pairwise comparisons of candidates).


\section{Conclusions}
We have analyzed various properties of elections that realize given position or frequency matrices.
Among others, (i) we have shown algorithms for deciding if such elections can be implemented
using votes from particular structured domains, and 
(ii) we have found that for a given matrix, such elections can be very diverse.
The latter result is witnessed by the fact that two elections realizing a matrix
may have large isomorphic swap distance and  may have different
Condorcet winners. Hence, while maps of elections (based on position matrices)
certainly are very convenient tools for visualizing some experimental results (including ours), for others their value might be limited. It would be interesting to find  such experiments and establish their common features.

\section*{Acknowledgments}
NB was supported by the DFG project ComSoc-MPMS (NI 369/22).
This project has received funding from the European Research Council (ERC) under the European Union’s Horizon 2020 research and innovation programme (grant agreement No 101002854).
\begin{center}
  \includegraphics[width=3cm]{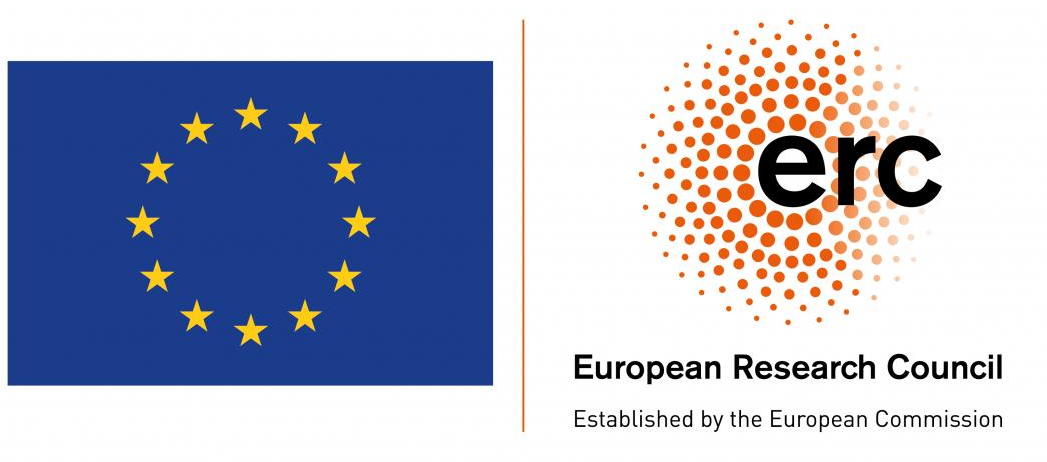}
\end{center}

\bibliographystyle{plainnat}
\bibliography{bib}

\appendix

\section*{Appendix}
\section{Map of Elections}\label{app:map}

Given two vectors $x, y \in
\mathbb{R}^n$ 
of nonnegative numbers that both sum up to the same value, their
Earthmover's distance, denoted $\emd(x,y)$, is the smallest total cost
of transforming one to the other by using the following operations:
For each $i, j \in [n]$ and $\delta > 0$, if the $i$-th entry of $x$
is at least $\delta$, then at cost $\delta \cdot |i-j|$ we can
subtract $\delta$ from it and add it to its $j$-th entry.

Given a matrix $X$, by $X_{j}$ we mean the row vector
equal to $X$'s $j$-th column.
Let $\elct$ and $\flct$ be two elections, each with~$m$ candidates
and~$n$ voters. Let $E$ and $F$ be their (arbitrarily chosen) position
matrices. The positionwise distance of $\elct$ and $\flct$ is:
\[
\textstyle  d_\pos(\elct,\flct) = \min_{\sigma \in S_m} \sum_{j=1}^m \emd( E_j, F_{\sigma(j)})
\]
(note
that due to minimization over permutations $\sigma$, the distance does
not depend on the choice of the position matrices).

\subsection{Single-Crossing Elections}
Among many types of elections introduced in the main part, our map
additionally contains \emph{single-crossing elections}. Similarly to
single-peaked ones, they can also be understood as describing the
left-to-right political spectrum, but this time from the perspective
of the voters.
\begin{definition}\label{def:singlecrossing}
  An election~$\elct{} = (\cnds, \vtrs)$ with $\vtrscnt$~voters
  ordered~$(\voter_1, \voter_2, \ldots, \voter_\vtrscnt)$ is
  \emph{single-crossing with respect to this order} if for each
  pair~$(\cand, \cand')$ of the candidates such that
  $\cand \succ_{v_1} \cand'$, there exists a positive integer~$t$ such
  that
  $\{ i \in \{1, 2, \ldots n\} \mid \cand_1 \succ_{v_i} \cand_2\} = \{1,
  2, \ldots, t\}$.  An election~$\elct{} = (\cnds, \vtrs)$ is
  \emph{single-crossing} if there exists an ordering of the voters
  with respect to which the election is single-crossing.
\end{definition}
\noindent In single-crossing elections we can order the voters in a
way that for each pair of candidates their relative order changes at
most once when sweeping through the order votes.

\subsection{Statistical Models}
To generate elections, we use various statistical models, which we describe
below.
\begin{description}
 \item[Impartial Culture] We draw each voter's preference order uniformly
 at random from the collection of all possible preference orders.

 \item[P\'olya-Eggenberger Urn Model] In the Urn model~\citep{berg1985paradox},
 we start with an urn containing every possible preference order. Constructing
 an election, we draw its votes iteratively (starting with an empty collection
 of votes), one vote per iteration. For some fixed parameter $\alpha \in [0,
 1]$, in each iteration, we first draw a new voter's preference order uniformly
 at random from the urn. Then, we return it to the urn together with  $\alpha
 \cndscnt!$~copies of the drawn preference order. For~$\alpha = 0$, the Urn model is
 equivalent to the Impartial Culture.

 \item[Mallows Model] The Mallows model~\citep{mal:j:mallows} is parameterized by a central preference
 order~$u^*$ and a dispersion parameter~$\phi \in [0, 1]$. For fixed values of
 the parameters, we draw each preference order of a generated election
 independently. The probability of drawing a given preference order~$u$ is
 proportional to $\phi ^{\swap(u, u^*)}$ (recall that~$\swap(u,
 u^*)$ is the number of swaps required to transform $u$ to~$u^*$). For the
 special cases of~$\phi = 0$ and~$\phi = 1$, we obtain, respectively, an
 election with all voters having the central preference order and the IC model.

 \item[Normalized Mallows Model] This model is an adaptation of the Mallows
 model~\citep{boe-bre-fal-nie-szu:c:compass}. Instead of the dispersion
 parameter~$\phi$, it uses as a parameter the expected relative swap
 distance~$\textrm{rel-}\phi \in [0,1]$ of~$u^*$ from a drawn vote. Given some value
 of~$\textrm{rel-}\phi$, we first compute the value of the dispersion
 parameter~$\phi$ of the (standard) Mallows Model that yields the requested
 expected relative swap distance~$\textrm{rel-}\phi$. Then, we use this
 parameterization of the (standard) Mallows Model, with the given central
 preference, to generate an election.

\item[Euclidean Models] In Euclidean models~\citep{enelow1984spatial,
    enelow1990advances}, we model the ideological space as the
  Euclidean space.  Specifically, for a $t$-dimensional Euclidean
  model, we generate voters and candidates as points in the
  $t$-dimensional Euclidean space. Then, we construct the preference
  order of a voter by listing the candidates from the closest one to
  the farthest one. We consider two ways of generating the points:
  \begin{itemize}
   \item \textbf{Uniform Interval Model:} We place each point uniformly at random in a $t$-dimensional
   hypercube $[0,1]^t$.
   \item \textbf{Sphere Model:} We place each point uniformly at random on a $t$-dimensional
   hypersphere with radius~$1$ centered at point~$(0, 0, \ldots, 0$).
  \end{itemize}

 \item[Single-Peaked Elections Models] We consider two models, one
 by~\citet{con:j:eliciting-singlepeaked} and one
 by~\citet{wal:t:generate-sp}; hence, we call them the Walsh and the Conitzer models. Under both of
 them, we first randomly select the societal axis; for the
 sake of presentation, assume it to be~$c_1 \mathrel\triangleright c_2
 \mathrel\triangleright \ldots \mathrel\triangleright c_\cndscnt$. Then we
 proceed as follows:
  \begin{itemize}
  \item \textbf{Conitzer Model:} To draw a preference order, we first
    select the top candidate~$c_i$ of the order. Then, we fill up the
    preference order in $\cndscnt - 1$~steps as follows. In each step,
    assuming that the preference order already consists of candidates
    from $c_j$ to $c_{j'}$, $j < j'$, with the same probability we
    extend the preference order by either $c_{j-1}$ or~$c_{j' +
      1}$. It might happen that one of these candidates does not exist
    and then we take the existing one.
   \item \textbf{Walsh Model:} We draw preference orders uniformly at
   random from the space of all possible preference orders (conforming the
   selected axis), as described by~\citet{wal:t:generate-sp}.
  \end{itemize}

\item[Single-Peaked on a Circle Elections Model] Following
  \citet{szu-fal-sko-sli-tal:c:map}, to generate single-peaked on a
  circle vote we use the same procedure as for the Conitzer model,
  except that we take care of the fact that the societal axis is
  cyclical.

 \item[Group-Separable Elections Models] We only generate group-separable elections
 based on caterpillar and balanced trees. Doing so, we take the approach
 of~\citet{boe-fal-nie-szu-was:c:metrics}. Consider an initial (respective) tree
 in which each leaf represents a unique candidate such that the leftmost leaf
 represents~$c_1$, the next leaf to the right represents~$c_2$, and so on. To
 generate a vote, we start from the initial tree and reverse the order of each
 node's children with probability $\frac{1}{2}$. Then, the order of the
 candidates, from left to right, represents the preference order of the new
 vote.

\item[Single-Crossing Elections Model] We use this distribution only
  for the sake of completeness and compatibility with other datasets
  from the literature.  Thus, we will skip the description of this
  model referring the reader to the work
  of~\citet{szu-fal-sko-sli-tal:c:map}, whose procedure we use.
\end{description}

\subsection[Description of the 8x80 Dataset]{Description of the 8x80~Dataset}
The dataset contains~$480$~elections generated according to the models
introduced in the previous section. The exact combination of election models and
the quantities of elections generated using them is depicted
in~\cref{tbl:elections-blend}.

\begin{table}
\centering
\caption{The ingredients of the 8x80~dataset grouped by the election
models. \label{tbl:elections-blend}}
\begin{tabular}{rl}
 \toprule
 model & \#elections\\\midrule
 Impartial Culture & 20\\
 single-peaked (Conitzer) & 20\\
 single-peaked (Walsh) & 20\\
 single-peaked on a circle & 20\\
 single-crossing & 20\\
 1D-Euclidean (uniform interval) & 20\\
 2D-Euclidean (uniform interval) & 20\\
 3D-Euclidean (uniform interval) & 20\\
 5D-Euclidean (uniform interval) & 20\\
 10D-Euclidean (uniform interval) & 20\\
 20D-Euclidean (uniform interval) & 20\\
 2D-Euclidean (sphere) & 20\\
 3D-Euclidean (sphere) & 20\\
 5D-Euclidean (sphere) & 20\\
 group-separable (balanced) & 20\\
 group-separable (caterpillar) & 20\\
 normalized Mallows model & 80\\
 urn model & 80\\\bottomrule
\end{tabular}
\end{table}

For the $80$~elections generated using the urn model and the normalized Mallows
model, we followed the protocol
of~\citet{boe-bre-fal-nie-szu:c:compass,boe-fal-nie-szu-was:c:metrics}. Hence,
for each of the elections generated with the normalized Mallows Model, we drew
the value of $\textrm{rel-}\phi$ uniformly at random from the $[0,1]$~interval.
The parameter for the urn model elections was drawn according to the Gamma
distribution with the shape parameter $k = 0.8$ and the scale parameter
$\theta = 1$. We refer to the work of~\citet{boe-bre-fal-nie-szu:c:compass} for
a detailed discussion on the presented choice of parameters.

\section[Additional Material for Section 3]{Additional Material for \Cref{sec:counting}}
\subsection{Uniform Sampler}\label{app:sampler}

In this section we describe an algorithm that given a position matrix
$X$ samples an election that realizes it uniformly at random. The
algorithm relies on the ability to count elections that realize a
given position matrix and whose lexicographically first vote has a
given prefix (if the prefix is empty, then this problem becomes
\realizationsProb{}).  We refer to this counting problem as
\lexrealizationsProb{}.

Let $X$ be our input $m \times m$ position matrix, whose each row and
each column sums up to $n$. We let the candidate set be
$\cnds = \{\cand_1, \ldots, \cand_m\}$. We assume the natural ordering
over the candidates, so $\cand_1$ corresponds to the first column of
$X$, $\cand_2$ corresponds to the second one, and so on. We also use
this order to compare votes (so, e.g., vote
$c_3 \pref c_1 \pref c_2 \pref \cdots$ precedes vote
$c_3 \pref c_1 \pref c_4 \pref \cdots$).  Our sampler generates the
output election vote by vote, in a lexicographic order. Initially, we
assume that there is only one ``virtual'' vote
$v_0 \colon c_1 \pref c_2 \pref \cdots \pref c_m$. Our algorithm will
generate votes $v_1, \ldots, v_n$ such that
$v_0 \leq v_1 \leq v_2 \leq \cdots \leq v_m$ (for two votes $x$ and
$y$, by $x \leq y$ we mean that $y$ is lexicographically greater or
equal to $x$).

Below we describe the process of generating the votes. Let us say that
we have already generated votes $v_0, \ldots, v_{i-1}$ and currently
the goal is to generate $v_i$. Let $Y_{i-1}$ be the position matrix
(for the natural ordering of the candidates) of election
$(C,(v_1,\ldots, v_{i-1}))$ and let $X_{i-1} = X - Y_{i-1}$.
Let~$\elct_{i}$ be a random variable equal to an election that
realizes~$X_{i-1}$ and is selected uniformly at random among such
elections whose lexicographically first vote is lexicographically
greater or equal to $v_{i-1}$. We write $u$ to denote the (random
variable equal to the) lexicographically first vote in
$\elct_{i-1}$. For each $\ell \in [m-2] \cup \{0\}$ we define the
random event $T_\ell$ such that:
\begin{enumerate}
\item length-$\ell$ prefix of $u$ is equal to the length-$\ell$ prefix
  of $v_{i-1}$,
\item lenght-$(\ell+1)$ prefixes of $u$ and $v_{i-1}$ are different,
  and
\item $u$ is lexicographically greater than $v_{i-1}$.
\end{enumerate}
Additionally, we let $T_{m-1}$ be the event that $u = v_1$. Note that
events $T_0, \ldots, T_{m-1}$ partition the space of elections from
which $\elct_i$ is chosen. Further, computing the probability of each
of these events is easy, provided that we have an algorithm for
solving \lexrealizationsProb{} (although for each $T_\ell$ we might
have to invoke this algorithm up to $O(m)$ times).

The first step of generating $v_i$ is to choose a value
$\ell \in [m-1] \cup \{0\}$ with probability equal to that of event
$T_\ell$. If $\ell = m-1$ then we let $v_i = v_{i-1}$ and we proceed
to generating $v_{i+1}$. Otherwise, we fix the length-$\ell$ prefix of
$v_i$ to be equal to the length-$\ell$ prefix of $v_{i-1}$ and we
generate the reminder of the vote as follows (in a
candidate-by-candidate manner). Let $c$ be the canidate that $v_{i-1}$
ranks on position $\ell+1$ and let $D$ be the set of candidates $d$
such that (a)~$v_{i-1}$ ranks $d$ on a position greater than $\ell+1$
and (b)~$d$ is greater than $c$ in our candidate ordering (at least
one such candidate exists because otherwise the probability of
selecting this value of $\ell$ would be $0$). Then, we select a
candidate $d \in D$ with probability equal to that of choosing
uniformly at random an election $\elct'_i$ that realizes $X_i$ and
whose lexicographically first vote has $\ell+1$-length prefix equal to
length-$\ell$ prefeix of $v_{i-1}$ extended with $d$ (again, we can
compute these probabilities using \lexrealizationsProb{}).  We extend
$v_{i}$ with $d$. We continue choosing the candidates for the
following positions of $v_i$ in the same way, except that now in each
iteratioin we let $D$ be the set of candidates not-yet-ranked within
$v_i$ (for the $(\ell+1)$-st position we had to be more careful to
ensure that the length-$(\ell+1)$ prefix of $v_i$ is lexicographically
greater than the length-$(\ell+1)$ prefix of $v_{i-1}$).  This
completes the description of the sampler.

The algorithm generates elections that realize $X$ uniformly at random
because each random decision corresponds to partitioning the space of
elections that generate $X$, and we make each decision with
probability proportional to the size of the subspace to which we
restrict our attention.

\subsection{Experiments on 4x16 Dataset}\label{sub:416exmperiments}

\begin{figure*}
 \def\wdtratio{.22}
 \begin{subfigure}[t]{\wdtratio\textwidth}
 \includegraphics[width=4.5cm]{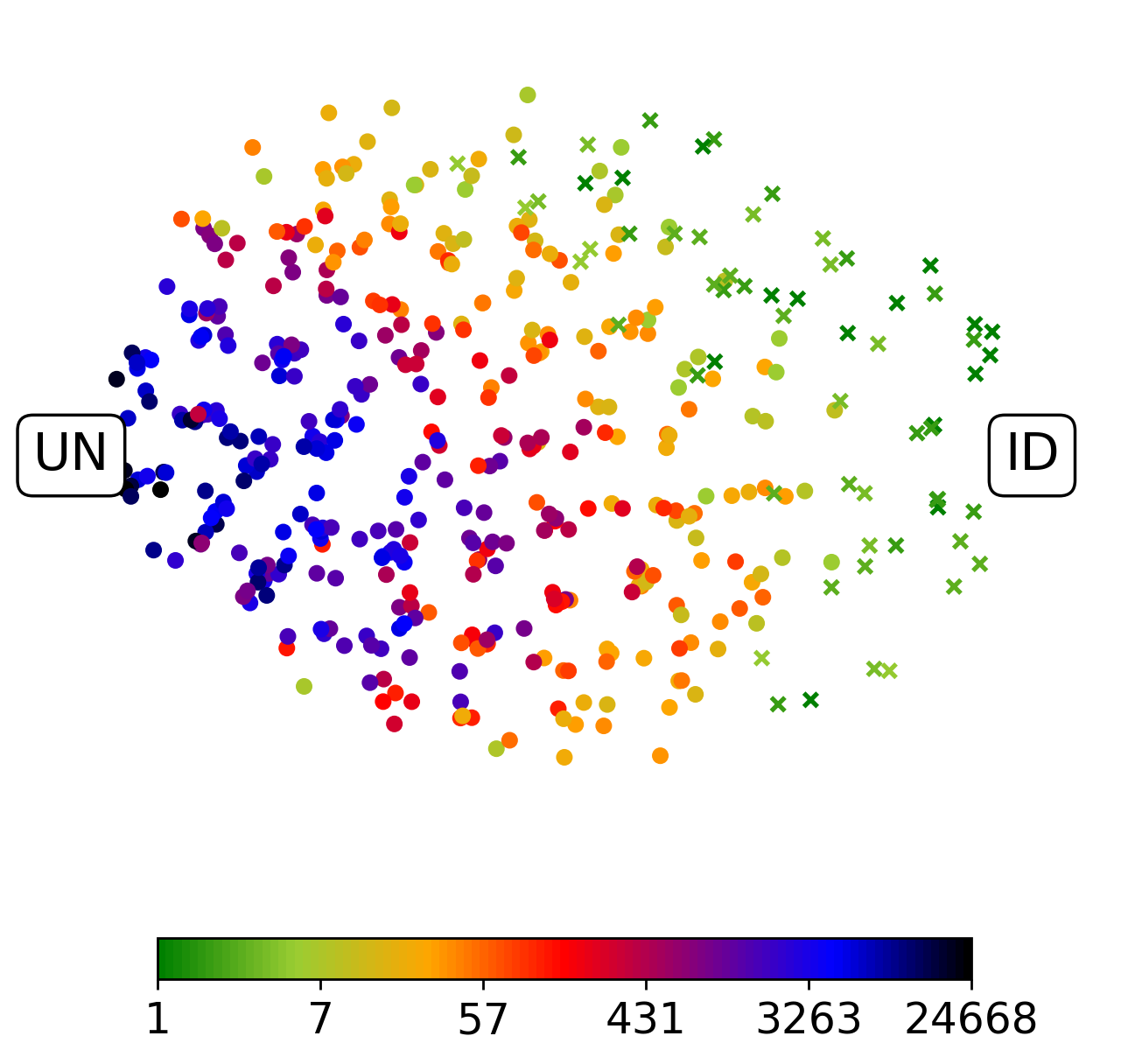}
 \caption{Number of realizations.}\label{fig:real_counts-4x16-map}
 \end{subfigure}\hfill
 \begin{subfigure}[t]{\wdtratio\textwidth}
 \includegraphics[width=4.5cm]{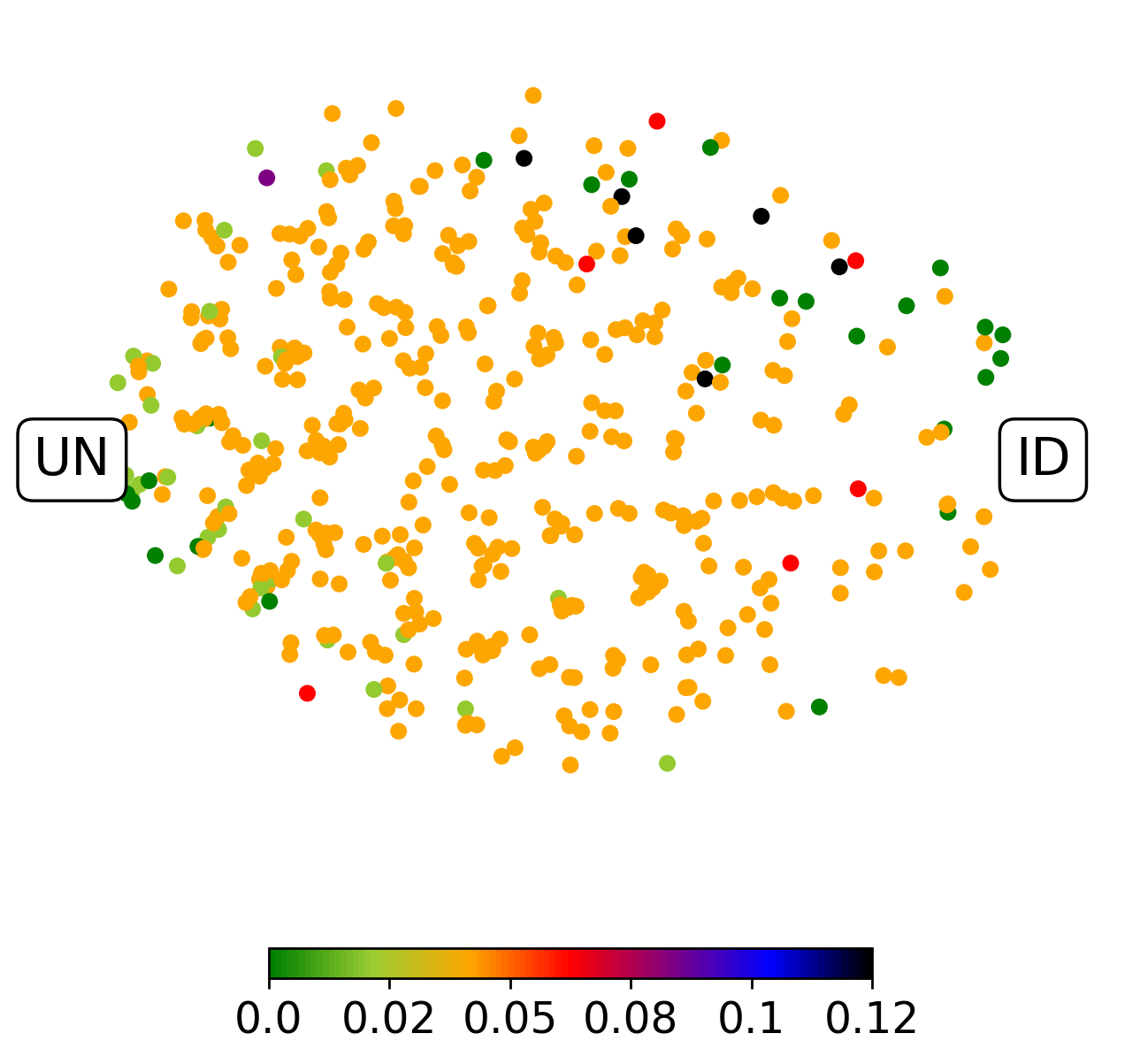}
 \caption{Minimum swap distance.}\label{fig:minswap-4x16-map}
 \end{subfigure}\hfill
 \begin{subfigure}[t]{\wdtratio\textwidth}
 \includegraphics[width=4.5cm]{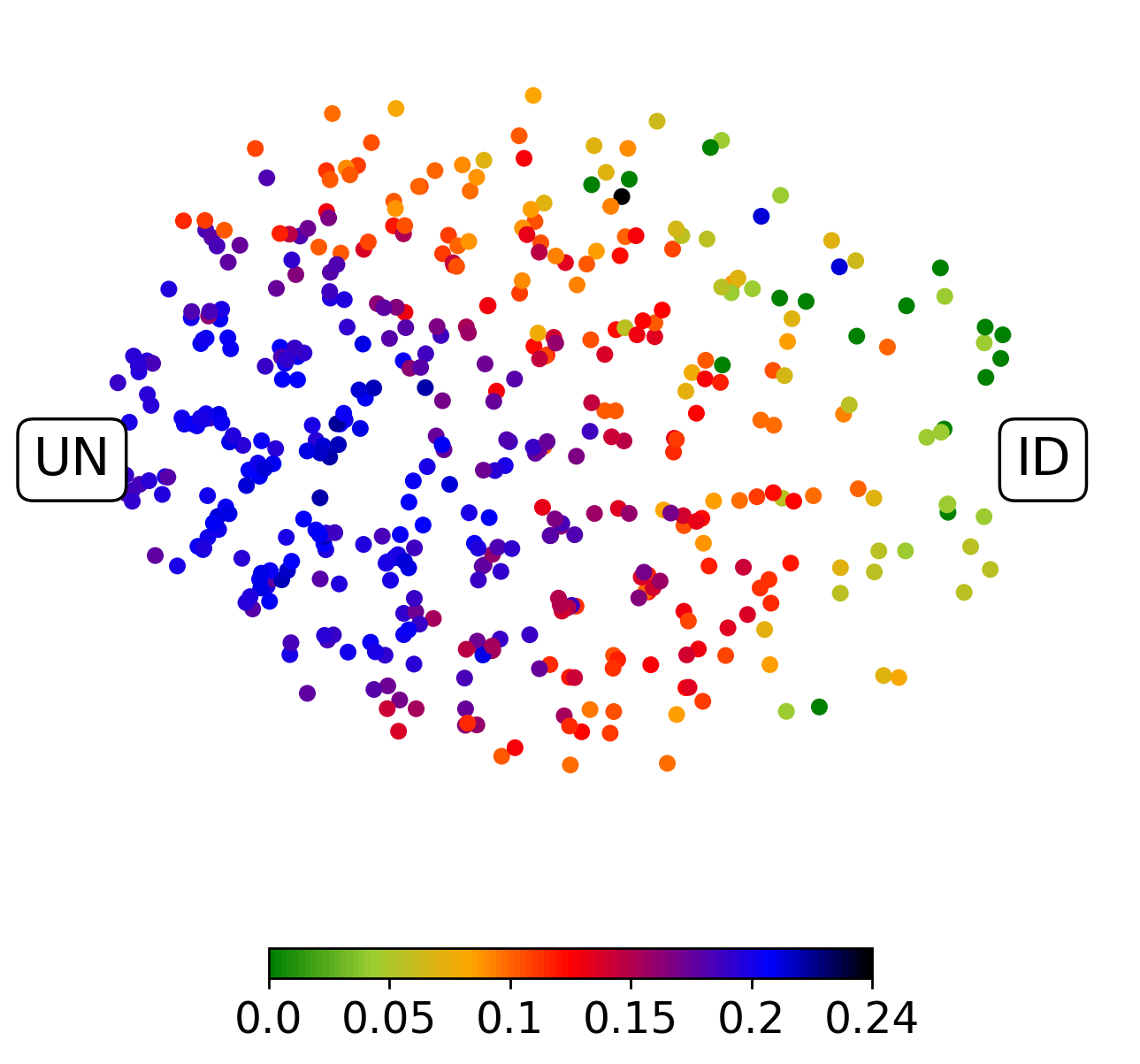}
 \caption{Average swap distance.}\label{fig:avgswap-4x16-map}
 \end{subfigure}\hfill
 \begin{subfigure}[t]{\wdtratio\textwidth}
 \includegraphics[width=4.5cm]{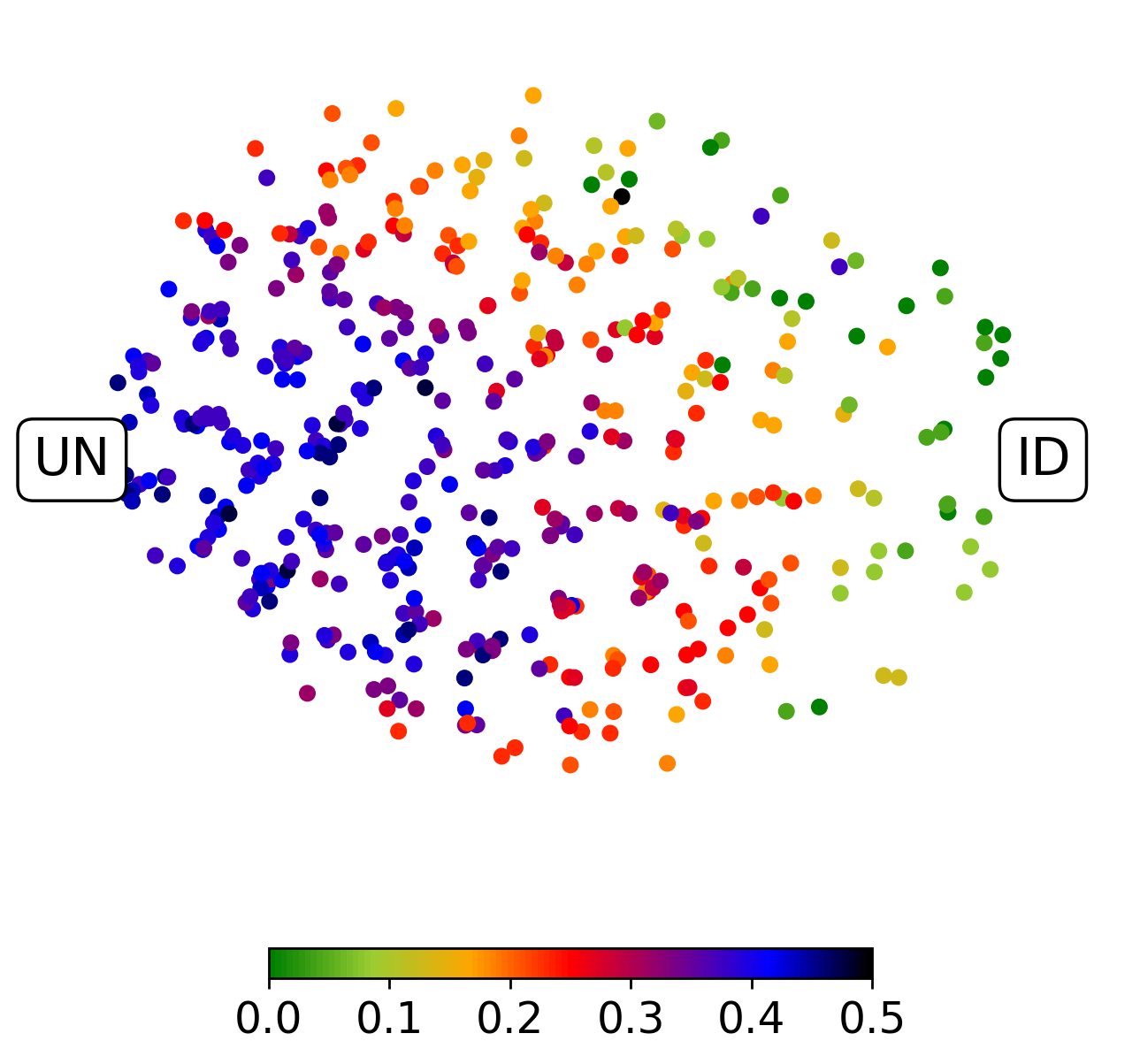}
 \caption{Maximum swap distance.}\label{fig:maxswap-4x16-map}
 \end{subfigure}
 \caption{Maps with our experimental results for the 4x16
    dataset. In~\cref{fig:real_counts-4x16-map}, the dot colors describe the
    number of realizations of the position matrix of the respective election.
    The crosses indicate that there were at most 5~such realizations.
    In~\cref{fig:minswap-4x16-map,fig:avgswap-4x16-map,fig:maxswap-4x16-map},
    the color shows, respectively, the minimum, average, and maximum isomorphic
    swap distance between all pairs of the elections realizing the position
    matrix of the respective election.\label{fig:4x16-distance-results}}
\end{figure*}

We generated a 4x16~dataset used in this experiment analogously to the~8x80~one
introduced in the main part of the paper~(\cref{sec:map-of-elections}). That is,
the 4x16~dataset includes $480$~elections with $4$~candidates and~$16$~voters;
the election distributions and their counts are shown
in~\cref{tbl:elections-blend}.

We implemented a naive ILP program to compute all elections realizing a given
position matrix. In the ILP program, we specified an integer variable for each
possible $24$~preference orders over four candidates. The value of each of these
variables in a feasible solution corresponded to the number of voters with a
given preference orders. Using constrains, we ensured that there are exactly
$16$~voters and that their preference orders conform to the given position
matrix. Naturally, this approach does not scale well with the election size.
Indeed, getting all elections realizing matrices with six voters turned out to
be too computationally intensive in practice.

We performed two experiments on the 4x16 dataset. In the first experiment, for
each election of the dataset we computed its position matrix and all elections
realizing this matrix. We present the number of such realizing elections on the
map in~\cref{fig:real_counts-4x16-map}. In the second experiment, for each
position matrix in our dataset, we took all elections realizing the matrix and
we computed isomorphic swap distances between all pairs of them. On the maps
in~\cref{fig:minswap-4x16-map,fig:avgswap-4x16-map,fig:maxswap-4x16-map}, we
present the minimum, the maximum, and the average over these distances (for each
matrix separately).

The results regarding the average and maximum
distances~(\cref{fig:avgswap-4x16-map,fig:maxswap-4x16-map}) show that
the matrices of elections closer to ID tend to yield elections with a
very small average and maximum pairwise isomorphic distances. However,
the distances increase as we get closer to UN. Indeed, elections near
UN achieve the greatest value of the maximum distance of around
$50\%$~of the maximum achievable distance between two elections of
four candidates and 16~voters. For each election in the dataset, a
high value of the average and maximum distances coincide with a high
number of elections realizing the election's position
matrix~(\cref{fig:real_counts-4x16-map}) and vice versa. This is
intuitive, as one would expect that with a greater diversity of
elections realizing a single matrix, the average distance among them
rises. In general, the results for the 4x16 dataset are qualitatively
very similar to the results presented in~\cref{sec:distance-exp} for
the 8x80~dataset.

For the minimum distance we observe a (roughly) same-colored
map, with almost all values indicating the minimum distance below~$5\%$ of the maximum
achievable distance. This shows that irrespectively of the considered matrix, there
always were two elections realizing a given matrix that were relatively close to
each other with respect to the isomorphic swap distance.

\section[Additional Material for Section 4]{Additional Material for \cref{sec:structure}}
In this section,
we provide the proofs of \cref{thm:frecrec,,thm:balanced},
which were omitted in the main body of the paper.

\subsection[Proof of Theorem 3]{Proof of \cref{thm:frecrec}}\label{sec:proof_freq_matrices}

\frecrec*
\begin{proof}
We split the proof,
into four independent propositions,
which we will subsequently prove:
\begin{itemize}
    \item in \cref{prop:input-given-p} we focus on explicit domains $\calD$,
    \item in \cref{prop:balanced:fixed:p} we turn to elections in
    group-separable domain $\calD$ with balanced trees,
    \item in \cref{prop:caterpillar:fixed:freq:p} we consider
    group-separable domain $\calD$ with caterpillar trees, and
    \item in \cref{prop:single-peaked:fixed:freq:p} we look at
    single-peaked domain $\calD$.
\end{itemize}

We will proceed with the propositions one-by-one.
Let us start with domains explicitly given as a set of votes.

\begin{proposition}\label{prop:input-given-p}
    There is a polynomial-time algorithm
    that given a frequency matrix $X$
    and an explicit domain $\calD$,
    decides whether there is an election that realizes $X$,
    and whose all votes belong to $\calD$.
\end{proposition}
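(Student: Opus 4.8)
The plan is to reduce the question to testing feasibility of a linear program. Let the candidate set be $\cnds = \{\cand_1,\dots,\cand_m\}$ and, as in the definition of \realizationsProb{}, let candidate $\cand_j$ correspond to the $j$-th column of the input frequency matrix $X$. Each vote $w \in \calD$ has a $0/1$ position matrix $M^w$, i.e., a permutation matrix with $M^w_{i,j}=1$ exactly when $w$ ranks $\cand_j$ on position $i$. An election using only votes from $\calD$ --- say, with $n_w \in \naturals$ copies of each $w \in \calD$ and $n = \sum_{w \in \calD} n_w$ voters in total --- realizes $X$ if and only if $n \cdot X = \sum_{w \in \calD} n_w M^w$; dividing by $n$, this is equivalent to $X = \sum_{w \in \calD} \lambda_w M^w$ with coefficients $\lambda_w := n_w/n \ge 0$ summing to $1$. (The constraint $\sum_w \lambda_w = 1$ is in fact automatic here, since $X$ and every $M^w$ have all row sums equal to $1$.) Hence an election realizing $X$ with all votes in $\calD$ exists if and only if $X$ lies in the convex hull of $\{M^w : w \in \calD\}$.

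The latter is decided by the linear program with variables $(\lambda_w)_{w \in \calD}$ and constraints
\[
  \lambda_w \ge 0 \ \ (w \in \calD), \qquad \textstyle\sum_{w \in \calD} \lambda_w = 1, \qquad \textstyle\sum_{w \in \calD} \lambda_w M^w = X,
\]
where the last line is a system of $m^2$ scalar equalities. This program has $|\calD|$ variables and $O(m^2)$ constraints, with all coefficients read directly off the explicitly given input, so its feasibility can be tested in polynomial time. To see that feasibility indeed witnesses an election, note that a feasible linear program with rational data admits a rational feasible point $\lambda^*$; writing $\lambda^*_w = a_w/N$ over a common denominator $N$ (so that the $a_w$ are nonnegative integers with $\sum_w a_w = N$), the election with $a_w$ copies of each $w$ has exactly $N$ voters and position matrix $\sum_w a_w M^w = N \cdot X$, which is in particular integral, so this election realizes $X$. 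Conversely, any election realizing $X$ with votes from $\calD$ yields a feasible $\lambda$ exactly as above, closing the equivalence.

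I expect no serious obstacle here: the only points requiring a little care are fixing the candidate/column correspondence and the bridge between fractional LP solutions and integral-voter elections. In particular, the common denominator $N$ above may be exponentially large, so the witnessing election can have very many voters --- but this causes no difficulty, since the algorithm only needs to decide feasibility and never constructs the election (and, if one wanted it, the multiplicities $a_w$ written in binary give a succinct description).
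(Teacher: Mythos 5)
Your proposal is correct and takes essentially the same route as the paper, which also reduces the question to the feasibility of the linear system $X = \sum_{w\in\calD}\lambda_w P(w)$ with nonnegative rational coefficients and solves it by linear programming. Your additional details (the automatic sum-to-one constraint, clearing denominators to recover an integral election, and the remark that the witnessing election may be exponentially large but need not be constructed) are exactly the points the paper glosses over and addresses elsewhere in its discussion of \cref{thm:frecrec}.
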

\begin{proof}
  In short, if $D = \{v_1, \ldots, v_n\}$ then it suffices to find rational
  values $y_1, \ldots, y_n$ such that:
  \[
    X = y_1 P(v_1) + y_2 P(v_2) + \cdots + y_n P(v_n).
  \]
  Doing so in polynomial time is a simple linear programming task.
\end{proof}

Now, let us focus on group-separable domains.
First let us consider these with a balanced tree.
Here, our proof works in fact for both
frequency and position matrices.

\begin{proposition}\label{prop:balanced:fixed:p}
    There is a polynomial-time algorithm
    that given a frequency or position matrix $X$
    and a balanced group-separable domain $\calD$,
    decides whether there is an election that realizes $X$,
    and whose all votes belong to $\calD$.
\end{proposition}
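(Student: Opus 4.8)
The plan is to exploit the recursive structure of a complete binary tree. Assume $m=2^{k}$ (otherwise there is no balanced tree on the candidate set and the answer is trivially ``no''), and let $T$ be the given balanced tree with root $r$, left subtree $T_L$ over a candidate set $C_L$, and right subtree $T_R$ over $C_R$, with $|C_L|=|C_R|=m/2$. The starting point is the characterisation of $T$-compatible votes: such a vote is obtained by independently choosing, at each internal node, whether to swap that node's two children. In particular the choice at $r$ decides whether all of $C_L$ precedes all of $C_R$ in the vote (call the vote \emph{type L}) or the reverse (\emph{type R}), and this single bit is independent of all the choices made strictly below $r$.

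The heart of the argument is a decomposition lemma. Write $X$ (a position or frequency matrix; let $n$ be the common value of its row and column sums, with $n=1$ in the frequency case) in block form according to the split of rows into the top half $R_1=\{1,\dots,m/2\}$ and bottom half $R_2=\{m/2+1,\dots,m\}$ and of columns into $C_L$ and $C_R$: set $A=X[R_1,C_L]$, $B=X[R_2,C_L]$, $C'=X[R_1,C_R]$, $D=X[R_2,C_R]$. In any realizing election the type-L votes are exactly those placing $C_L$ into positions $R_1$ and $C_R$ into $R_2$, so they are the only votes contributing to $A$ and to $D$, while the type-R votes are the only ones contributing to $B$ and $C'$. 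Hence, if a realization exists, then $A$ is a (weighted) position matrix all of whose rows and columns sum to the same value $w_L$ (the total weight of type-L votes), the shifted block $D$ likewise has common value $w_L$, the blocks $B,C'$ have common value $w_R:=n-w_L\ge 0$, and moreover $A$ must be realizable by $T_L$-compatible votes of total weight $w_L$, the downward shift of $D$ by $T_R$-compatible votes of weight $w_L$, and the corresponding shifts of $B$ and $C'$ by $T_L$- and $T_R$-compatible votes of weight $w_R$. The nontrivial converse is that these four sub-realizations can be chosen \emph{independently}: given a $T_L$-realization of $A$ and a $T_R$-realization of the shift of $D$, both of total weight $w_L$, one pairs their votes up arbitrarily (one $C_L$-vote with one $C_R$-vote of the same weight) and concatenates each pair into a single type-L $T$-compatible vote; the resulting multiset reproduces exactly the blocks $A$ and $D$ and contributes nothing to $B,C'$. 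Doing the same for $B,C'$ with type-R votes and taking the union recovers $X$ exactly. Therefore realizability of $X$ by $T$-compatible votes is equivalent to a set of structural sum-consistency checks on the four blocks together with four strictly smaller instances of the very same problem (two over $T_L$, two over $T_R$).

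This yields the algorithm directly: recurse on the subtree/matrix pairs produced above, with base case a single candidate and a single position, where the $1\times 1$ matrix $[w]$ (with $w\ge 0$) is always realizable by the unique trivial vote. The recursion on a subtree at depth $d$ is invoked with at most $2^{d}$ different matrices, and there are $2^{d}$ subtrees at depth $d$, so the total number of subproblems is $\sum_{d=0}^{k}4^{d}=O(m^{2})$, each handled in polynomial time by extracting blocks and checking row/column sums; hence the whole procedure runs in polynomial time. The same argument applies verbatim to both position and frequency matrices, because the split weight $w_L$ is never a free parameter we must search over — it is forced to be the common sum of block $A$ — and it is automatically an integer in the position-matrix case and a rational in the frequency-matrix case, requiring no other change.

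The step I expect to be the main obstacle is the ``independence / pairing'' converse in the decomposition lemma. A priori a type-L vote couples its behaviour on $C_L$ with its behaviour on $C_R$, so jointly realizing the $A$-block and the $D$-block could conceivably be more restrictive than realizing each separately; if that were so, the recursion would not decouple and we would be back to an exponential-size feasibility problem. What rescues the argument is the observation that, once the root bit fixes the vote's type, the $C_L$-part and the $C_R$-part of the vote are controlled by disjoint families of internal-node choices, which is precisely what licenses the vote-by-vote gluing. Everything else is routine bookkeeping of row and column sums, which is where one has to be careful but where nothing deep happens.
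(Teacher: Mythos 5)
Your proof is correct and follows essentially the same route as the paper's: your four blocks $A$, $B$, $C'$, $D$ are exactly the paper's ``quarters,'' your sum-consistency checks amount to its ``maximally-evenly-quartered'' condition, your vote-pairing/concatenation argument for the decoupling converse is precisely the paper's $v^{\mathrm{ul}}_i \circ v^{\mathrm{br}}_i$ construction, and the $O(m^2)$ subproblem count matches. The only cosmetic differences are that the paper phrases the per-level test as equality of the total sums of opposite quarters (the constant block row/column sums you require then follow from the recursion), and that it handles the frequency case by clearing denominators rather than pairing rational weights directly -- both formulations are equivalent.
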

\begin{proof}
First, let us consider position matrices and
we will move to frequency matrices at the end of the proof.

For an arbitrary $2m \times 2m$ matrix $A = [a_{i,j}]_{i=1,j=1}^{2m,2m}$,
by a \emph{quarter} of $A$,
let us denote each of the four $m \times m$ matrices
obtained by ``cutting'' $A$ in the middle horizontally and vertically,
i.e., matrices
$[a_{i,j}]_{i=1,j=1}^{m,m}$,
$[a_{i,j}]_{i=m+1,j=1}^{2m,m}$,
$[a_{i,j}]_{i=1,j=m+1}^{m,2m}$, and
$[a_{i,j}]_{i=m+1,j=m+1}^{2m,2m}$.
Now, we will say that $A$ is \emph{evenly-quartered}
if the sum of entries in its \emph{upper-left} quarter
is equal to the sum of entries in its \emph{bottom-right} quarter and
the sum of entries in its \emph{upper-right} quarter
is equal to the sum of entries in its \emph{bottom-left} quarter.
Formally,
\begin{multline*}
    \sum_{i=1}^m \sum_{j=1}^m a_{i,j} =
    \sum_{i=m+1}^{2m} \sum_{j=m+1}^{2m} a_{i,j}
    \quad \mbox{and} \\
    \sum_{i=m+1}^{2m} \sum_{j=1}^m a_{i,j} =
    \sum_{i=1}^m \sum_{j=m+1}^{2m} a_{i,j}.
\end{multline*}
Finally,
we will say that $m \times m$ matrix $A$ is
\emph{maximally-evenly-quartered} if
$m=2^k$ for some $k \in \mathbb{N}$ and
$A$ is either a one-element-matrix or
it is evenly-quartered and
each of its quarters is maximally-evenly-quartered.

Let $X$ be an arbitrary $m \times m$ matrix.
Without loss of generality,
we will focus on balanced group-separable elections,
$\elct = (C,V)$, realizing $X$ and compatible with
the binary tree, $\calT$, in which the order of the candidates at leaves
aligns with the order of the corresponding columns in $X$
(otherwise we can we can reorder the columns of $X$ accordingly).
We will prove that such election exists, if and only if,
$X$ is maximally-evenly-quartered
(note that in this way we will prove also that
each maximally-evenly-quartered matrix is a position matrix).
Since checking if matrix is evenly-quartered can be done in polynomial time
and there are $O(m^2)$ matrices
that can be obtained from $X$ by the sequence of taking the quarters
($(4m^2 - 1)/3$ to be exact),
this will imply the thesis.

First let us show that if $X$ is a position matrix
of some balanced group-separable election $\elct = (C,V)$,
then $X$ is maximally-evenly-quartered.
Since $\elct$ is balanced group-separable,
$|C|=2^k$ for some $k \in \mathbb{N}$.
Let us proceed by an induction on $k$.
If $k=0$, the thesis holds trivially.
Assume $k>0$.
Let us denote $C=\{c_1,\dots,c_{2^k}\}$.
In order to prove that $X$ is maximally-evenly-quartered,
we have to prove that
(1) $X$ is evenly-quartered and
(2) each of its quarters is maximally-evenly-quartered.

(1) Let us start by proving that $X$ is evenly-quartered.
To this end,
let us denote $C_1 = \{c_1,\dots,c_{2^{k-1}}\}$ and
$C_2 = \{c_{2^{k-1}+1},\dots,c_{2^k}\}$.
Since $\elct$ is balanced group-separable,
we can split the voters in two disjoint sets,
$V_1, V_2 \subseteq V$,
such that $V_1 \cup V_2 = V$ and
voters in $V_1$ prefer each candidate in $C_1$
over each candidate in $C_2$ and,
conversely,
voters in $V_2$ prefer each candidate in $C_2$
over each candidate in $C_1$.
Observe that in both upper-left and bottom-right quarters of $X$,
the sum of entries is equal to $|V_1|$.
Analogously, in both upper-right and bottom-left quarters of $X$,
the sum of entries is equal to $|V_2|$.
Hence, $X$ is evenly-quartered.

(2) Now, let $Y$ be an arbitrary quarter of $X$.
We will construct elections $E'=(V',C')$ for which $Y$ is a position matrix.
First, let us take $C'=C_1$,
if $Y$ is upper- or bottom- left quarter,
and $C'=C_2$,
otherwise.
Similarly,
for $V'$ let us take the set of all votes from $V_1$ restricted to $C'$,
if $Y$ is upper-left or upper-right quarter of $X$,
and all votes from $V_2$ restricted to $C'$,
otherwise.
Then, $E'=(V',C')$ is also a balanced group-separable election,
compatible with one of the two main branches of the original tree of $E$.
Since $|C'|=2^{k-1}$, by the inductive assumption,
$Y$, is maximally-evenly-quartered.
Therefore, $X$ indeed is maximally-evenly-quartered.

\newcommand{\ul}{\textrm{ul}}
\newcommand{\ur}{\textrm{ul}}
\newcommand{\bl}{\textrm{ul}}
\newcommand{\br}{\textrm{ul}}
In the remainder of the proof,
let us show that if position matrix $X$ is maximally-evenly-quartered,
then there exists a balanced group-separable election realizing $X$.

We start by introducing some additional notation. For a vote~$v$ over
candidates~$D$ and a vote~$u$ over candidates~$D'$ such that $D \cap D' =
\emptyset$, by $v \circ u$ we denote the \emph{concatenation} of $v$ and $u$.
That is, the vote in which each candidate from~$D$ is preferred over each
candidate from~$D'$ and each pair of candidates from one of the sets~$D$ or~$D'$
is ordered in the same way as in vote $v$ or $u$, respectively.

We now present how to build a balanced group-separable election realizaing~$X$.
Denote
the upper-left, upper-right, bottom-left, bottom-right quarters of $X$
by $Y^{\ul}$, $Y^{\ur}$, $Y^{\bl}$, and $Y^{\br}$, respectively.
Since $X$ is maximally-evenly-quartered,
each of its quarter is also maximally-evenly-quartered.
Hence, from the inductive assumption,
for each quarter,
there exists a balanced group-separable election
with a binary tree in which candidates at consecutive leaves
correspond to consecutive columns in this quarter.
Let us denote such elections by
$E^{\ul} = (V^{\ul},C^{\ul})$,
$E^{\ur} = (V^{\ur},C^{\ur})$,
$E^{\bl} = (V^{\bl},C^{\bl})$,
and $E^{\br} = (V^{\br},C^{\br})$
for matrices $Y^{\ul}$, $Y^{\ur}$, $Y^{\bl}$, and $Y^{\br}$, respectively.
Since quarters $Y^{\ul}$ and $Y^{\bl}$
share the same columns in the original matrix $X$,
let us denote $C = C^{\ul} = C^{\bl}$ and, analogously,
$C' = C^{\ur} = C^{\br}$.

Now, observe that since $X$ is evenly-quartered,
we get that $|V^{\ul}| = |V^{\br}|$.
Let us then denote
$V^{\ul} = \{v^{\ul}_1,\dots,v^{\ul}_{n_1}\}$ and
$V^{\br} = \{v^{\br}_1,\dots,u^{\br}_{n_1}\}$.
Then, we construct the following set of votes:
$$V = \{ v^{\ul}_i \circ v^{\br}_i : i \in [n_1]\}.$$
Analogously, $|V^{\ur}| = |V^{\bl}|$.
Hence, let us denote
$V^{\ur} = \{v^{\ur}_1,\dots,v^{\ur}_{n_2}\}$ and
$V^{\bl} = \{v^{\bl}_1,\dots,u^{\bl}_{n_2}\}$.
Again, we construct the following set of votes:
$$V' = \{ v^{\ur}_i \circ v^{\bl}_i : i \in [n_2]\}.$$
Eventually, we construct election~$E = (V \cup V', C \cup C')$.
Observe that because of our construction,
the position matrix of~$E$ is equal to $X$.
Moreover,
every voter in $V$ prefers each candidate in $C$
over each candidate in $C'$.
Similarly,
every voter in $V'$ prefers each candidate in $C'$
over each candidate in $C$.
Hence,
since $E^{\ul}$, $E^{\ur}$, $E^{\bl}$, and $E^{\br}$
are all balanced group-separable elections,
$E$ is also balanced group-separable and the thesis follows.

Finally,
let us consider the frequency matrix case.
Observe that if in a maximally-evenly-quartered matrix
we multiply each element by a constant,
the resulting matrix is still maximally-evenly-quartered.
Hence, we can generalize our maximally-evenly-quartered property
also to frequency matrices,
by requiring the same equalities to hold.
Since for every frequency matrix $X$ there exists a constant $n$
such that matrix~$X$ with each element multiplied by $n$ is a position matrix,
our proof holds also for frequency matrices.
\let\ul\undefined
\let\ur\undefined
\let\bl\undefined
\let\br\undefined
\end{proof}

Now, let us move to caterpillar group-separable domains.

\begin{proposition}\label{prop:caterpillar:fixed:freq:p}
    There is a polynomial-time algorithm
    that given a frequency matrix $X$
    and a caterpillar group-separable domain $\calD$,
    decides whether there is an election that realizes $X$,
    and whose all votes belong to $\calD$.
\end{proposition}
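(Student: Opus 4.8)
The plan is to exploit the rigid structure of caterpillar-compatible votes. Descending the spine of $\calT$, let $d_1,d_2,\dots,d_m$ be the order in which candidates are ``split off'', so that $\{d_k,d_{k+1},\dots,d_m\}$ is exactly the leaf set of the $k$-th spine node. A vote is compatible with $\calT$ if and only if it is produced by the following zig-zag process: starting from the block of all $m$ positions, for $k=1,\dots,m-2$ place $d_k$ either at the top or at the bottom of the not-yet-used block of positions, and finally place $d_{m-1}$ and $d_m$ into the two remaining positions in either order (a ``flip'' at a spine node makes the left/right placement of its leaf child irrelevant, so only the ``top vs.\ bottom'' bit matters). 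After permuting the columns of $X$ we may assume $d_k=c_k$. The crucial observation is that in such a vote the candidates $\{c_k,\dots,c_m\}$ always occupy a contiguous block, whose left endpoint $L_k$ satisfies $L_1=1$ and $L_{k+1}\in\{L_k,L_k+1\}$, and that the position of $c_k$ is determined by $L_k$ and the top/bottom bit: $c_k$ sits at position $L_k$ if ``top'' and at position $L_k+(m-k)$ if ``bottom'', while $c_m$ sits at position $L_m$.

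First I would build a layered directed acyclic network encoding this process: for each $k\in\{0,1,\dots,m-1\}$ and each $t\in\{0,\dots,k\}$ create a node $(k,t)$ meaning ``$c_1,\dots,c_k$ have been placed and the remaining block starts at position $t+1$''. From $(k-1,t)$ draw a ``top'' arc to $(k,t+1)$, labelled ``$c_k$ at position $t+1$'', and a ``bottom'' arc to $(k,t)$, labelled ``$c_k$ at position $t+1+(m-k)$''; the transition out of layer $m-2$ is handled specially, recording the positions of both $c_{m-1}$ and $c_m$. This network has $O(m^2)$ nodes and arcs, and each arc assigns, to each candidate it ``processes'', a unique intended position. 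A source-to-last-layer path is exactly a caterpillar-compatible vote, and a nonnegative arc flow of value $1$ obeying conservation at internal nodes is exactly a probability distribution over such votes (by flow decomposition, which in an acyclic network yields only $O(m^2)$ paths).

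Next I would write the feasibility linear program: find a nonnegative flow of value $1$ obeying conservation and, in addition, requiring that for every candidate $c_j$ and every position $i$, the total flow on the arcs labelled ``$c_j$ at position $i$'' equals $X_{i,j}$. Positions that $c_j$ can never occupy simply receive no arc, so the constraints $X_{i,j}=0$ for such pairs are enforced automatically. I would then verify the two directions of the equivalence: a caterpillar group-separable election realizing $X$, viewed as the uniform distribution over its multiset of votes, induces a feasible rational flow; conversely, a feasible flow decomposes into finitely many weighted compatible votes whose weighted position matrix is $X$, and clearing denominators turns this into an integral election realizing $X$ (here we genuinely use that $X$ is a \emph{frequency} matrix, so the LP relaxation is tight; the analogue for position matrices would demand an integral solution, which is exactly where hardness can enter). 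Since polynomial-size linear programs are solvable in polynomial time (equivalently, this is a feasible-flow computation), this yields the claimed algorithm.

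The main obstacle I anticipate is purely structural bookkeeping: correctly extracting the split order $d_1,\dots,d_m$ from $\calT$ irrespective of whether each spine node's leaf child lies on the left or the right, and treating the last two leaves $d_{m-1},d_m$ as a single combined transition so that their positions are recorded consistently. Once the network is laid out correctly, the equivalence with the LP and its polynomial-time solvability are routine.
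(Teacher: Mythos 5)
Your proposal is correct and is essentially the paper's own argument: the paper builds the same layered DAG (its nodes $S_{i,j}$ are your states $(k,t)$, and its LP variables $\ell_{i,j}, r_{i,j}$ are the flows on your ``top'' and ``bottom'' arcs), imposes the same marginal constraints $\ell_{i,j}+r_{i,j}=X_{i,j}$ together with flow conservation, and recovers an election from a feasible rational solution by flow decomposition and clearing denominators. The only differences are cosmetic bookkeeping, e.g.\ how the last two leaves of the caterpillar are handled.
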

\begin{proof}
We will show that deciding
whether for a given frequency matrix $X$
there exists a caterpillar group-separable election profile $\elct$
with a given tree realizing $X$
is equivalent to deciding whether a certain linear programming problem
has a solution.
Since the latter is known to be polynomial,
we will obtain the thesis.
Without loss of generality,
we focus on caterpillar elections
compatible with a caterpillar tree $\calT$
in which candidates at consecutive leaves
correspond to consecutive columns of matrix $X$
(otherwise the columns of~$X$ can be rearranged).

For every $k,\ell \in \mathbb{Z}$,
by $[k,\ell]$ let us denote set $\{k,k+1,\dots,\ell\}$.
Fix an arbitrary frequency matrix $X$. To define our linear program,
for each column $j \in [m]$ and row $i \in [m]$, we introduce
two rational variables $\ell_{i,j}$ and $r_{i,j}$.
Intuitively,
$\ell_{i,j}$ (or $r_{i,j}$) will be a fraction of votes in which the
$j$th candidate is ranked at position $i$
and it is the most preferred (or the least preferred)
candidate among candidates from the subtree of $\calT$
rooted in the parent of this candidate, i.e.,
all candidates from $j$th to the $m$th one.
Now, consider the following linear programming constraints
(since we are only interested in the existence of a solution,
we give no objective function):
\begin{align}[left=\empheqlbrace]
    \notag & \ell_{i,j} + r_{i,j} = x_{i,j} \\
    \label{eq:prop:caterpillar:fixed:freq:p:cond1}
    & \quad \quad \quad \quad \mbox{for every } j, i \in [m]; \\
    \notag & \ell_{i-1,j} + r_{i+m-j,j} = \ell_{i,j+1} + r_{i+m-j-1,j+1}; \\
    \label{eq:prop:caterpillar:fixed:freq:p:cond2}
    & \quad \quad \quad \quad \mbox{for every } j \in [m-1] \mbox{ and } i \in
    [-m,m]; \\
    \notag & \ell_{i,j}, r_{i,j} \ge 0, \\
    \label{eq:prop:caterpillar:fixed:freq:p:cond3}
    & \quad \quad \quad \quad \mbox{for every } j \in [m] \mbox{ and } i \in
    [m];
\end{align}
where $\ell_{i,j}=r_{i,j}=0$, for each $j \in [m]$ and $i \in \mathbb{Z} \setminus [m]$.
In what follows, we will show the following claim:
\begin{claim}
\label{claim:caterpillar:fixed:freq:p}
For a given frequency matrix $X$,
there exists a caterpillar group-separable election $\elct$ realizing $X$,
if and only if,
there exist rational variables $\ell_{i,j},r_{i,j}$, for every $i,j \in [m]$,
satisfying conditions~\eqref{eq:prop:caterpillar:fixed:freq:p:cond1}--\eqref{eq:prop:caterpillar:fixed:freq:p:cond3}
\end{claim}

We will consider candidates $\{c_1,\dots,c_m\}$
and assume that they appear in the natural order
in the caterpillar tree $\calT$
(see \cref{fig:prop:caterpillar:fixed:freq:p} for an illustration).

For every $j \in [0,m-1]$ and $i \in [1,j+1]$,
by $S_{i,j}(\elct)$ let us denote the subset of votes in $\elct$ such that
candidates $c_{j+1},\dots,c_{m}$ occupy positions $[i,i+m-j-1]$
(not necessarily in this order).
Formally,
\[
    S_{i,j}(\elct) = \{ v \in V: \forall_{k \in [j+1,m]} \pos_v(c_k) \!\in\! [i,i+m-j-1] \}.
\]
We will skip $\elct$ for brevity.
We also add also set $S_{\cdot,m} = V$ and denote the collection of all such sets by $\mathcal{S}$.

Now,
for a given caterpillar tree $\calT$,
we will construct a directed graph, $G(\calT)$,
with the nodes from $\mathcal{S}$
(here we do not treat them as sets,
but as classes of votes;
hence, although technically,
for some $\elct$
some of the sets in $\mathcal{S}$ can be equal,
for the purpose of our graph construction
we will distinguish each of them as a separate node).
We illustrate our construction in Figure~\ref{fig:prop:caterpillar:fixed:freq:p}.

For each $j \in [m-1]$ and $i \in [m]$,
let us denote the set of voters $L(\elct)_{i,j}$ such that
every voter in $L(\elct)_{i,j}$ ranks candidate $c_j$ at position $i$
and prefers $c_j$ over $c_{j+1}$.
Formally,
\[
    L(\elct)_{i,j}= \{ v \in V : \pos_v(c_j)=i, c_j \succ_v c_{j+1} \}.
\]
If $\elct$ is a caterpillar group-separable election, $j \in [m-2]$, and $i \in [j+1]$,
then $v \in L(\elct)_{i,j}$ prefers $c_j$ over $c_k$, for all $k \in [j+1,m]$.
Thus, candidates $c_{j+1},\dots,c_{m}$ occupy positions $[i+1,i+m-j-1]$.
Furthermore, candidates $c_{j},c_{j+1},\dots,c_m$ occupy positions $[i,i+m-j-1]$.
Conversely, both facts imply that $c_j$ is at position $i$ and preferred over $c_{j+1}$,
thus
\begin{equation}
    \label{eq:prop:caterpillar:fixed:freq:p:l}
    L(\elct)_{i,j} = S_{i,j-1} \cap S_{i+1,j},
\end{equation}
for each $j \in [m-2]$ and $i \in [j+1]$.
Hence, let us add to graph $G(\calT)$ an arc from $ S_{i,j-1}$ to  $S_{i+1,j}$ that
corresponds to set $L(\elct)_{i,j}$.

Analogously, for each $j \in [m-1]$ and $i \in [m]$,
let us denote the set of voters $R(\elct)_{i,j}$ such that
every voter in $R(\elct)_{i,j}$ ranks candidate $c_j$ at position $i$,
but prefers $c_{j+1}$ over $c_{j}$, i.e.,
\[
    R(\elct)_{i+m-j,j}= \{ v \in V : \pos_v(c_j)=i, c_{j+1} \succ_v c_j \}.
\]
If $\elct$ is a caterpillar group-separable election, $j \in [m-2]$, and $i \in [j+1]$,
then $v \in R(\elct)_{i+m-j,j}$ prefers $c_k$ over $c_j$ for all $k \in [j+1,m]$.
Hence, candidates $c_{j+1},\dots,c_{m}$ occupy position $[i,i+m-j-1]$ and
candidates $c_{j},c_{j+1},\dots,c_m$ positions $[i,i+m-j]$.
Since, conversely, both facts imply that $c_j$ is at position $i+m-j$ and $c_{j+1}$ preferred over it,
we get that
\begin{equation}
    \label{eq:prop:caterpillar:fixed:freq:p:r}
    R(\elct)_{i+m-j,j} = S_{i,j-1} \cap S_{i,j},
\end{equation}
for each $j \in [m-2]$, and $i \in [j+1]$.
Thus, let us add to graph $G(\calT)$ an arc from $ S_{i,j-1}$ to  $S_{i,j}$ that
corresponds to set $R(\elct)_{i+m-j,j}$.

In sets $S_{1,m-1},\dots,S_{m,m-1}$,
the position of candidate $c_m$ is uniquely determined.
Hence, let us add to $G$ and arc from each of these sets to $S_{\cdot,m}$.

If $\elct$ is a caterpillar group-separable election,
then the fact that in vote $v$ candidates $c_j,\dots,c_m$ occupy positions $[i,i+m-j-1]$,
implies that candidate $c_j$ has to be either at position $i$ (and preferred over $c_{j+1}$)
or at position $i+m-j-1$ (and $c_{j+1}$ is preferred over it).
Hence,
\(
    S_{i,j} \subseteq L(\elct)_{i,j+1} \cup R(\elct)_{i+m-j-1,j+1}. 
\)
On the other hand,
from equations~\eqref{eq:prop:caterpillar:fixed:freq:p:l}
and~\eqref{eq:prop:caterpillar:fixed:freq:p:r}
we get that
\(
    S_{i,j} \supseteq L(\elct)_{i,j+1} \cup R(\elct)_{i+m-j-1,j+1}.
\)
Thus,
\begin{equation}
\label{eq:prop:caterpillar:fixed:freq:p:eq1}
    S_{i,j} = L(\elct)_{i,j+1} \cup R(\elct)_{i+m-j-1,j+1},    
\end{equation}
for every $j \in [0,m-1]$ and $i \in [1,j+1]$.
Moreover,
in such vote $v$,
candidate $c_{j-1}$ has to be either at position $i-1$ (and preferred over $c_{j}$)
or at position $i+m-j$ (and $c_{j}$ is preferred over it).
Thus, analogously we get that
\begin{equation}
\label{eq:prop:caterpillar:fixed:freq:p:eq2}
    S_{i,j} = L(\elct)_{i-1,j} \cup R(\elct)_{i+m-j,j},
\end{equation}
for every $j \in [1,m-1]$ and $i \in [1,j+1]$.

Building upon our construction of graph $G(\calT)$,
let us prove~\cref{claim:caterpillar:fixed:freq:p}.
Let us start by showing that if $X$ is a frequency matrix
realizable by some caterpillar group-separable election $\elct=(C,V)$,
then there exist rational constants $\ell_{i,j},r_{i,j}$, for every $i,j \in [m]$,
satisfying conditions~\eqref{eq:prop:caterpillar:fixed:freq:p:cond1}--\eqref{eq:prop:caterpillar:fixed:freq:p:cond3}.
To this end,
let us set $\ell_{i,j} = |L(\elct)_{i,j}|/|V|$ and $r_{i,j} = |R(\elct)_{i,j}|/|V|$,
for every $j \in [m-1]$ and $i \in [m]$, and
$\ell_{i,m} = x_{i,m}$ and $r_{i,m} = 0$,
for every $i \in [m]$.
Observe that all defined variables are indeed rational.

Now, observe that for every $j \in [m-1]$, in every $v \in V$,
candidate $c_j$ is either preferred to $c_{j+1}$ or
$c_{j+1}$ is preferred over $c_j$.
This means that $L(\elct)_{i,j}$ and $R(\elct)_{i,j}$ are disjoint
and $L(\elct)_{i,j} \cup R(\elct)_{i,j}$ is a set of exactly all votes in
which $c_j$ is ranked at position $i$.
Hence, we get that
\(
    \ell_{i,j} + r_{i,j} = x_{i,j},
\)
for every $j \in [m-1]$ and $i \in [m]$.
For $j=m$, we obtain the same equation
directly from our choice of $\ell_{i,m}$ and $r_{i,m}$.
Hence, condition~\eqref{eq:prop:caterpillar:fixed:freq:p:cond1} is satisfied.

Next, let us fix $j \in [m-2]$.
From equations~\eqref{eq:prop:caterpillar:fixed:freq:p:eq1}
and~\eqref{eq:prop:caterpillar:fixed:freq:p:eq2},
we get that
\[
    L(\elct)_{i-1,j} \cup R(\elct)_{i+m-j,j} = L(\elct)_{i,j+1} \cup R(\elct)_{i+m-j-1,j+1}.
\]
Since $i < i + m - j -1$,
sets $L(\elct)_{i,j+1}$ and $R(\elct)_{i+m-j-1,j+1}$
are disjoint.
The same is true for sets $L(\elct)_{i-1,j}$ and $R(\elct)_{i+m-j,j}$.
Thus, we obtain
\begin{multline*}
    |L(\elct)_{i-1,j}| + |R(\elct)_{i+m-j,j}| = \\
    |L(\elct)_{i,j+1}| + |R(\elct)_{i+m-j-1,j+1}|.
\end{multline*}
Dividing both sides by $|V|$ yields
\[
    \ell_{i-1,j} + r_{i+m-j,j} = \ell_{i,j+1} + r_{i+m-j-1,j+1}.
\]
For $j = m-1$, observe that we have
\[
     |L(\elct)_{i-1,m-1}| + |R(\elct)_{i+1,m-1}| = |S_{i,m-1}|,
\]
for every $i \in [m]$.
Now, observe that set $S_{i,m-1}$ is a set of exactly these votes
in which candidate $c_m$ is at position $i$.
Thus, we get
\[
     \ell_{i-1,m-1} + r_{i+1,m-1} = x_{i,m}.
\]
And since we took $\ell_{i,m} = x_{i,m}$ and $r_{i,m} = 0$, we get
\[
     \ell_{i-1,m-1} + r_{i+1,m-1} = \ell_{i,m} + r_{i,m}.
\]
Hence, condition~\eqref{eq:prop:caterpillar:fixed:freq:p:cond2} is satisfied.

Finally, since the cardinality of sets is always nonnegative,
condition~\eqref{eq:prop:caterpillar:fixed:freq:p:cond3} is also satisfied.

\begin{figure}[t]
    \centering
    \begin{tikzpicture}
      \tikzset{
        node/.style={circle,draw,minimum size=0.65cm,inner sep=0, fill = black!05, font=\scriptsize},
        arc/.style={->,draw,thick,-latex,above,font=\scriptsize},
        blank/.style={minimum size=0.001cm},
        edge/.style={-,draw,thick}
      }
      \def\sx{0.4cm} 
      \def\sy{0.6cm} 
      \def\x{2cm}
      \def\y{1cm}
      \node[blank] (T1) at (\x + 0*\sx, 0*\sy + \y) {};
      \node[blank] (T2) at (\x + 1*\sx, -1*\sy + \y) {};
      \node[blank] (T3) at (\x + 2*\sx, -2*\sy + \y) {};
      \node[blank] (T4) at (\x + 3*\sx, -3*\sy + \y) {};
      \node[blank] (Ta) at (\x + -1*\sx, -1*\sy + \y) {$c_1$};
      \node[blank] (Tb) at (\x + 0*\sx, -2*\sy + \y) {$c_2$};
      \node[blank] (Tc) at (\x + 1*\sx, -3*\sy + \y) {$c_3$};
      \node[blank] (Td) at (\x + 2*\sx, -4*\sy + \y) {$c_4$};
      \node[blank] (Te) at (\x + 4*\sx, -4*\sy + \y) {$c_5$};
      \node[blank] (T_name) at (\x + 2.5*\sx, -1*\sy + \y) {$\mathcal{T}$};
      
      \path[edge]
      (T1) edge (T2)
      (T2) edge (T3)
      (T3) edge (T4)
      (T1) edge (Ta)
      (T2) edge (Tb)
      (T3) edge (Tc)
      (T4) edge (Td)
      (T4) edge (Te)
      ;

      \def\sx{1.6cm} 
      \def\sy{1.2cm} 
      \def\x{0cm}
      \def\y{0cm}
      \node[node] (10) at (\x + 0*\sx, 0*\sy + \y) {$S_{1,0}$};
      \node[node] (11) at (\x + -0.5*\sx, -1*\sy + \y) {$S_{2,1}$};
      \node[node] (21) at (\x +  0.5*\sx, -1*\sy + \y) {$S_{1,1}$};
      \node[node] (12) at (\x + -1*\sx, -2*\sy + \y) {$S_{3,2}$};
      \node[node] (22) at (\x +  0*\sx, -2*\sy + \y) {$S_{2,2}$};
      \node[node] (32) at (\x +  1*\sx, -2*\sy + \y) {$S_{1,2}$};
      \node[node] (13) at (\x + -1.5*\sx, -3*\sy + \y) {$S_{4,3}$};
      \node[node] (23) at (\x + -0.5*\sx, -3*\sy + \y) {$S_{3,3}$};
      \node[node] (33) at (\x +  0.5*\sx, -3*\sy + \y) {$S_{2,3}$};
      \node[node] (43) at (\x +  1.5*\sx, -3*\sy + \y) {$S_{1,3}$};
      \node[node] (14) at (\x + -2*\sx, -4*\sy + \y) {$S_{5,4}$};
      \node[node] (24) at (\x + -1*\sx, -4*\sy + \y) {$S_{4,4}$};
      \node[node] (34) at (\x +  0*\sx, -4*\sy + \y) {$S_{3,4}$};
      \node[node] (44) at (\x +  1*\sx, -4*\sy + \y) {$S_{2,4}$};
      \node[node] (54) at (\x +  2*\sx, -4*\sy + \y) {$S_{1,4}$};
      \node[node] (_5) at (\x +  0*\sx, -5*\sy + \y) {$S_{\cdot,m}$};
      \node[blank] (G_name) at (\x + -1.5*\sx, -1*\sy + \y) {$G(\calT)$};
      
      \path[arc]
      (10) edge node[left = 0.17cm,pos = 0.8,anchor=south] {$\ell_{1,1}$} (11)
      (10) edge node[left = 0.45cm,pos = 0.8,anchor=south] {$r_{5,1}$} (21)
      (11) edge node[left = 0.17cm,pos = 0.8,anchor=south] {$\ell_{2,2}$} (12)
      (11) edge node[left = 0.45cm,pos = 0.8,anchor=south] {$r_{5,2}$} (22)
      (21) edge node[left = 0.17cm,pos = 0.8,anchor=south] {$\ell_{1,2}$} (22)
      (21) edge node[left = 0.45cm,pos = 0.8,anchor=south] {$r_{4,2}$} (32)
      (12) edge node[left = 0.17cm,pos = 0.8,anchor=south] {$\ell_{3,3}$} (13)
      (12) edge node[left = 0.45cm,pos = 0.8,anchor=south] {$r_{5,3}$} (23)
      (22) edge node[left = 0.17cm,pos = 0.8,anchor=south] {$\ell_{2,3}$} (23)
      (22) edge node[left = 0.45cm,pos = 0.8,anchor=south] {$r_{4,3}$} (33)
      (32) edge node[left = 0.17cm,pos = 0.8,anchor=south] {$\ell_{1,3}$} (33)
      (32) edge node[left = 0.45cm,pos = 0.8,anchor=south] {$r_{3,3}$} (43)
      (13) edge node[left = 0.17cm,pos = 0.8,anchor=south] {$\ell_{4,4}$} (14)
      (13) edge node[left = 0.45cm,pos = 0.8,anchor=south] {$r_{5,4}$} (24)
      (23) edge node[left = 0.17cm,pos = 0.8,anchor=south] {$\ell_{3,4}$} (24)
      (23) edge node[left = 0.45cm,pos = 0.8,anchor=south] {$r_{4,4}$} (34)
      (33) edge node[left = 0.17cm,pos = 0.8,anchor=south] {$\ell_{2,4}$} (34)
      (33) edge node[left = 0.45cm,pos = 0.8,anchor=south] {$r_{3,4}$} (44)
      (43) edge node[left = 0.17cm,pos = 0.8,anchor=south] {$\ell_{1,4}$} (44)
      (43) edge node[left = 0.45cm,pos = 0.8,anchor=south] {$r_{2,4}$} (54)
      (14) edge[bend left =-20] node[left = 0.8cm,pos = 0.38,anchor=south] {$x_{5,5}$} (_5)
      (24) edge[bend left =-10] node[left = 0.65cm,pos = 0.7,anchor=south] {$x_{4,5}$} (_5)
      (34) edge[bend left =  0] node[left = 0.3cm,pos = 0.95,anchor=south] {$x_{3,5}$} (_5)
      (44) edge[bend left = 10] node[left = 0.10cm,pos = 0.7,anchor=south] {$x_{2,5}$} (_5)
      (54) edge[bend left = 20] node[left = 0.05cm,pos = 0.4,anchor=south] {$x_{1,5}$} (_5)
      ;
    \end{tikzpicture}
    \caption{A tree of a caterpillar group-separable election with 5 candidates ($\mathcal{T}$)
    and a graph constructed for such election as in proof of \cref{prop:caterpillar:fixed:freq:p} ($G$).}
    \label{fig:prop:caterpillar:fixed:freq:p}
\end{figure}
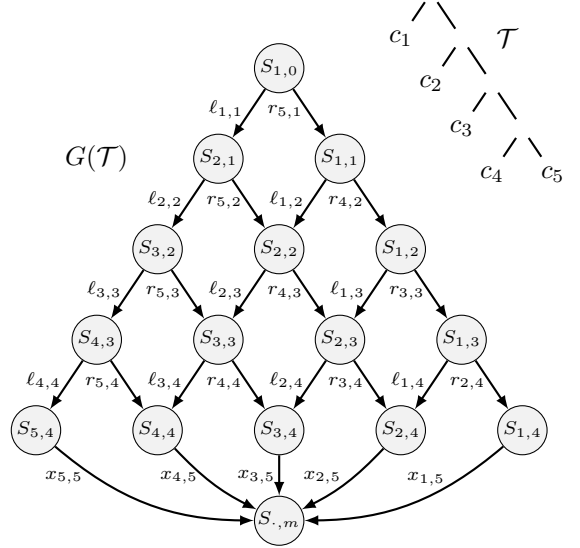

In the remainder of the proof,
let us show that if for a given frequency matrix $X$
there exist rational constants $\ell_{i,j},r_{i,j}$, for every $i,j \in [m]$,
satisfying conditions~\eqref{eq:prop:caterpillar:fixed:freq:p:cond1}--\eqref{eq:prop:caterpillar:fixed:freq:p:cond3},
then there exists a caterpillar group-separable election $\elct$
that realizes $X$.

To this end,
observe that for every $j \in [0,m-1]$ and $i \in [j+1]$,
each edge outgoing from $S_{i,j}$ in graph $G(\calT)$
corresponds to placement of candidate $c_j$ at particular position.
Hence, each path from $S_{1,0}$ to $S_{\cdot,m}$ in graph $G(\calT)$ 
corresponds to placements of all candidates $c_1,\dots,c_m$, i.e.,
a vote.
Moreover, each such vote can be present in a caterpillar group-separable election
compatible with tree $\calT$.
Thus, if all votes in an election, $\elct$, would be obtained in this way,
then $\elct$ would be a caterpillar group-separable election compatible with $\calT$.
We will show that it is possible,
by considering flows from $S_{1,0}$ to $S_{\cdot,m}$ on graph $G(\calT)$.


Since for every $i,j \in [m]$ constants $\ell_{i,j}$ and $r_{i,j}$ are rational,
there exists $n \in \mathbb{N}$ such that $\ell_{i,j} \cdot n$ and $r_{i,j} \cdot n$ are
integers for every $i,j \in [m]$.
Let us define \emph{capacity} function, $c : E(G(\calT)) \rightarrow \mathbb{R}_{\ge 0}$,
that for each arc returns the maximal number of flows that can go through this arc.
Specifically, we set
\begin{align*}[left=\empheqlbrace]
    & c(S_{i,j},S_{i+1,j+1}) = \ell_{i,j+1} \cdot n, \\
    & \quad \quad \quad \quad \mbox{for all } j \in [0,m-2], i \in [1,j+1], \\
    & c(S_{i,j},S_{i,j+1}) = r_{i + m - j - 1,j+1} \cdot n, \\
    & \quad \quad \quad \quad \mbox{for all } j \in [0,m-2], i \in [1,j+1], \\
    & c(S_{i,m-1}) = x_{i,m} \cdot n, \\
    & \quad \quad \quad \quad \mbox{for all } i \in [m]
\end{align*}
(see \cref{fig:prop:caterpillar:fixed:freq:p} for an illustration).

Observe that the sum of capacities of arcs incoming to $S_{\cdot,m}$
is equal to $n$.
Condition~\eqref{eq:prop:caterpillar:fixed:freq:p:cond2} ensures that
the summed capacity of arcs incoming to $S_{i,j}$ is equal to
the summed capacity of arcs outgoing from $S_{i,j}$,
for every $j \in [m-1]$ and $i \in [j+1]$.
Therefore, there exist a (possibly multi-) set of $n$ paths on $G$ starting in
$S_{1,0}$ and ending in $S_{\cdot,m}$,
such that each edge $e \in E(G(\calT))$ is traversed by exactly $c(e)$ paths.
Each such path can be identified with a particular preference order of candidates in $C$.
Let us denote all $n$ of these preference orders by $V$.
Then election $\elct = (C,V)$ is a caterpillar group-separable election
compatible with $\calT$.
Furthermore, from condition~\eqref{eq:prop:caterpillar:fixed:freq:p:cond1}
we know that $\elct$ realizes matrix $X$,
which concludes the proof.
\end{proof}

Due to a certain relation between caterpillar group-separable and
single-peaked elections, the above results immediately implies the
next one (in short, a position matrix of a single-peaked election is a
transposition of a position matrix of a caterpillar group-separable
one).

\begin{proposition}\label{prop:single-peaked:fixed:freq:p}
    There is a polynomial-time algorithm
    that given a frequency matrix $X$
    and a single-peaked domain $\calD$,
    decides whether there is an election that realizes $X$,
    and whose all votes belong to $\calD$.
\end{proposition}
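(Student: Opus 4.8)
The plan is to reduce this case to \cref{prop:caterpillar:fixed:freq:p} via a ``duality'' that interchanges the roles of candidates and positions. Fix the given axis and write it as $a_1 \triangleright a_2 \triangleright \cdots \triangleright a_m$; by permuting columns we may assume that the $j$-th column of $X$ corresponds to $a_j$. For a vote $v$ over $\{a_1,\dots,a_m\}$ let $\pos_v(j)$ denote the position of $a_j$ in $v$, and define the \emph{dual vote} $v^{\dagger}$ over a fresh candidate set $\{b_1,\dots,b_m\}$ (one candidate $b_i$ per position $i$) by $b_i \pref_{v^{\dagger}} b_{i'}$ if and only if $\pos_v^{-1}(i) < \pos_v^{-1}(i')$; equivalently, $v \mapsto v^{\dagger}$ inverts the underlying permutation and renames axis-indices as position-names, so it is a bijection on the set of all linear orders over $m$ elements. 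The first step is to show that this map restricts to a bijection between the votes single-peaked with respect to $\triangleright$ and the votes compatible with the caterpillar tree $\calT_{\triangleright}$ whose leaves, listed along the spine from the root down to the bottom, are $b_m, b_{m-1}, \dots, b_2, b_1$. This rests on the observation that a vote is single-peaked with respect to $\triangleright$ exactly when, scanning positions from $m$ up to $1$, each newly encountered candidate is a left or right endpoint of the still-unranked contiguous block of the axis; dually, a vote is compatible with $\calT_{\triangleright}$ exactly when, processing the spine leaves $b_m, b_{m-1}, \dots$ in order, each $b_i$ is placed at the top or at the bottom of the still-unoccupied contiguous block of positions. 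Under $v \mapsto v^{\dagger}$ the former condition on $v$ is literally the latter condition on $v^{\dagger}$, which gives the bijection (surjectivity comes for free since the map is already a bijection on all linear orders).

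The second step is a bookkeeping identity: if $\elct$ is an election over $\{a_1,\dots,a_m\}$ with position (or frequency) matrix $X$, then the election $\elct^{\dagger}$ obtained by replacing each vote $v$ with $v^{\dagger}$ has matrix $X^{\top}$, because the number of voters that rank $a_j$ at position $i$ equals the number of voters that rank $b_i$ at position $j$. Combining this with the bijection of the first step, $X$ is realizable by an election single-peaked with respect to $\triangleright$ if and only if $X^{\top}$ is realizable by an election compatible with $\calT_{\triangleright}$. Since the transpose of a bistochastic matrix is again bistochastic, the algorithm is then: construct $\calT_{\triangleright}$ from the axis in linear time, form $X^{\top}$, and run the polynomial-time algorithm of \cref{prop:caterpillar:fixed:freq:p} on the input $(X^{\top}, \calT_{\triangleright})$ (that algorithm already tolerates an arbitrary ordering of the columns relative to the tree), returning its answer.

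I expect the only real obstacle to be the first step, namely pinning down precisely which caterpillar tree a given axis corresponds to and verifying carefully that ``swapping candidates and positions'' sends single-peakedness to caterpillar group-separability and back. Once the two incremental constructions --- ``peel off the bottom position'' for single-peaked votes and ``peel off the next spine leaf'' for caterpillar votes --- are written side by side, their equivalence under the dualization $v \mapsto v^{\dagger}$ is routine. The matrix-transpose identity and the final reduction are then immediate.
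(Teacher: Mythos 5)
Your proposal is correct and follows essentially the same route as the paper: the paper's proof also transposes $X$ and invokes the caterpillar algorithm of \cref{prop:caterpillar:fixed:freq:p}, relying on the candidate/position duality between single-peaked votes for a given axis and votes compatible with the corresponding caterpillar tree. The only difference is that the paper cites prior work of Boehmer et al.\ for this correspondence, whereas you prove it directly (and correctly, including the right orientation of the spine) via the ``peel off the bottom position'' versus ``peel off the next spine leaf'' comparison.
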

\begin{proof}
\citet{boe-bre-elk-fal-szu:c:frequency-matrices} have shown
a one-to-one correspondence between
single-peaked elections with a given societal axis
and caterpillar group-separable elections compatible with a given tree.
Moreover, a transposition frequency matrix of a single-peaked election
is the frequency matrix of the corresponding caterpillar group-separable election.
Therefore,
to check whether a frequency matrix $X$
can be realized by a single-peaked election,
we can transpose it
and use an algorithm described in \cref{prop:caterpillar:fixed:freq:p}
to check if it is realizable by the corresponding caterpillar group-separable election.
\end{proof}

\crefrange{prop:input-given-p}{prop:single-peaked:fixed:freq:p}
combined imply the claim of the theorem, which concludes the proof.
\end{proof}

\subsection[Proof of Theorem 4]{Proof of \cref{thm:balanced}}

\balanced*
\begin{proof}
The algorithm for checking if a given position matrix can be realized
by a balanced group-separable election $\elct$
is given as \cref{alg:realizability_by_balanced}.
The algorithm is based on the condition from the proof of \cref{prop:balanced:fixed:p}
that a frequency matrix $X$
can be realized by a balanced group-separable election
compatible with a tree in which candidates on consecutive leaves
correspond to consecutive columns of $X$,
if and only if,
$X$ is maximally-evenly-quartered
(see the proof of \cref{prop:balanced:fixed:p} for the definition
of a maximally-evenly-quartered matrix).
The algorithm checks if it is possible
to change the order of columns of $X$ in such a way
that it is maximally-evenly-quartered.
Throughout this proof,
we will identify candidates with the columns of
matrix $X$, i.e., we set $C = [m]$.

Let us first describe the algorithm
and then prove its correctness. 
The algorithm works recursively.
We start by checking the borderline case in which
matrix $X$ is a one-element matrix (lines \ref{alg:trivbeg}--\ref{alg:trivend}).
If it is so,
then we return True,
as $X$ can be realized by a trivial
election with one candidate; clearly, such election is balanced group-separable.
If $X$ is $m \times m$ matrix for $m = 2^k$, where $k>0$,
we list all possible pairs of candidates
in the PossibleSiblings set (lines~\ref{alg:defpossiblingsbeg}--\ref{alg:defpossiblingsend}).
In this set,
we store the pairs of candidates
that can end up as sibling leaves in the 
tree of balanced group-separable election
that realizes $X$.
If two candidates, $j$ and $j'$,
are sibling leaves in the 
tree of balanced group-separable election,
then for every $i \in [m/2]$ it must hold
that
\begin{equation}
    \label{eq:prop:balanced:p}
    x_{2i,j}=x_{2i-1,j'}
    \quad \mbox{and} \quad
    x_{2i,j'}=x_{2i-1,j}.
\end{equation}
This is because, matrix
\[
    Q_{j,j'} = \left[
    \begin{array}{cc}
       x_{2i-1,j}  & x_{2i-1,j'} \\
       x_{2i,j}  & x_{2i,j'}
    \end{array}
    \right]
\]
can be obtained by taking the quarters from $X$
when we reorder its columns
in the order of leaves in the candidate tree.
Hence, if such reordered $X$ is maximally-evenly-quartered,
then matrix $Q_{j,j'}$ has to be evenly-quartered,
which is equivalent to equation~\eqref{eq:prop:balanced:p}.
Thus, for every pair of candidates and every $i \in [m/2]$
we check if equation~\eqref{eq:prop:balanced:p} holds.
If for some pair $\{j,j'\} \in C$
and some $i \in [m/2]$ it does not,
we remove this pair from PossibleSiblings (lines~\ref{alg:processsiblingsbeg}--\ref{alg:processsiblingsend}).

Next, we check if there is a perfect matching in the graph of candidates
with the set of edges that is the final PossibleSiblings set.
If the perfect matching exists,
we store it in list $M$ (lines \ref{alg:pmsearchbeg}--\ref{alg:pmsearchend}).
A perfect matching can be found greedily,
i.e., we can match each candidate to
the first unmatched candidate it is connected to.
To see why,
observe that if candidate~$j$ is connected to
two distinct candidates $c$ and $c'$, i.e.,
$\{j,c\}, \{j,c'\} \in$ PossibleSiblings,
then equation~\eqref{eq:prop:balanced:p}
implies that $x_{i,c}=x_{i,c'}$,
for every $i \in [m]$.
Hence, columns~$c$ and~$c'$ are identical and
matching $j$ with $c$ or with $c'$ is equivalent.

If there is no perfect matching,
this means that it is impossible to pair candidates to obtain
the lowest level of binary tree of candidates.
Hence, we return False (lines~\ref{alg:nosolutionbeg}--\ref{alg:nosolutionbeg}).

If there is a perfect matching of candidates that can be siblings,
then we construct $m/2 \times m/2$ matrix $Y$ in which 
each column $j$ corresponds to, each time different,
matching $\{c,c'\} = M[j]$, in an arbitrary order
(lines~\ref{alg:pmprocessbeg}--\ref{alg:pmprocessend}).
For each $i,j \in [m/2]$ and $\{c,c'\} = M[j]$,
we set entry $Y[i,j]$ of matrix $Y$ to
the sum of entries in matrix $Q_{c,c'}$ divided by 2,
which, by equation~\eqref{eq:prop:balanced:p}, is equal to
\[
  x_{2i,c} + x_{2i,c'}.
\]
This way, matrix $Y$ is also a position matrix.

Finally, we repeat the algorithm, this time for matrix $Y$
(line~\ref{alg:repeatfory}).

\renewcommand{\algorithmicrequire}{\textbf{Input:}}
\renewcommand{\algorithmicensure}{\textbf{Output:}}
\begin{algorithm}[t]
\caption{RealizabilityByBalanced}\label{alg:realizability_by_balanced}
\begin{algorithmic}[1]
\REQUIRE $m \times m$ position matrix $X$, where $m = 2^k$ for some $k \in \mathbb{N}$
\ENSURE Does there exist a balanced group-separable election, $\elct$, that realizes $X$?
\IF {$m=1$} \label{alg:trivbeg}
  \RETURN True
\ENDIF \label{alg:trivend}
\STATE $C\gets [m]$ \label{alg:defpossiblingsbeg}
\STATE PossibleSiblings $\gets \{ S \subseteq C: |S| = 2\}$
\label{alg:defpossiblingsend}
\FOR {$i$ \textbf{in} $[m/2]$} \label{alg:processsiblingsbeg}
    \FOR {$\{j,j'\}$ \textbf{in} PossibleSiblings}
        \IF {\textbf{not}($x_{2i,j}=x_{2i-1,j'}$ \textbf{and} $x_{2i-1,j}=x_{2i,j'}$)}
            \STATE PossibleSiblings$\gets $PossibleSiblings$ \setminus \{\{j,j'\}\}$
        \ENDIF
    \ENDFOR
\ENDFOR \label{alg:processsiblingsend}
\IF {HasPerfectMatching$(C,$PossibleSiblings$)$} \label{alg:pmsearchbeg}
    \STATE $M \gets$ PerfectMatching$(C,$PossibleSiblings$)$
    \label{alg:pmsearchend}
    \STATE $Y \gets m/2 \times m/2$ matrix \label{alg:pmprocessbeg}
    \FOR {$i$ \textbf{in} $[m/2]$}
        \FOR {$j$ \textbf{in} $[m/2]$}
            \STATE $\{c,c'\} \gets M[j]$
            \STATE $Y[i,j] \gets x_{2i,c}+x_{2i,c'}$
        \ENDFOR
    \ENDFOR \label{alg:pmprocessend}
    \RETURN RealizabilityByBalanced$(Y)$\label{alg:repeatfory}
\ELSE \label{alg:nosolutionbeg}
    \RETURN False \label{alg:nosolutionend}
\ENDIF
\end{algorithmic}
\end{algorithm}

Now, let us prove the correctness of our algorithm.
To this end, we will follow the induction by $k$,
i.e., the binary logarithm of the number of candidates.
If $k=0$, then our algorithm always returns True,
and matrix $X$ is always realizable by a trivial election
with one candidate,
hence the thesis holds.
Let us then assume that our thesis holds for all
$2^k \times 2^k$ position matrices
and let us consider an arbitrary
$2^{k+1} \times 2^{k+1}$ position matrix $X$.

First, let us show that if there exists
a balanced group-separable election $\elct = (C,V)$ that realizes $X$,
then our algorithm returns True.
Let $\calT$ be a binary tree with which $\elct$ is compatible.
Observe that the result of \cref{alg:realizability_by_balanced}
does not depend on the order of the columns of the input position matrix.
Thus, without loss of generality, we can assume that
the order of the candidates at the leaves of $\calT$
aligns with the order of corresponding columns in matrix $X$
(otherwise the columns of $X$ can be reordered).
Let us denote the candidates in $\elct$ as $C = [m]$,
where the order $1,\dots,m$ is the order
in which they appear at the leaves of $\calT$.
Then, our algorithm will assign to $M$
a perfect matching of candidates
that is $(\{1,2\},\{3,4\},\dots,\{m-1,m\})$
(or equivalent if some columns in $X$ are identical).
Consequently, matrix $Y$ will be a $2^k \times 2^k$ matrix
in which every entry is equal to the sum of four neighboring
elements of matrix $X$ divided by 2, i.e.,
\[
    y_{i,j} = x_{2i,2j} + x_{2i,2j-1},
\]
for each $i,j \in [m/2]$.
Now, the outcome of our algorithm for matrix $X$
will be the same as the outcome for matrix $Y$.
Hence, we have to show that the outcome for $Y$ is True.
By the inductive assumption, it suffices to
construct balanced group-separable election that
realizes $Y$.
To this end, let us denote $C' = \{\{1,2\},\{3,4\},\dots,\{m-1,m\}\}$,
i.e., the candidates $C'$ are matched pairs of candidates from $C$.
Since $\elct$ is a balanced group-separable election,
for every voter $v \in V$ and
two distinct pairs of candidates $\{j,j+1\},\{j',j'+1\} \in C'$,
it holds that either $v$ prefers both $j$ and $j+1$ over both $j'$ and $j'+1$
or the converse is true, i.e.,
$v$ prefers both $j'$ and $j'+1$ over both $j$ and $j+1$.
Thus, we can define an \emph{aggregated vote} $v$, denoted by $f(v)$,
as a preference order on $C'$ in which
\[
    \begin{cases}
        \{j,j+1\} \succ_{f(v)} \{j',j'+1\}, & \mbox{if } j \succ_v j', \\
        \{j',j'+1\} \succ_{f(v)} \{j,j + 1\}, & \mbox{if } j' \succ_v j,
    \end{cases}
\]
for every $\{j,j+1\},\{j',j'+1\} \in C'$.
Furthermore, let
\[
    V' = \{ f(v) : v \in V \}.
\]
Then, election $\elct' = (C',V')$ is also a balanced group-separable election
compatible with tree $\calT$ with its leaves removed.
Observe that for every $i,j \in [m/2]$
candidate $\{2j-1,2j\}$ is ranked at position $i$
by voter $f(v) \in V'$,
if and only if,
either candidate $2j-1$ or candidate $2j$ is ranked at position $2i$
by voter $v$.
Hence, election $\elct'$ realizes matrix $Y$.
Therefore, by the inductive assumption,
our algorithm returns True for~$Y$,
which means that it also returns True for~$X$.

Finally, let us prove that if our algorithm returns True,
then there exists
a balanced group-separable election that realizes $X$.
Since we return True and $k>0$,
there exists a perfect matching~$M$ of candidates
such that for every $\{c,c'\} \in M$ and every $i \in [m/2]$
equation~\eqref{eq:prop:balanced:p} holds.
Without loss of generality,
let us assume that $M = \{\{1,2\},\{3,4\},\dots,\{m-1,m\}\}$
(otherwise, we can reorder the columns of matrix $X$).
Thus, we can construct matrix $Y$ as in the algorithm.
Then, from the inductive assumption
we know that there exists
a balanced group-separable election,
$\elct' = (C',V')$,
that realizes $Y$
in which $C' = M$.

Now, based on election $\elct'$,
let us construct election $\elct'' = (C,V'')$
in which we exchange each candidate $\{c,c'\}$
for a pair for candidates $c,c'$.
Specifically,
for a preference order $v$ on candidates $C'$,
let us define a \emph{disaggregated vote} $v$, denoted by $g(v)$,
as a preference order on $C$ in which
\(
    c \succ_{g(v)} d,
\)
for every $c,d \in C$ such that $\{c,d\} \not \in C'$ and $\{c,c'\} \succ_v \{d,d'\}$ for some $c',d' \in C$,
and $2j-1 \succ_{g(v)} 2j$ for every $j \in [m/2]$.
Then, we set 
\(
    V'' = \{ g(v) : v \in V' \}.
\)
Observe that election $\elct''$ is still
balanced group-separable election and it
realizes position matrix
\begin{small}
\[
\arraycolsep=1.4pt\def\arraystretch{1.4}
    Y' = \left[
    \begin{array}{ccccccc}
       y_{1,1} & 0  &  y_{1,2} & 0  &       & y_{1,\frac{m}{2}} & 0 \\
       0 & y_{1,1}  &  0 & y_{1,2}  &\cdots & 0 & y_{1,\frac{m}{2}} \\
       y_{2,1} & 0  &  y_{2,2} & 0  &       & y_{2,\frac{m}{2}} & 0 \\
       0 & y_{2,1}  &  0 & y_{2,2}  &       & 0 & y_{2,\frac{m}{2}} \\
         &  \vdots  &    &          &\ddots &   & \vdots    \\
       y_{\frac{m}{2},1} & 0  &  y_{\frac{m}{2},2} & 0  &       & y_{\frac{m}{2},\frac{m}{2}} & 0 \\
       0 & y_{\frac{m}{2},1}  &  0 & y_{\frac{m}{2},2}  &\cdots & 0 & y_{\frac{m}{2},\frac{m}{2}}
    \end{array}
    \right].
\]
\end{small}

In what follows,
we will construct a sequence of elections,
$\elct_0,\elct_1,\dots,\elct_{m/2}$,
such that $\elct_0 = \elct'$
and $\elct_{m/2}$ will realize matrix $X$.
For each $j \in [m/2]$,
we will construct election $\elct_j$ from $\elct_{j-1}$,
by swapping the positions of candidates $2j-1$ and $2j$
in some of the votes
(observe that they are always at consecutive positions).
Let us describe
how we obtain
election $\elct_j$ from $\elct_{j-1}$
in more detail.
Let us denote $\elct_{j-1}=(C,V_{j-1})$.
Next, let us split set $V_{j-1}$ into $m/2$ sets
$V^{2}_{j-1},V^4_{j-1},\dots,V^m_{j-1}$
depending on the position of candidate $2j$ in a vote.
Specifically,
for each $i \in [m/2]$, let
\[
    V^{2i}_{j-1} = \{ v \in V_{j-1} : \pos_v(2j)=2i\}.
\]
Then, from each set $V^{2i}_{j-1}$
we construct set $U^{2i}_{j-1}$
by arbitrarily choosing $x_{2i-1,2j}$ votes
in which we swap the positions of candidates $2j$ and $2j-1$
(so, now $2j-1$ is ranked at position $2i$
and $2j$ at position $2i-1$).
Then, we set
\(
    V_j = \bigcup_{i = 1}^{[m/2]} U^{2i}_{j-1}
\)
and
$\elct_{j} = (C,V_j)$.
Since we always swap only the candidates within one pair,
$\elct_j$ is balanced group-separable election for each $j \in [m/2]$.
Moreover, for every $j \in [m/2]$,
in each election $\elct_{j'}$ for $j' \ge j$,
candidate $j$ is at position $i$
in exactly $x_{i,j}$ votes.
Hence, the position matrix of election
$\elct_{m/2}$ is indeed matrix $X$.
This concludes the proof.
\end{proof}

\section[Additional Material for Section 5]{Additional Material for \cref{sec:condorcet}}
In this section
we provide the proofs of \cref{thm:condorcet-nph,thm:condorcet:condition} and
additional details of experiments
concerning the Condorcet winners in elections
realizing a given position matrix.

\subsection[Proof of Theorem 6]{Proof of \cref{thm:condorcet-nph}}

\condorcetnph*
\begin{proof}
 \newcommand{\univ}{\ensuremath{\mathcal{U}}}
 \newcommand{\uelem}{\ensuremath{u}}
 \newcommand{\setcovsize}{\ensuremath{t}}
 \newcommand{\setsfam}{\ensuremath{\mathcal{S}}}
 \newcommand{\sset}{\ensuremath{S}}
 \newcommand{\domain}{\ensuremath{\mathcal{D}}}
 \newcommand{\posmtrx}{\ensuremath{X}}
 \newcommand{\advmatrix}{\ensuremath{A_1}}
 \newcommand{\constrmatrix}{\ensuremath{A_2}}
 \newcommand{\XTSC}{\textsc{Exact Cover by 3-Sets}}
 \newcommand{\originstance}{\ensuremath{\mathcal{I}}}
 \newcommand{\fininstance}{\ensuremath{\mathcal{I}'}}
 \newcommand{\ecand}{\ensuremath{e}}
 \newcommand{\fcand}{\ensuremath{f}}
 \newcommand{\pcand}{\ensuremath{p}}
 \newcommand{\dcand}{\ensuremath{d}}
 \newcommand{\ssetvotfam}[1]{\ensuremath{V({#1})}}
 \newcommand{\ssetel}{\ensuremath{s}}

 \newcommand{\diagval}{\ensuremath{h}}
 \newcommand{\erank}{\ensuremath{E}}

 \newcommand{\excov}{\ensuremath{K}}
 \newcommand{\reprcov}{\ensuremath{\textrm{repr}}}

 \newcommand{\voteforp}{\ensuremath{\voter_p}}
 \newcommand{\voteford}{\ensuremath{\voter_d}}

 \newcommand{\pwscmp}{\ensuremath{N}}
 \newcommand{\votefrompgroup}{\ensuremath{v^+}}
 \newcommand{\votefromdgroup}{\ensuremath{v^-}}

 \newcommand{\tinysucc}{\triangleright}

 \begingroup
 We reduce from \textsc{Exact Cover by 3-Sets} where given a
 $3\setcovsize$-element universe $\univ=\{\uelem_1, \uelem_2, \ldots,
 \uelem_{3\setcovsize}\}$ and a family of 3-element subsets
 $\setsfam = \{ S_1, S_2, \dots, S_k\}$, the
 question is whether there exists a $\setcovsize$-size set $\excov \subseteq
 \setsfam$ forming an exact cover of~$\univ$, that is, $\bigcup_{\sset \in \excov}
 \sset = \univ$. For convenience, we assume that elements of subsets~$\sset$ are
 ordered ascending. Furthermore, we call the smallest element of a
 subset~$\sset$ the \emph{representative of $\sset$}. Consequently, for some
 exact cover~\excov{}, we refer to the set of representatives of subsets
 belonging to \excov{} as to~$\reprcov(\excov)$. We note
 that~$\reprcov(\excov)$, consisting of exactly~$\setcovsize$ distinct elements,
 yields too little information to uniquely define~\excov{}.
 
 We transform an instance~\originstance{} of~\XTSC{} into
 instance~$\fininstance{} = (\cnds, \posmtrx, \domain, \pcand)$ of a problem in
 which we ask whether there exists a realization of~\posmtrx{} using only votes
 from~\domain{} such that~\pcand{} is the Condorcet winner. Our reduction
 guarantees that there is always at least one feasible realization of~\posmtrx{}
 using votes from~\domain{}.
 
 \subsubsection{Construction}
 Regarding the candidates of~\fininstance{}, for each element~$\uelem_i \in
 \univ$, we add three candidates~$\{\cand_i, \ecand_i, \fcand_i\}$ that we call,
 respectively, \emph{element}, \emph{executive}, and \emph{filler} candidates of
 element~$u_i$.  Furthermore, we add the \emph{preferred} candidate~\pcand{} and
 the~\emph{despised} candidate~$\dcand$. Overall, we have $9\setcovsize +
 2$~candidates.

 \setcounter{MaxMatrixCols}{30}
 For the sake of presentation, we define our matrix~\posmtrx{} (of
 size~$(9\setcovsize + 2) \times (9\setcovsize + 2)$) as the sum of two
 matrices~$\advmatrix$ and~$\constrmatrix$.
 For reasons of clarity, when presenting a matrix, we leave zero-entries
 blank.

 First, we present matrix~$\advmatrix$ describing $11\setcovsize$~votes,
 in which candidates $\pcand$ and $\dcand$ are on first and second position
 and each of the remaining candidates always occupy only one position:
 \advantageMatrixFullExpr{}
 
 Next, let us define matrix~$\constrmatrix$, which encodes the
 remaining $9\setcovsize$~votes.
 To this end, let us first present one
 $(3\setcovsize) \times (3\setcovsize  -1)$ block of this matrix
 denoted by $B$.
 \helperMatrixDFullExpr{}
 
 \noindent Building upon it, let us take~$\diagval \coloneqq 9\setcovsize - 1$
 and define matrix~$\constrmatrix$ as follows:
 \matrixOfConstrainsFullExpr{}

 \noindent Note that~$\posmtrx = \advmatrix + \constrmatrix$ describes exactly $20
 \setcovsize$~votes; indeed, it is easy to verify that entries in each row and
 in each column of~\posmtrx{} sum to~$20\setcovsize$.

 \let\oldsucc\succ
 \renewcommand{\succ}{\ensuremath{\raisebox{1pt}{\resizebox{1.5ex}{!}{$\,\oldsucc\,$}}}}
 
 Now, we characterize set $\domain$,
 from which the votes in election $\elct$ must come.
 For the reasons of clarity,
 we define~$\domain$ using two sets~$\domain_1$ and~$\domain_2$ such
 that~$\domain_1 \cup \domain_2 = \domain$. 
 We construct them in such a way,
 that matrix~$\advmatrix$ is realizable by an election restricted to $\domain_1$
 and matrix~$\constrmatrix$ by an election restricted to $\domain_2$.
 Taking the union of the votes in both elections,
 will give us an election realizing $\posmtrx$.

 \newcommand{\swapfunc}{\ensuremath{s}}
 Let us define the following helper partial votes:
 \begin{align*}
    FC \colon \quad & \fcand_1 \succ \cand_1 \succ \fcand_2 \succ \cand_2 \succ \dots
    \succ \fcand_{3\setcovsize} \succ \cand_{3\setcovsize}, \\
    E_{-i} \colon \quad & \ecand_1 \succ \ecand_2 \succ \dots 
    \succ \ecand_{i-1} \succ \ecand_{i+1} \succ \dots
    \ecand_{3\setcovsize}, \textrm{ for every } i \in [3\setcovsize].
 \end{align*}
 Moreover, for some vote~$v$ 
 and two distinct candidates $x_i$, $x_j$, by $s(v, x_i, x_j)$ we denote a vote
 emerging from copying~$v$ and then swapping the positions of~$x_i$ and~$x_j$
 with each other. Note that we sometimes chain function~$s$ to express a vote
 coming up from taking some original vote and making several swaps (we never
 swap a single voter many times). For example, $\swapfunc(\swapfunc(v, x_i,x_j),
 x_{i'}, x_{j'})$ means a vote built by copying~$v$ and then pairwise swapping $x_i$
 with~$x_j$ and $x_{i'}$ with~$x_{j'}$.

 Now, let us define $\domain_1 \coloneqq \{ \voteforp , \voteford \}$, where
 \begin{align*}
  \voteforp &\colon \quad \pcand \succ \dcand \succ e_1 \succ E_{-1} \succ FC \quad \mbox{and} \\
  \voteford &\colon \quad \dcand \succ \pcand \succ e_1 \succ E_{-1} \succ FC.
 \end{align*}

 \noindent
 To define~$\domain_2$, we first introduce the following six votes
 for each element~$\uelem_i \in \univ$.
 \begin{align*}
    \votefrompgroup_1(\uelem_i) &\colon \quad
        FC \succ p \succ d \succ e_i \succ E_{-i},\\
    \votefrompgroup_2(\uelem_i) &\colon \quad
        FC \succ d \succ e_i \succ p \succ E_{-i},\\
    \votefrompgroup_3(\uelem_i) &\colon \quad
        FC\succ e_i \succ p \succ d  \succ E_{-i},\\
    \votefromdgroup_1(\uelem_i) &\colon \quad
        FC \succ d \succ p \succ e_i \succ E_{-i},\\
    \votefromdgroup_2(\uelem_i) &\colon \quad
        FC \succ p \succ e_i \succ d \succ E_{-i}, \quad \mbox{and}\\
    \votefromdgroup_3(\uelem_i) &\colon \quad
        \swapfunc(FC, f_i, c_i)\succ e_i \succ d \succ p  \succ E_{-i}.
  \end{align*}
  Next, for each set $S_i = \{u_x, u_y, u_z\} \in \setsfam$
  (where $u_x$ is the representative of $S_i$),
  we introduce vote
  \[
    \votefrompgroup(\sset_i) \colon 
    \swapfunc(\swapfunc(\swapfunc(FC, c_x, f_x), c_y, f_y), c_z,
    f_z)
    \succ 
    \ecand_x \succ \pcand \succ \dcand  \succ E_{-x}.
  \]
 Finally, we set
 \begin{multline*}
    \domain_2 \coloneqq \{ \votefrompgroup_j(\uelem_i), \votefromdgroup_j(\uelem_i) : \uelem_i \in \univ, j \in [3]\} \ 
    \cup \\ \{ \votefrompgroup(\sset_i) : \sset_i \in \setsfam\}.
 \end{multline*}
 \noindent This concludes the construction of instance~$\fininstance$.

 \subsubsection{Correctness}
 First, we show that $\posmtrx$ is indeed
 always realizable by some election 
 restricted to domain $\domain$~(I).
 Then, we prove that if instance $\originstance$ has
 an exact cover, then $\pcand$ can be a Condorcet winner in some election
 $\elct$ realizing~$\posmtrx$ (II).
 Finally, we demonstrate that if $\posmtrx$ is realizable by
 an election restricted to
 domain $\domain$ in which $\pcand$ is a Condorcet winner,
 then instance $\originstance$ has an exact cover (III).
 
 We begin, however, with a general remark about elections
 realizing $\posmtrx$ restricted to domain $\domain$,
 which we will use throughout the proof.
 Observe that the only vote in $\domain$
 in which the preferred candidate $\pcand$ is in the first position is $\voteforp$
 and, analogously, the only vote in $\domain$ in which 
 the despised candidate $\dcand$ is in the first position is $\voteford$.
 Thus, election realizing $X$ must include
 exactly $4\setcovsize + 1$~votes $\voteforp$ and
 $7\setcovsize - 1$~votes~$\voteford$.
 Hence, we arrive at the following conclusion.
 \begin{claim}\label{claim:real_submatrix}
  In every election $\elct = (C,V)$ realizing matrix $\posmtrx$,
  in which all votes belong to domain $\domain$,
  the voters can be split into two disjoint sets $V_1, V_2 \subseteq V$,
  such that $V_1 \cup V_2 = V$ and:
  \begin{enumerate}
      \item $V_1$ consists of $4\setcovsize +1 1$ votes $\voteforp$
      and $7\setcovsize - 1$ votes \voteford, and
      \item $V_2$ consists of $9\setcovsize$ votes solely from $\domain_2$.
  \end{enumerate}
 \end{claim}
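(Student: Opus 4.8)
The plan is to extract the number of voters who place $\pcand$ (respectively $\dcand$) in the first position directly from the matrix $\posmtrx$, and then to observe that in $\domain$ such a voter can be cast in essentially only one way. First I would note that in $\constrmatrix$ the first row is blank in the two columns indexed by $\pcand$ and $\dcand$ (these two columns of $\constrmatrix$ carry nonzero entries only in rows $6\setcovsize+1,6\setcovsize+2,6\setcovsize+3$), so that $\posmtrx_{1,\pcand} = (\advmatrix)_{1,\pcand} = 4\setcovsize+1$ and $\posmtrx_{1,\dcand} = (\advmatrix)_{1,\dcand} = 7\setcovsize-1$. Hence every election realizing $\posmtrx$ has exactly $4\setcovsize+1$ voters ranking $\pcand$ first and exactly $7\setcovsize-1$ voters ranking $\dcand$ first.

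Next I would determine which votes of $\domain$ can supply such voters. Every vote in $\domain_2$ starts with the $6\setcovsize$ filler and element candidates forming the block $FC$ (possibly with some of the adjacent pairs $\fcand_i,\cand_i$ transposed), and only afterwards lists $\pcand$, $\dcand$, an executive $\ecand_i$ and the tail $E_{-i}$; in particular no vote of $\domain_2$ has $\pcand$ or $\dcand$ in the top position. Consequently $\voteforp$ is the unique vote of $\domain$ with $\pcand$ first, and $\voteford$ is the unique one with $\dcand$ first. Combined with the counts above, the collection $V$ must contain exactly $4\setcovsize+1$ copies of $\voteforp$ and exactly $7\setcovsize-1$ copies of $\voteford$. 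Taking $V_1$ to be precisely these $11\setcovsize$ votes gives item~1 of the claim by construction, and we set $V_2 := V \setminus V_1$.

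It remains to check item~2. Since each row of $\posmtrx$ sums to $20\setcovsize$, the election has $20\setcovsize$ voters, so $|V_2| = 20\setcovsize - 11\setcovsize = 9\setcovsize$. Moreover $\domain_1 = \{\voteforp,\voteford\}$ is disjoint from $\domain_2$, and by construction all occurrences of $\voteforp$ and of $\voteford$ in $V$ were moved into $V_1$; hence every vote of $V_2$ belongs to $\domain_2$, as required. There is no genuine obstacle here: the only point needing care is the (routine) verification that no vote in $\domain_2$ places $\pcand$ or $\dcand$ first, which is immediate from the fact that each such vote begins with a (possibly transposed) copy of the all-filler/element block $FC$.
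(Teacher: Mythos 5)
Your proof is correct and follows essentially the same route as the paper: the paper likewise observes that $\voteforp$ and $\voteford$ are the only votes in $\domain$ placing $\pcand$ (resp.\ $\dcand$) first, reads off the counts $4\setcovsize+1$ and $7\setcovsize-1$ from the first row of $\posmtrx$, and concludes that the remaining $9\setcovsize$ votes must come from $\domain_2$. Your write-up is in fact a bit more explicit than the paper's one-paragraph remark, spelling out why the $\pcand$ and $\dcand$ columns of $\constrmatrix$ contribute nothing to the first row.
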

 Building upon \cref{claim:real_submatrix},
 we consider elections $\elct_1 = (V_1,C)$ and $\elct_2 = (V_2,C)$.
 Since in both votes $\voteforp$ and~$\voteford$ positions $3,4,\dots,9\setcovsize +2$
 are occupied by candidates
 $\ecand_1,\dots,\ecand_{3\setcovsize},
 \fcand_1,\cand_1,\dots,\fcand_{3\setcovsize},\cand_{3\setcovsize}$,
 respectively,
 $\elct_1$ realizes matrix $\advmatrix$.
 Hence,
 in the remainder of the proof,
 we mainly focus on the construction of election~$\elct_2$
 restricted to $\domain_2$ that realizes matrix $\constrmatrix$.
 
 (I)
 We construct such election $\elct_2 = (V_2,C)$
 realizing~$\constrmatrix$ as follows. We start with empty~$V_2$ and we add to $V_2$
 one copy of each of the votes
 $\votefromdgroup_1(\uelem_i),\votefromdgroup_2(\uelem_i),\votefromdgroup_3(\uelem_i)$,
 for every $\uelem_i \in \univ$.
 
 Now, let us show that $\elct_2$ indeed realizes matrix $\constrmatrix$.
 For every $\uelem_i \in \univ$,
 the preferred candidate $\pcand$ is at positions
 $6\setcovsize + 2$ in $\votefromdgroup_1(\uelem_i)$,
 $6\setcovsize + 1$ in $\votefromdgroup_2(\uelem_i)$, and
 $6\setcovsize + 3$ in $\votefromdgroup_3(\uelem_i)$.
 Hence, in each of these positions it appears $3\setcovsize$ times in total.
 The same is true for the despised candidate $\dcand$
 (it appears at position
 $6\setcovsize + 1$ in $\votefromdgroup_1(\uelem_i)$,
 $6\setcovsize + 3$ in $\votefromdgroup_2(\uelem_i)$, and
 $6\setcovsize + 2$ in $\votefromdgroup_3(\uelem_i)$,
 for every $\uelem_i \in \univ$).
 Thus, candidates $\pcand$ and $\dcand$ appear at positions
 as indicated by matrix $\constrmatrix$.
 
 For every $i \in [3\setcovsize]$,
 executive candidate $\ecand_i$ appears at position
 $6\setcovsize + 3$ in $\votefromdgroup_1(\uelem_i)$,
 $6\setcovsize + 2$ in $\votefromdgroup_2(\uelem_i)$, and
 $6\setcovsize + 1$ in $\votefromdgroup_3(\uelem_i)$.
 In every other vote, i.e.,
 vote $\votefromdgroup_\ell(\uelem_j)$ for some $\uelem_j \in (\univ \setminus \{
 u_i\})$ and $\ell \in [3]$,
 it appears in either position $6t + 2 + i$,
 if $j < i$,
 or position $6t + 3 + i$,
 if $j > i$.
 Hence, in total, it appears once at positions $6\setcovsize + 1, 6\setcovsize +
 2$, and $6\setcovsize + 3$;
 $3i - 3$ times at position $6t + 2 + i$;
 and $9\setcovsize - 3i$ times at position~$6t + 3 + i$.
 This conforms to matrix $\constrmatrix$.
 
 Finally,
 for every $i \in [3\setcovsize]$, $\uelem_j \in \univ$ and $\ell \in [3]$,
 element and filler candidates, $\fcand_i$ and $\cand_i$, appear at positions $2i-1$ and $2i$, respectively,
 in every vote $\votefromdgroup_\ell(\uelem_j)$,
 unless $j = 1$ and $\ell = 3$,
 where $\fcand_i$ appears at positions $2i$ and $\cand_i$ in $2i-1$.
 Thus, positions of these candidates also agree with matrix $\constrmatrix$.
 This concludes the proof of part (I).

 (II)
 Now, let us show that if~\originstance{}~admits an exact cover~$\excov{}$,
 then there exists election $\elct$ restricted to domain $\domain$
 realizing matrix $\posmtrx$
 such that $p$ is the Condorcet winner.
 Without loss of generality,
 we assume~$\excov{} = \{\sset_1, \sset_2, \ldots, \sset_\setcovsize\}$.
 Building upon~\cref{claim:real_submatrix},
 we focus on constructing $\elct_2 = (C,V_2)$ restricted to domain $\domain_2$
 realizing matrix $\constrmatrix$.
 To this end, we include in $V_2$ the following votes:
 \begin{enumerate}
  \item For each~$\sset \in \excov$, we add voter~$\votefrompgroup(\sset)$, which gives,
  in total, $\setcovsize$ votes;\label{en:cover_votes}
  \item For each~$\uelem_i \in (\univ \setminus \reprcov(\excov))$, we add
  voter~$\votefrompgroup_3(\uelem_i)$, which gives,
  in total, $2 \setcovsize$ votes; \label{en:additional_fill}
  \item For each~$\uelem_i \in \univ$, we add
  voters~$\votefrompgroup_1(\uelem_i)$ and~$\votefrompgroup_2(\uelem_i)$, which
  gives, in total, $6 \setcovsize$~votes. \label{en:all_fill}
 \end{enumerate}
 Let us show that such defined election $\elct_2$
 indeed realizes matrix $\constrmatrix$.
 
 For every $u_i \in \univ \setminus \reprcov(\excov)$,
 the preferred candidate~$\pcand$
 appears at positions 
 $6\setcovsize +1, 6\setcovsize +2, 6\setcovsize +3$
 in votes~$\votefrompgroup_1(u_i)$, $\votefrompgroup_3(u_i)$, and~$\votefrompgroup_2(u_i)$,
 respectively.
 Since each element~$u_i \in \reprcov(\excov)$
 belongs to exactly one $S_j \in K$,
 we also have that $\pcand$
 appears at positions 
 $6\setcovsize +1$, $6\setcovsize +2$, $6\setcovsize +3$ respectively
 in votes
 $\votefrompgroup_1(u_i)$, $\votefrompgroup(S_j)$, and $\votefrompgroup_2(u_i)$.
 Hence, in total, $\pcand$ appears $3\setcovsize$~times
 in each position $6\setcovsize +1$, $6\setcovsize +2$, and $6\setcovsize +3$.
 By analogous reasoning, this is also true for the despised candidate $\dcand$,
 which agrees with position matrix~\constrmatrix.
 
 For each executive candidate $\ecand_i$,
 observe that $\ecand_i$~appears at positions 
 $6\setcovsize +1$, $6\setcovsize +2$, and $6\setcovsize +3$
  in votes
 $\votefrompgroup_1(u_i)$, $\votefrompgroup_3(u_i)$, $\votefrompgroup_2(u_i)$,
 respectively
 (or in votes
 $\votefrompgroup_1(u_i)$, $\votefrompgroup(S_j)$, $\votefrompgroup_2(u_i)$
 if $u_i$ is the representative of some set $S_j \in \excov$).
 In the remaining votes
 it occupies either position~$6t + 2 + i$ or position~$6t + 3 + i$,
 depending on the index of element from $\univ$, as in part (I).
 Thus, again, its positions conform to matrix $\constrmatrix$.
 
 Finally, consider filler candidate~$\fcand_i$ and element candidate~$\cand_i$.
 Observe that $\fcand_i$ is at position $2i-1$ and $\cand_i$ at position~$2i$
 in every vote except $\votefrompgroup(S_j)$ such that $S_j \in \excov$ and $\uelem_i \in S_j$.
 Since there is exactly one such $S_j$ for every $i \in [3\setcovsize]$,
 we obtain that $\fcand_i$ appears $9\setcovsize - 1$ times at position~$2i-1$
 and once at position~$2i$,
 whereas $\cand_i$ appears $9\setcovsize - 1$~times at position~$2i$
 and once at position $2i-1$.
 This agrees with matrix $\constrmatrix$,
 so $\constrmatrix$ is realized by election $\elct_2$.
 
  It remains to show that the preferred candidate $\pcand$ is indeed the Condorcet winner
  in~$\elct$.
  To this end, observe that in votes from $V_1$
  candidate $\pcand$ wins $11\setcovsize$ times
  with each candidate in~$C \setminus \{\pcand,\dcand\}$. However, there is
  only $9\setcovsize$~votes in~$V_2$, so~$\pcand$ cannot lose against any
  candidate from~$C \setminus \{\pcand,\dcand\}$. As per candidate~$\dcand$,
  $\pcand$ wins only $4\setcovsize + 1$ times with~$\dcand$ in votes from~$V_1$.
  Hence, since~$|V_1 \cup V_2| = 20\setcovsize$, we have to show that among votes in~$V_2$,
  candidate~$\pcand$ is preferred over $\dcand$
  at least $6\setcovsize$~times (this means that $\pcand$ wins with $\dcand$
  exactly $10\setcovsize$~times in total).
  Observe that this is exactly the case,
  as $\pcand$ is preferred over $\dcand$ in:
  \begin{itemize}
      \item vote $\votefrompgroup_1(\uelem_i)$, for every $\uelem_i \in \univ$
      ($3\setcovsize$ votes),
      \item vote $\votefrompgroup_3(\uelem_i)$, for every $\uelem_i \in \univ \setminus \reprcov(\excov)$
      ($2\setcovsize$ votes), and
      \item vote $\votefrompgroup(S_j)$, for every $S_j \in \excov$
      ($\setcovsize$ votes).
  \end{itemize}
  This concludes the proof of part (II).
  
 (III) Now, let us assume that
 there exists an election~$\elct = (C,V)$ restricted to domain $\domain$
 realizing~$\posmtrx$ such that the preferred candidate~$\pcand$
 is the Condorcet winner.
 We show that this implies that the original instance~$\originstance$
 admits an exact cover.
 
 From part (II) above we know that
 candidate $\pcand$ wins with every candidate in $C \setminus \{\pcand,\dcand\}$
 at least $11\setcovsize$ times.
 However, only in $4\setcovsize + 1$ votes from $V_1$ it is preferred over $\dcand$.
 The fact that $\pcand$ is the Condorcet winner implies that
 it is preferred over $\dcand$ in at least $10\setcovsize + 1$ votes in $V$.
 Hence, $\pcand$ wins with $\dcand$ in at least $6\setcovsize$ votes in $V_2$.
 
 From the fact that $\elct_2$ realizes matrix $\constrmatrix$,
 we see that in votes from $V_2$
 both $\pcand$ and $\dcand$ occupy only positions
 $6\setcovsize + 1$, $6\setcovsize + 2$, and $6\setcovsize + 3$,
 each $3\setcovsize$ times.
 Thus, in all $3\setcovsize$ votes
 in which $\dcand$ is at position $6\setcovsize + 1$,
 candidate $\dcand$ has to be preferred over $\pcand$.
 However, since in all votes from $V_2$
 candidate $\pcand$ is preferred over $\dcand$ at least $6\setcovsize$~times,
 this implies that $\pcand$ wins with $\dcand$ in all remaining votes
 (in all of them $\dcand$ is not at position $6\setcovsize + 1$).
 This means that in every vote in which
 candidate $\pcand$ is at position $6\setcovsize + 1$ or $6\setcovsize + 2$
 candidate $\dcand$ occupies the next position:
 $6\setcovsize + 2$ or $6\setcovsize + 3$, respectively.
 Therefore, $V_2$ can include only votes denoted with a plus sign:
 $\votefrompgroup_1(u_i)$, $\votefrompgroup_2(u_i)$, $\votefrompgroup_3(u_i)$,
 for each $u_i \in \univ$; and
 $\votefrompgroup(S_i)$, for each $S_i \in \setsfam$.
 
 Now, consider an element candidate $\cand_i$,
 for some fixed $i \in [3\setcovsize]$.
 Since $\elct_2$ realizes matrix~$\constrmatrix$,
 $\cand_i$ appears at position  $2i-1$ exactly once.
 However, among votes denoted with a plus sign,
 $\cand_i$ appears at position  $2i-1$
 only in $\votefrompgroup(S_j)$,
 for these $S_j \in \setsfam$ for which $u_i \in S_j$.
 Let us denote by $\excov$ the collection of such sets $S \in \setsfam$
 for which $\votefrompgroup(S)$ is a vote in election $\elct_2$, i.e.,
 \[
    \excov = \{ S \in \setsfam : \votefrompgroup(S) \in V_2 \}.
 \]
 Then, for every $\uelem_i \in \univ$,
 since in $\elct_2$ candidate $\cand_i$
 appears at position $2i-1$
 exactly once,
 we have that $\uelem_i$
 belongs to exactly one set $S \in \excov$.
 Therefore, $\excov$ is an exact cover.
 This, concludes the proof.
 \endgroup
\end{proof}

\subsection[Proof of Theorem 7]{Proof of \cref{thm:condorcet:condition}}

\condorcetcondition*
\begin{proof}
First, let us prove the condition
and then we will show that it can be checked
in polynomial time.

Let $\elct = (C,V)$ be arbitrary elections with a Condorcet winner,
$c \in C$, and let $X$ be its position matrix
(for notational convenience, by $X_{i,j}$, we will denote number of voters that rank candidate $j \in C$ at position $i$).
Let us also take an arbitrary set of candidates $S \subseteq C \setminus \{c\}$ and number $i \in [m]$.

Now, for each $k \in [i-1]$, by $V_k \subseteq V$ let us denote the set of votes in which candidate $c$ is at position $k$.
Then, by $N_k$ let us denote the number of times candidates from $S$ appear at positions $k+1,k+2,\dots,i$ in votes in $V_k$.
Formally,
\[
    N_k = \sum_{v \in V_k} | \{ j \in S :  k < \pos_v(j) \le i \}|.
\]
Since in each vote, there are total of $i-k$ such positions,
we have
$| \{ j \in S :  k < \pos_v(j) \le i \} | \le i - k$.
Moreover, there can be at most $|S|$ candidates from $S$
at these positions, we get
$| \{ j \in S :  k < \pos_v(j) \le i \} | \le |S|$.
Hence,
\[
    N_k \le
    \sum_{v \in V_k} \min(|S|,i-k) =
    X_{k,c} \cdot \min(|S|,i-k).
\]
Summing for all $k \in [i-1]$ we obtain
\begin{equation}
\label{eq:prop:cw-nnecessary-condition-one:1}
    \sum_{k=1}^{i-1} N_k \le
    \sum_{k=1}^{i-1} X_{k,c} \cdot \min(|S|,i-k).
\end{equation}

Now, by $N_{k,j}$ let us denote the number of votes from $V_k$ in which candidate $j$ appears at positions $k+1,k+2,\dots,i$.
Formally,
\[
    N_{k,j} = | \{ v \in V_k :  k < \pos_v(j) \le i \} |.
\]
Since $c$ is a strong Condorcet winner,
for every candidate $j \in S$
there are at least $\lfloor (n + 1)/2 \rfloor$ votes
in which $c$ is ranked before $j$.
In particular,
if $j$ is ranked at position $i$ or smaller in $X_{j, \le i}$ votes,
then there exist at least
\(
    \max ( 0 , \lfloor (n+1)/2 \rfloor - (n - X_{j, \le i}))
\)
votes in which $j$ is at position $i$ or smaller and $c$ is ranked before $j$.
We count each such vote in one of sets $N_{1,j},\dots,N_{i-1,j}$,
hence we get that
\begin{align*}
    \sum_{k = 1}^i N_{k,j} &\ge
    \max \left( 0 , X_{j, \le i} - \left\lfloor \frac{n-1}{2} \right\rfloor \right) \\ &\ge
    X_{j, \le i} - \left\lfloor \frac{n-1}{2} \right\rfloor\\ &=
    \sum_{k=1}^i X_{k,j} - \left\lfloor \frac{n-1}{2} \right\rfloor.
\end{align*}
Observe that $\sum_{j \in S} N_{k,j}= N_k$.
Therefore, we get that
\begin{align*}
    \sum_{k = 1}^i N_k &\ge
    \sum_{j \in S} \left( \sum_{k=1}^i  X_{k,j}  - \left\lfloor \frac{n-1}{2} \right\rfloor \right) \\ &=
    \sum_{j \in S} \sum_{k=1}^i  X_{k,j} - |S| \cdot \left\lfloor \frac{n-1}{2} \right\rfloor.
\end{align*}
Combining this with inequality~\eqref{eq:prop:cw-nnecessary-condition-one:1} we obtain the thesis.

Now, let us focus on proving that the condition can be checked in polynomial time.
To this end, let us fix $i \in [m]$
and sort the candidates in $C \setminus \{c\}$
in the order of how often they appear in first $i$ positions.
Formally, let $o$ be a strict linear order on $C \setminus \{c\}$
such that $j \succ_o j'$
if and only if either
\[
    \sum_{k=1}^i X_{k,j} > \sum_{k=1}^i X_{k,j'}
\]
or
\[
    \sum_{k=1}^i X_{k,j} = \sum_{k=1}^i X_{k,j'}
    \quad \mbox{and} \quad
    j > j'.
\]
Now, for each $j \in C \setminus \{c\}$,
let us denote by $S_{i,j}$ a set of candidates other than $c$
that is $j$ and candidates before $j$ in the order $o$.
Formally,
\[
    S_{i,j} = \{ k \in C \setminus \{c\} : k \succeq_o j\}.
\]
Observe that for every $i \in [m]$ and $S \subseteq C \setminus \{c\}$
it holds that the left hand side of condition for set $S$
is smaller or equal to the left hand side of condition for set of
$|S|$ candidates other than $c$ most frequently appearing in first $i$ positions,
i.e.,
\[
    \sum_{j \in S} \sum_{k=1}^i  X_{k,j} \le
    \sum_{j \in S_{i,\pos_o(|S|)}} \sum_{k=1}^i  X_{k,j}.
\]
Hence, it suffices to check the condition only for sets $S_{i,j}$
for all $i \in [m]$ and all $j \in C \setminus \{c\}$,
which can be done in polynomial time.
\end{proof}

\subsection{Experiments}

\subsubsection[No. Possible Condorcet Winners]{Computing the Number of Possible Condorcet Winners}
As described in the main body, given a position matrix, we computed the number of different candidates which are the Condorcet winner in some realization of the matrix. 
To solve this problem, we derive an Integer Linear Program (ILP) for the following closely related problem: 
Given an $m\times m$ position matrix $X$ over a candidate set $C$ (where rows and columns sum up to $n$) and a candidate $c^*\in C$, is there an election realizing $X$ in which $c^*$ is a Condorcet winner.

To model this problem as an ILP, we introduce for each $c\in C$, $i\in [m]$, and $k\in [n]$ a binary variable $x_{c,k,i}$. 
Setting $x_{c,k,i}$ to true corresponds to putting candidate $c$ on position $i$ in vote $k$. 
Let $y$ be an $n$-dimensional vector which for each $i\in [m]$ contains $X_{i,c^*}$-times the number $i$. 
Because the ordering of votes is clearly irrelevant, we can start by fixing the position in which $c^*$ is ranked in each vote:
\begin{align*}
    x_{c^*,k,y_k}=1.
\end{align*}

To enforce that every position in each vote is taken by exactly one candidate and that each candidate appears in each vote exactly once we add the following constraints: 
\begin{align*}
    &\sum_{c\in C} x_{c,k,i} =1, \qquad &\forall i\in [m], k\in [n]\\
    &\sum_{i\in [m]} x_{c,k,i} =1 & k\in [n], c\in C
\end{align*}
Moreover, we add constraints enforcing that each candidate appears in every position as often as specified in its position vector: 
\begin{align*}
    \sum_{k\in [n]} x_{c,k,i}=X_{i,c}, \qquad i\in [m], c\in C
\end{align*}
Lastly, we ensure that candidate $c^*$ wins the pairwise comparison against each other candidate. 
We can easily enforce this because we know in each vote the position on which $c^*$ appears: 
\begin{align*}
    \sum_{k\in [n]: i\in [y_k-1]} x_{c,k,i}\leq \left\lceil\frac{n+1}{2}\right\rceil, \qquad c\in C
\end{align*}

\end{document}